\theoremstyle{definition} 
\newtheorem{dfn}{Definition}
\theoremstyle{plain}
\newtheorem{tma}{Theorem}
\newtheorem{lma}{Lemma}
\newtheorem{pro}{Proposition} 
\newtheorem{cor}{Corollary}
\newtheorem{remark}{Remark}
\theoremstyle{remark}
\begin{document}
 
\title{Observable currents and a covariant Poisson algebra of physical observables}

\author[1]{	Homero G. D\'\i az-Mar\'\i n%
\footnote
{email: {\tt hdiaz@umich.mx}}
}
\author[2]{	Jos\'e A. Zapata%
\footnote
{email: {\tt zapata@matmor.unam.mx}}
}
\affil[1]{
		Facultad de Ciencias F\'\i sico-Matem\'aticas\\
		Universidad Michoacana de San Nicol\'as de Hidalgo\\
		Ciudad Universitaria C.P. 58060\\
		Morelia M\'exico\\
	}
\affil[2]{
Centro de Ciencias Matem\'aticas \\
Universidad Nacional Aut\'onoma de M\'exico \\
Morelia Mich. M\'exico, and \\
Department of Applied Mathematics \\
University of Waterloo, Waterloo, \indent Ontario, Canada	}

\maketitle

\begin{abstract}
Observable currents are locally defined  gauge invariant 
conserved currents; physical observables may be calculated integrating 
them on appropriate hypersurfaces. Due to the conservation law the hypersurfaces become irrelevant up to homology, 
and the main objects of interest become the observable currents them selves. 
Gauge inequivalent solutions can be distinguished by means of observable currents. 
With the aim of modeling spacetime local physics, we work on spacetime domains 
$U\subset M$ which may have boundaries and corners. 
Hamiltonian observable currents are those satisfying 
${\sf d_v}F=-\iota_V\Omega_L+{\sf d_h}\sigma^F$ and a certain boundary condition. 
The family of Hamiltonian observable currents is endowed with a bracket that gives it a structure which generalizes a Lie algebra in which the Jacobi relation is modified by the presence of a boundary term. 
If the domain of interest has no boundaries the resulting algebra of observables is a Lie algebra. 
In the resulting framework algebras of observable currents are associated to bounded domains, and the local algebras obey interesting gluing properties. 
These results are due 
to considering currents that defined only locally in field space 
and 
to a revision of the concept of gauge 
invariance in bounded spacetime domains. 
A perturbation of the field on a bounded spacetime domain is regarded as gauge if: 
(i) the ``first order holographic imprint'' that it leaves in any hypersurface locally splitting a spacetime domain into two 
subdomains is negligible according to the linearized gluing field equation, and 
(ii) the perturbation vanishes at the boundary of the domain. 
A current is gauge invariant if the variation in them induced by any gauge perturbation vanishes up to boundary terms. 
\end{abstract}







\section{Motivation}


%
%
The multisymplectic approach to classical field theory 
(see for example \cite{zuckerman1987action, gotay1998momentum, carinena1991multisymplectic}) 
encodes the symplectic structure present in the space of gauge equivalence classes of solutions of a classical field theory by means of a local object in the jet bundle subject to a conservation law: The pre-multisymplectic form, which may also be called the pre-symplectic current. 
One can also describe the structure in a way that brings to the forefront its relation to Topological Quantum Field Theories: 
Each hypersurface, more precisely each codimension one cycle, is assigned a space of boundary data dressed with a pre-symplectic structure. 
Each spacetime region with boundary is assigned a 
partial differential equation, the field equation, which is seen as a compatibility condition among 
boundary data on the connected components of its boundary. 
In this image, 
the conservation law is seen as the statement that for any spacetime region with boundary the space of compatible boundary data, 
according to the equation associated to the bulk, 
is a Lagrangian subspace of the space of boundary data assigned to its boundary.

In the Lagrangian framework 
for first order field theories 
that we will use in this article the pre-multisymplectic form will be denoted by $\Omega_L$ and when integrated on a hypersurface $\Sigma$ it yields a closed two form $\omega_{L \Sigma}$ on the space of first order data on $\Sigma$. 
The conservation law is called the 
{\em multisymplectic formula}, and it says that, when the history under consideration $\phi$ is a solution, 
given any two physical perturbations of the field $v,w$ (which are parametrized by evolutionary 
vector fields in the space of first order data satisfying the linearized field equation 
$V , W\in {\mathfrak F}$) 
we have 
\begin{equation}
\label{MSf}
\omega_{L \Sigma}(v , w) = 
\int_\Sigma j \phi^\ast \iota_{jW} \iota_{jV} \Omega_L =
\omega_{L \Sigma'}(v , w) 
\end{equation}
for any 
$\Sigma' = \Sigma + \partial U'$ for some region inside of the domain of interest $U'\subset U$.%
\footnote{ 
The details in our notation will be given in the next section. 
Now it is enough to say that 
$j\phi^\ast$ pulls back a differential form from 
a bundle to its base and 
that this differential form is constructed as the insertion of 
vector fields $jV, jW$ modeling the physical perturbations $v, w$ into $\Omega_L$. 
}
The multisymplectic approach to field theory recognizes the spacetime local object 
\[
\Omega_L
\]
as the carrier of geometric structure and brings it to the forefront. 

In a similar way, it is natural to be interested in functions, $f_\Sigma$, of boundary data on hypersurfaces that arise from a spacetime local object $F$ that is subject to a conservation law stating that when the history under consideration $\phi$ is a solution then 
\begin{equation}
\label{OCf}
f_\Sigma [\phi] = \int_\Sigma j \phi^\ast F = f_{\Sigma'} [\phi] . 
\end{equation}

The main objective of this article is the study of 
conserved currents of this type, that furthermore are gauge invariant. In Section \ref{OCsSection} we introduce them 
and call them {\em observable currents}. 
%
The explicit knowledge of a rich enough family of physical observables in a nonlinear field theory is as hard a problem as the explicit knowledge of all the solutions of that theory. Our goal is not to explicitly construct observables, 
but to study the family of 
covariant objects that precede observables of a particular type. 
A part from the family of Noether currents there is a large family of observable currents corresponding to observable currents that generalize the notion of the symplectic product function of classical mechanics $\omega(v, w)$. 
We prove that observable currents are capable of separating solutions modulo gauge. 
This result is in sharp contrast with previous reports stating that 
in nonlinear field theories, besides Noether currents corresponding to Lagrangian symmetries, there are no interesting families of conserved currents; see for example \cite{Forger+Romero, Helein, kanatchikov1998canonical, Kijowski, Goldschmidt(1973)}. 
For a compact review of the notion of observables within multisymplectic approaches to field theory see 
\cite{vey2012multisymplectic}. 
There are several differences between our treatment and those just cited, but two of them are crucial: 
The first crucial difference is that, due to the need of properly modeling vector fields in the space of solutions, the generators are allowed to depend on partial derivatives of the field of arbitrarily high order; accordingly, also the currents that we consider may depend on the field and its partial derivatives of any order. 
In the literature we found two independent studies using the ingredients briefly described above \cite{reyes2004covariant, Vitagliano(2009)}; 
the main purpose of those studies is to 
reformulate the concepts present in the covariant phase space approach to classical field theory in 
the language of the variational bicomplex or secondary calculus. 
The second crucial difference is that we consider locally defined observables; that is, observables whose domain is not all the space of solutions, but only a certain (presumably open) domain of definition.

With the intention of modeling spacetime local physics, 
we work on spacetime domains $U\subset M$ which are allowed to have boundaries and corners. This is a key motivating element of our work and also the source of most of the new problems that we faced at the beginning stage of our work and which shaped or framework.

Working in this spacetime local context forced us to review the notion of gauge equivalence in first order Lagrangian field theory over bounded spacetime domains. 
Subsection \ref{GaugeSection} is dedicated to a detailed presentation of a definition of gauge vector fields. 
The definition is motivated from a novel point of view. 

In Hamiltonian mechanics the symplectic structure dictates an association between Hamiltonian vector fields and 
functions by the formula  ${ d}f = -\iota_v \omega$. In the multisymplectic approach to classical field theory it is natural to look for a version of this relation in the jet bundle that when integrated in a hypersurface $\Sigma$ induces the mentioned relation between a Hamiltonian vector field and a physical observable. We propose 
\[
{\sf d_v} F = - \iota_{jV} \Omega_L + {\sf d_h}\sigma^F ,
\]
where the boundary term $\sigma^F$ 
does not have any effect after integration on a hypersurface with 
$\partial \Sigma \subset \partial U$. 
Our proposal is derived from the study of the geometrical structure participating in this version of classical field theory. The equation above 
is introduced in Section \ref{HVFs} after 
the appropriate notion of generators of {\em multisymplectomorphisms} is identified. 
An observable current participating in the equation given above together with an associated Hamiltonian vector field will be referred to as a {\em Hamiltonian observable current}. 

The presence of the boundary term 
makes the formula less rigid than ${\sf d_v} F = - \iota_{jV} \Omega_L$ which could be guessed as a natural generalization of the formula that appears in mechanics.  
In the absence of boundaries Hamiltonian observable currents are enough to distinguish gauge inequivalent solutions, and it is natural to conjecture that 
all observable currents are Hamiltonian. 

We define a bracket for Hamiltonian observable currents in Section \ref{BracketSection}. 
When integrating over a hypersurface with 
$\partial \Sigma \subset \partial U$, observable currents lead to observables, and 
if $\Sigma$ has no boundary 
the bracket makes the space of such observables into a Lie algebra. 
The bracket among the currents 
turns out not to be a Lie bracket because the Jacobi relation is modified by a boundary term. The resulting structure is a Lie $n$-algebra 
\cite{rogers2012algebras,baez2010categorified}. 

In the resulting framework algebras of observable currents are associated to local domains; 
in Section \ref{NestedAndGluedDomains} we study the properties of local algebras corresponding to 
nested and glued domains. 

The framework used in this article uses tools and notations from the variational bicomplex. 
Since we restrict to first order Lagrangian densities, 
all the core ingredients of the framework live in the first and second jet bundles; 
however, considering currents depending on partial derivatives of the field of arbitrarily high order 
is essential for our main results. 
For the convenience of the reader we include an appendix with the minimal set of definitions needed to 
read the article. A very good brief introduction can be found in \cite{anderson1992introduction}.

The example of the Maxwell field is presented in a minimalistic style in Section \ref{Ex}. 
We provide all the necessary elements for the reader to go through the calculations by herself or himself with the 
intention of providing a familiar example that the interested reader can use to 
work out each aspect of the formalism without significant effort.


\section{General framework}
\label{Framework}


%
%
%


%
%

We work in a local Lagrangian first order formulation of field theory in which 
we allow domains with boundaries and corners. 
In this section we start with a 
brief review of standard material to fix notation, and 
spell out some (possibly unusual) assumptions that are essential in our framework. 
Then we  carefully review the definition of what perturbations of the field are considered to be gauge. 


\subsection{First order Lagrangian field theory}

Histories of the field are local sections $\phi:U \subset M \rightarrow Y$ of the bundle $Y \to M$ where $U\subset M$ is a compact domain with piecewise smooth boundary. 
Physical histories are selected by Hamilton's principle according to the 
action ${S}_U (\phi) = \int_U L (j^1 \phi)$, defined by a Lagrangian density 
$L (j^1 \phi(x)) = L(x, \phi(x), D\phi(x))$ 
whose domain is the first jet bundle, $J^1Y$.

The derivative of the action in the direction prescribed by a variation of the field 
may be calculated by integration of a local object acting on the evolutionary vector field in the jet associated to the given variation. The mentioned local object is given by the variational formula  
\begin{equation}
\label{dvL}
\mathsf{d_v} { L} = {E}(L) + \mathsf{d_h} \Theta_L , 
\end{equation}
where the differential in the jet has been written as 
$\mathsf{d} = \mathsf{d_h} + \mathsf{d_v}$. 
We stress that $\mathsf{d_v} { L}$ is a differential form in 
the jet bundle%
\footnote{
Differential calculus simplifies in the 
infinite jet bundle $J Y$, the space that contains all the jets of any finite order, 
whose elements can be written as $j\phi (x) = (x, \phi(x), D\phi(x), ..., D^k\phi(x), ... )$. 
The basic differential forms in $J Y$ fit in a given jet of finite order. In first order field theory most relevant differential forms fit in the first or second jets, which are finite dimensional manifolds, but a proper treatment of variations of solutions, of gauge directions and of physical observables does involve higher order derivatives of the field. This is our reason for using 
the simplicity of calculation native to $J Y$; the reader 
may prefer to place the description in the first jet with the field equation being a differential operator and variations and observables depending on arbitrarily high derivatives of the field. 
} 
and not in the space of fields. 
The field equation is $j\phi^\ast E(L) = 0$ where 
${E}(L) = \mathsf{I} \mathsf{d_v} L$ 
is obtained from an integration by parts operator acting on $\mathsf{d_v} { L}$; the remaining term is horizontally exact 
(leading to the boundary term in the variation of the action) 
and becomes the corner stone for 
the geometric structure of this formulation of field theory.

A reader who knows a different derivation of the field equations and the geometric structure will still be able to read the paper without problems. 
For the convenience of the reader, a 
minimal set of definitions of the variational bicomplex is given in the appendix. 
In addition, the case of the Maxwell field is presented in Section \ref{Ex}. 
The intention is helping the interested reader become familiarized with this framework working on a familiar example. Thus, the last section should not necessarily be read at the end; when the reader feels the need of a more concrete explanation she or he can work it out in the example. 
A very good brief introduction to the variational bicomplex can be found in \cite{anderson1992introduction}, and 
for detailed references see for example \cite{Vinogradov, Anderson, Sardanashvily(2016)}.

Our notation for 
the space of solutions to the field equation as contained in the space of histories is 
$\mathrm{Sols}_U \subset \mathrm{Hists}_U$. 
However, we will rarely talk about the space of solutions; instead, we will often refer to the subspace 
$\mathcal{E}_{L, U} \subset J^2Y|_U$ in which ${E}(L)$ vanishes; 
additionally, in order to simplify notation 
the prolongation of $\mathcal{E}_{L, U}$ to higher jets 
(demanding that higher total derivatives of sections are prescribed by higher total derivatives of the field equation) will also be denoted by $\mathcal{E}_{L, U}$. 
When a local section is a solution we write 
$\phi \in \mathrm{Sols}_U$ or $j\phi (U) \subset \mathcal{E}_{L, U}$, where $j \phi(x) \in JY$ contains information regarding spacetime position together with the value of the field and all its partial derivatives evaluated at that position. In the case that we want to truncate this information up to  the first $k$ partial derivatives we would write $j^k \phi(x) \in J^kY$.

In the differential geometry of finite dimensional manifolds 
vectors in the tangent space of a given point are equivalence clases of curves passing through the point. 
Similarly, equivalence classes of one parameter families of local sections in the $k$-jet 
``passing though a given local section'' and 
consistent with a local evolution rule are {\em evolutionary vector fields} $V$ in $J^kY$; 
the subtle point about evolutionary vector fields is that the locality condition allows for the field to depend on arbitrarily high partial derivatives of the field even when $V$ lives in the $k$-jet. 
Evolutionary vector fields generate flows among sections, but in general they do not generate flows in $J^kY$. 
A geometrical picture for an evolutionary vector field $V$ can be found in its prolongation to the infinite jet $JY$ where it becomes an ordinary vector field $jV$ inducing a flow. 
We will model variations of histories by 
evolutionary vector fields such that when acting on sections preserve the property of being prolongations.%
\footnote{
See the appendix for some more information about evolutionary vector fields or \cite{Vinogradov, Anderson, Sardanashvily(2016)} 
for a detailed explanation. } 
%

A perturbation $v_\phi$ (often written as $\delta \phi$) 
of a given solution $\phi$ (the tangent vector of a curve of solutions passing through $\phi$)
induces an evolutionary vector field such that $j^2V_{v_\phi}$ is 
intrinsically defined in a neighborhood inside $\mathcal{E}_{L, U}$ 
and which preserves prolongations. 
In most cases such evolutionary vector fields can be extended to a neighborhood of $\mathcal{E}_{L, U}$; 
in what follows we assume that perturbations of solutions in 
the field theory we are working with are extendible. 
%
%
%
%
The compatibility between an evolutionary vector field $V$ and the perturbation $v_\phi$ 
of a given solution involves only the restriction of $V$ to $j^1\phi(U)$; then, 
it is more precise to think that 
a perturbation corresponds to an equivalence class of evolutionary vector fields which depends on the solution; we could write $[V]_\phi$. 
The evolutionary vector fields in the class are those 
for which $j^2V|_{\mathcal{E}_{L, U}}$ preserves the property of 
sections of being prolongations and 
which satisfy the linearized field equation asking that 
$\mathscr{L}_{jV} {E}(L)|_{j^2\phi(U)}$ be horizontally exact. 
Notice that the linearized field equation is only imposed 
inside $j^2\phi(U) \subset {\mathcal{E}_{L, U}}$. 

Smooth vector fields defined in some neighborhood of the space of solutions 
assign to every solution $\phi$ in the neighborhood 
a corresponding $[V]_\phi$.
The smoothness condition implies that 
the allowed vector fields are those that 
can be associated to an equivalence class of evolutionary vector fields 
but where the equivalence relation does not depend on each particular solution. 
Allowing for evolutionary vector fields to depend on partial derivatives of the field of arbitrarily high order is essential for properly modeling vector fields in a neigborhood of solutions. 
We consider locally defined evolutionary vector fields $V$ in the first jet 
such that $j^2V|_{\mathcal{E}_{L, U}}$ preserves the property of 
sections of being prolongations and 
which satisfy the linearized field equation 
in their domain of definition. 
The equivalence relation 
defining a locally defined vector field in the space of solutions will be 
$V_1 \sim V_2$ 
if and only if 
the prolongation of their domains of definition to $J^2 Y$ intersect $\mathcal{E}_{L, U}$ in the same set and 
$j^2V_1|_{\mathcal{E}_{L, U}} = j^2V_2|_{\mathcal{E}_{L, U}}$. 
The space of such objects will be denoted by ${\mathfrak F}_U$ and even when they are equivalence classes 
of evolutionary vector fields we will denote the elements simply by $V \in {\mathfrak F}_U$. 
Each element $V$ of ${\mathfrak F}_U$ generates a flow in the space of solutions with integral curves $\phi^t$. 
Given any solution $\phi= \phi^0$ and any differential form $\sigma$ in the infinite jet we have 
$\frac{d}{dt}|_{t=0} (j\phi^t)^\ast \sigma = j\phi^\ast \mathscr{L}_{jV} \sigma$ and 
$\mathscr{L}_{[jV, jW]} \sigma = [ \mathscr{L}_{jV}, \mathscr{L}_{jW} ] \sigma$. 
It is possible to define a Lie product among evolutionary vector fields and the restriction of 
${\mathfrak F}_U$ to $\mathcal{E}_{L, U}$ turns out to be a Lie subalgebra; moreover, this structure is compatible with the prolongation to 
the infinite jet and the Lie algebra of vector fields there.%
\footnote{
A brief explanation of the structure of ${\mathfrak F}_U$ can be found in the appendix. 
In the mathematical literature this representation of 
vector fields in the space of solutions within the jet is referred to as generalized symmetries or higher symmetries; 
for a detailed explanation see \cite{SYMM}.
} 

As mentioned earlier the most important differential forms in the geometrical structure of field theory live in the first and second jets, and even when we work in the infinite jet 
one may think that objects belong to the first or second jets. 
However, the contraction of an evolutionary vector field $V$ with 
a differential form from the first jet may depend on higher derivatives of the field. 
For example $\iota_{V} \Theta_L$
would generically be a function depending on partial derivatives of $\phi$ of order greater than one; then we will write its pull back to spacetime as 
$j \phi^\ast \iota_{jV} \Theta_L \in \Omega^{n-1}(M)$ and regard those operations as taking place in the infinite jet, which is nothing more than the organization of all the jets of different orders.

\begin{remark}[The multisymplectic point of view]
\label{MultisymplecticPV}

The conservation law for the pre-multisymplectic form $\Omega_L = -\mathsf{d_v} \Theta_L$, 
a form of vertical degree two and horizontal degree $n-1$, 
following from $\mathsf{d_v}^2 L = 0$ and the variational formula (\ref{dvL}) 
may be written as $(\mathsf{d_h}\Omega_L)|_{{\mathfrak F}_U , \mathcal{E}_{L, U}}= 0$ 
or as the {\em multisymplectic formula} written in the introduction (\ref{MSf}). 
This relation is the local corner stone of the point of view given in this remark. 

Now we present a point of view which emphasizes the aspects of this framework for classical field theory that follow a structure similar to that of topological quantum field theories. 
Once we know what are the degrees of freedom of the field under study (the typical fiber in the fiber bundle describing the field) and the Lagrangian, we can define the following assignments posing a problem in classical field theory: \\
(i) Oriented spacetime hypersurfaces $\Sigma$ possibly with boundary (and corners) locally divide spacetime into two sides and this makes them candidates to host boundary conditions. Thus, we 
associate to each oriented hypersurface $\Sigma$ a space of data that locally provides appropriate boundary conditions. Since we are considering field theories with a second order field equation, to each hypersurface $\Sigma$ we associate the space of first order data on it $J^1Y|_\Sigma$ (or the space of sections of this bundle). Then $\Omega_L$ becomes an instrument providing the mentioned spaces with $\Omega_L|_\Sigma$ (or pre-symplectic forms $\omega_{L \Sigma}$ for the space of sections). \\
(ii) On the other hand, spacetime regions $U$ possibly with boundary (and corners) have fields on them that need to satisfy the field equation; thus, to each spacetime region $U$ we associate $\mathcal{E}_{L, U}$ together with the pre-multisymplectic form $\Omega_L$ 
(or the space of solutions $\mathrm{Sols}_U$ with its pre-symplectic structure). 

The collection of hypersurfaces and regions accompanied with the structure that we just described 
is cohesive in the sense that it 
follows a compatibility condition --the multisymplectic formula (\ref{MSf})-- written as 
\[
\int_{\partial U} j \phi^\ast \iota_{jW} \iota_{jV} \Omega_L = 0
\]
for any $\phi \in \mathrm{Sols}_U$ and any pair of tangent vectors to the space of solutions modeled here by $V$ and $W$. 

These assignments serve the purpose of posing problems in classical field theory. Solving one such problem on a region $U$ means finding sections $\phi$ whose prolongations satisfy $j\phi(U) \subset \mathcal{E}_{L, U}$ and comply with the boundary conditions on the hypersurface $\partial U$. 
Additionally, 
$\Omega_L$ induces a pre-symplectic structure to the space of solutions and, as we will see in the following subsection, 
also induces the 
notion of gauge vector field leading to gauge equivalence classes of solutions.

The assignment rules stated above enjoy some gluing properties which in this formulation consist in gluing submanifolds of jet bundles. The problem of gluing solutions along a hypersurface $\Sigma$ is considered in the following subsection leading to a {\em gluing field equation}. 

It is also possible to take a dual point of view not focussing on solutions (mod gauge) at $U$, but on observables. In this work we do this by means of studying observable currents in each spacetime domain $U$. Thus, to each spacetime region $U$ we associate a space of observable currents ${\rm OC}_U$; developing the notion of an observable current is the subject of Section \ref{OCsSection}. Later on we study the gluing properties of assignment in Section \ref{NestedAndGluedDomains}. 
\end{remark}

\begin{remark}[Cohomology classes vs a local description, and corner ambiguities]
\label{cohomology}
Notice that since $\Theta_L$ arises as the boundary term in the variation of the action it is to be integrated at boundaries or connected components of boundaries which are cycles. Additionally, conservation laws (in objects derived from $\Theta_L$) 
tell us that those cycles will be relevant only modulo boundaries. 
Thus, we may regard $\Theta_L$ as relevant only up to its horizontal cohomology class. 
This phenomenon has been called a ``corner ambiguity'' in the recent literature because the ambiguity becomes relevant in the presence of corners (codimension two surfaces). 
Thus, even when we write $\Theta_L$ it may seem more appropriate to think about its horizontal cohomology class, and this remark 
applies precisely to 
the pre-multisymplectic form $\Omega_L$ (which obeys a conservation law). 
However, the resulting framework would not be appropriate to model local physics --like describing what happens in a laboratory during the course of an experiment-- 
because the objects in the framework 
would be integrable only on extended hypersurfaces that could not be split into smaller pieces; we would not be able to 
compute quantities in a local way to compare them with the measurements performed in the laboratory. Thus, we will force the framework to let us work in compact spacetime domains $U\subset M$ in such a way that integration on hypersurfaces with 
$\partial \Sigma \subset \partial U$ can be done. Two stages are needed to accomplish this goal: 
(i) Chose a representative out of the mentioned cohomology classes; this is something that we often do, and formulas for the pre-symplectic current 
(here called pre-multisymplectic form) 
$\Omega_L$ and its potential $\Theta_L$ can be found in textbooks (where the absence of corners makes the ambiguity irrelevant).  
(ii) Define a notion of gauge equivalence that is consistent with making such choices (which is the subject of the next subsection). 

Alternatively, we could have decided to work with a modified notion of horizontal cohomology classes in which horizontally exact terms are identified with zero only if they vanish in the bundle over $\partial U$ in the appropriate sense. We chose the strategy described above, but the relevance of this modified cohomology classes will be clear in various formulas. 

A separate issue is that 
Lagrangian densities leading to the same variational problem should be considered equivalent, and 
adding boundary terms $L \to L + \mathsf{d_h} b$ does not modify the 
problem stated by Hamilton's principle of least action. 
Thus, the field equation ${E}(L)$ remains invariant under the addition of boundary terms, while the horizontally exact term changes as 
$\Theta_L \to \Theta_L - \mathsf{d_v} b$ and the pre-multisymplectic form $\Omega_L$ also remains invariant. 

Hence, $\Omega_L$, modulo its inherited corner ambiguity, 
is the carrier of invariant geometrical structure in this framework 
and allowing for the multisymplectic point of view described above. 
\end{remark}

\subsection{Gauge freedom}
\label{GaugeSection}

In physics, a description includes gauge freedom if physically distinct configurations do not correspond to points in the space that hosts it, but to equivalence classes. Often the equivalence classes are the orbits of certain vector fields declared to be gauge vector fields 
in the space of solutions of the field equation. 
In the Lagrangian first order formalism gauge freedom can be understood considering the propagation of perturbations through hypersurfaces: 
{\em Gauge perturbations have null first order holographic imprint on any hypersurface}. 
Gauge equivalence and locality have a delicate relation; our framework is phrased within confined spacetime domains 
which may be glued to other spacetime regions through shared faces in their boundary. 
Our goal is to compute physically meaningful objects for each bounded domain and be able to construct relevant objects in composite spacetime domains as appropriate compositions. 
Below we give a precise definition of the notion of gauge freedom and explain its motivation. For different arguments leading to a closely related related but inequivalent definition of gauge freedom see Wald and Lee \cite{lee1990local}. 
%
%

\begin{dfn}[Gauge vector fields]
\label{Gauge}
$X \in {\mathfrak F}_U$, modeling a 
vector field in the space of solutions 
is declared to be be a gauge vector field if and only if: 
\begin{enumerate}
\item
$(\iota_{jX} \; \Omega_L)|_{{\mathfrak F}_U , \mathcal{E}_{L, U}}$ coincides with a horizontally exact form. 
\item
$jX |_{\mathcal{E}_{L, U} , \partial U} = 0$; \\
i.e. 
$jX$ vanishes on the intersection of $\mathcal{E}_{L, U}$ with the sub bundle of the jet over $\partial U$. 
%
\end{enumerate}
We will say that $X$ is a gauge vector field and 
write 
$X \in {\mathfrak G}_U$. \\
If Condition 1 is satisfied 
we will write $X \in \hat{\mathfrak G}_U$; clearly ${\mathfrak G}_U \subset \hat{\mathfrak G}_U \subset {\mathfrak F}_U$. 
\end{dfn}

Notice that the definition of gauge vector fields is independent of the choice of 
$\Omega_L$ among its cohomology class; in other words, it is not affected by the corner the ambiguities of the pre-multisymplectic form. 

Below we will prove that the restriction of 
${\mathfrak G}_U$ to $\mathcal{E}_{L, U}$ is a Lie subalgebra of ${\mathfrak F}_U$. 
Thus, locally this definition induces a notion of gauge equivalence classes in the space of solutions.%
\footnote{
The global existence of gauge orbits is a nontrivial issue. For a heuristic discussion of this issue and its relevance in quantization see \cite{dittrich2017can}. 
} 

If we are working on a domain of the type $U = \Sigma \times [0, 1]$ 
endowed with a foliation $\Sigma_t$ and we are interested only in integration over hypersurfaces that belong to the foliation, 
we may replace Condition 2 by $X |_{\partial \Sigma \times [0, 1]} = 0$; this will be addressed below. If one is interested in working with initial data at a Cauchy surface $\Sigma$, or a Cauchy surface that has been divided into pieces, 
one may want to think of a limit of spacetime domains of the form $\Sigma \times [0, \epsilon]$. 

Condition 1 in the definition 
has its origin in how perturbations of solutions propagate through codimension one surfaces. 
It says that 
if the perturbation induced by $X$ on any solution is such that for every hypersurface 
the corresponding first order holographic imprint is 
equivalent to the imprint of of the null perturbation 
then $X$ should be regarded as gauge. 
Now we give a more detailed explanation about gluing perturbations supporting Condition 1. 
Consider a spacetime domain $U \subset M$ and an arbitrary partition of it into two pieces separated by a hypersurface, 
$U = U_1 \cup U_2$ with $\Sigma = U_1 \cap U_2$. 
Either 
$\Sigma$ is a cycle (i.e. $\partial \Sigma = \emptyset$)
or $\partial \Sigma \subset \partial U$. 
Let us write the field as the gluing of its restriction to the pieces of the domain 
$\phi = \phi_1 \#_{\Sigma} \phi_2$, where the use of the gluing symbol assumes that the field is continuous at $\Sigma$.%
\footnote{
Continuity in the directional derivative in directions transversal to $\Sigma$ (up to gauge) will be a consequence of the gluing field equation. 
} 
The action and its variation are additive under such a subdivision of the domain, $dS_U = dS_{U_1} + dS_{U_2}$. 
However, when we split the domain in two pieces the degree of differentiability of the field over $\Sigma$ is relaxed and the 
usual variation of the action 
$dS_U [v_\phi] = \int_U j\phi^\ast \iota_{jV} E(L) + \int_{\partial U} j\phi^\ast \iota_{jV} \Theta_L$
following from (\ref{dvL}) 
acquires an extra term associated to $\Sigma$ 
\begin{equation}
\label{GlEq}
\int_{\Sigma} (j\phi_1^\ast - j\phi_2^\ast) \iota_{jV} \Theta_L 
. 
\end{equation}
If we look for extrema of $S_U$, apart from field equations at $U_1$ and $U_2$ there is a gluing field equation at $\Sigma$ requiring that the above integral vanishes for variations that vanishes at $\partial U$. 
In fact, if we first demand that the field equation holds inside $U_1$ and $U_2$ the 
gluing condition should demand only that $S_{U_1}+ S_{U_2}$ be extremal among variations that preserve solutions; the gluing field equation at $\Sigma$ demands that the above integral vanishes for variations in ${\mathfrak F}_U$ vanishing at $\partial U$. 
The local formulation of this condition at $\Sigma$ is the momentum flux matching condition 
$(j\phi_1^\ast - j\phi_2^\ast) \mathsf{I} (\Theta_L|_{{\mathfrak F}_U})|_{\Sigma} =0$, 
where we have written the field as $\phi = \phi_1 \#_{\Sigma} \phi_2$ and 
$\mathsf{I}$ is the integration by parts operator.%
\footnote{
In the appendix we recall the definition of the integration by parts operator. 
We would like to note that in the same way that we assigned $\mathcal{E}_{L, U}$ to the spacetime domain $U$, we can assign to the hypersurface $\Sigma$ the submanifold of the jet $\mathcal{E}_{L, \Sigma}$ in which the the gluing field equation vanishes. 
} 
Now consider a one parameter family of fields $\phi_t$ (with $\phi_{t=0} = \phi$ and with the variation at $t=0$ given by 
$V = V_1 \#_{\Sigma} V_2$) 
solving the field equation in $U_1$ and $U_2$ and solving the gluing problem over $\Sigma$. Since for each value of the parameter the field $\phi_t$ is an extremum of 
(\ref{GlEq}), at first order in $t$ we have 
$\mathsf{I}(\Theta_L(j(\phi + t (jV_1- jV_2))|_{{\mathfrak F}_U})|_\Sigma= 0$. Thus, the linearized gluing equation is 
\begin{equation}\label{LGEQ}
0 = j\phi^\ast \mathsf{I}( \mathscr{L}_{jV- jV_2} \Theta_L|_{{\mathfrak F}_U}) |_{\Sigma} =
-j\phi^\ast \mathsf{I}( \iota_{jV_1- jV_2} \Omega_L|_{{\mathfrak F}_U} )|_{\Sigma} , 
\end{equation}
where we have used 
$\mathscr{L}_{jX} = \iota_{jX} {\sf d_v} + {\sf d_v} \iota_{jX}$ and the fact that 
$(V_1 - V_2)|_{j\phi(\Sigma)} = 0$ implies that the vector field in the jet $jV_1- jV_2$ is in the kernel of $\Theta_L|_{j\phi(\Sigma)}$, as can be readily verified from its expression within a coordinate system. 
Since the integration by parts operator 
decomposes any $n-1$ horizontal form as 
$\mu = \mathsf{I}(\mu) + {\sf d_h} \sigma$, and it satisfies $\mathsf{I}^2= \mathsf{I}$, $\mathsf{I} {\sf d_h}= 0$, we see that the linearized gluing field equation is equivalent to requiring that 
$\iota_{j^1V_1- j^1V_2} \Omega_L|_{{\mathfrak F}_U}$, when evaluated in 
the intersection of $\mathcal{E}_{L, U}$ and 
the bundle over $\Sigma$, be equal to a horizontally exact form.

Thus, 
the requirements for gluing perturbations $V_1 \#_\Sigma V_2$ across hypersurface $\Sigma$ are: 
(C) Continuity of the perturbation at $0$th order in the jet bundle over the dividing hypersurface 
where the field equation holds, which is equivalent to 
$j_\Sigma (V_1|_{\mathcal{E}_{L, U} , \Sigma}) = j_\Sigma (V_2|_{\mathcal{E}_{L, U} , \Sigma})$. 
(LG) The linearized gluing field equation, 
that $\iota_{jV_1- jV_2} \Omega_L$ be equal to a horizontally exact form, 
as an operator acting on 
the restriction of 
evolutionary vector fields in ${{\mathfrak F}_U}$ 
to $\mathcal{E}_{L, U}|_\Sigma$. 
This equation contains a germ of information regarding the bulk; more precisely, it contains 
first order partial derivatives of the perturbations in directions transversal to the dividing hypersurface $\Sigma$. We call this information the {\em first order holographic imprint} of the perturbation. 
The linear operator which appears in the 
linearized gluing equation may have a nontrivial null space; 
such a linearized gluing equation would find the imprint left by some non zero perturbations as negligible. For those perturbations, 
propagation through a dividing hypersurface proceeds without any 
trace of bulk information. 
{\em Vector fields satisfying Condition 1 may have a complicated form in the bulk, but as far as 
propagation through $\star$any$\star$ dividing hypersurface all this information is lost; the definition of gauge vector fields declares those degrees of freedom as 
physically unimportant. 
This is the motivation for Condition 1 in the definition of gauge vector fields.} 
Further support for Condition 1 is given in Remark \ref{FurtherSup} of 
section \ref{HVFs}, where we consider the notion of multisymplectomorphisms and related locally Hamiltonian vector fields. 
In Section \ref{Ex} we show how in the case of Maxwell's field the familiar notion of gauge arises from Condition 1. 

We mentioned that since we work with a first order Lagrangian density, most differential forms in our formalism fit in the first or second jet bundles. We must warn the reader that 
even when $\Omega_L$ is a differential form fitting in the first jet, its contraction with elements in ${\mathfrak F}_U$ will in general depend on higher order derivatives of the field. 
If one insists in working in the first jet the condition of horizontal exactness, 
$(\iota_X \; \Omega_L - {\sf d_h}\rho)|_{{\mathfrak F}_U , \mathcal{E}_{L, U}} = 0$, 
should be interpreted with $\rho$ being a differential operator of arbitrarily high order. 


Condition 2 of the definition of gauge vector fields is essential for the integration of 
currents on hypersurfaces with $\partial \Sigma \subset \partial U$ producing gauge invariant quantities. 
Related considerations appeared long time ago in a study of the role of surface integrals in General Relativity by Regge and Teitelboim 
\cite{regge1974role}. 
Below, in Remark \ref{Ginv} we will spell out the condition on a current to be gauge invariant. From the definition it is clear that without Condition 2, demanding gauge invariance would force us to work only with cohomology classes that 
could be integrated only on cycles, which would render most allowed calculations trivial for 
confined spacetime domains $U \subset M$. 
Another reason supporting Condition 2 is our 
interest in measurements of the field relative to the field itself 
since ultimately, when all fields are considered as part of the system under study, only 
this type of measurements would be available. In this setting $\partial U$ separates the system into two subsystems, and we may want to measure properties of the field inside $U$ with respect to the field outside. Now, since in the first order formalism all information from the outside is encoded in 
$j^1\phi |_{\partial U}$ we may consider $j^1\phi |_{\partial U}$ as a reference and keeping the reference gauge invariant would be wise. 
Additionally, Remark \ref{IsolatedSystems} argues that measuring properties of 
an ``isolated system'' may need a reference at ``infinity'' 
and preserving that reference frame may be essential for talking about those properties.
Yet another reason for including Condition 2 in our definition comes from the standard definition of gauge vector fields as generators of Lagrange symmetries depending on arbitrary local parameters. 
Wald and Lee \cite{lee1990local} start with a precise version of that definition of gauge and find that it implies our Condition 1, but along their argument they assume that if there is a boundary it is located at infinity which (together with appropriate fall-off conditions on the field) 
lets them conclude that the Noether charge associated to a gauge vector field $X$ according to their definition vanishes identically $Q^X_\Sigma =0$. 
The interested reader is invited to try to reproduce the mentioned argument by Wald and Lee in the context of a domain with boundary 
using the result shown in Remark \ref{ConsCharWouldBeGauge}; Condition 2 will emerge naturally. 
Recently, Freidel and Donelly \cite{donnelly2016local} emphasized that in domains with boundary there are ``would be gauge degrees of freedom'' living at the boundary; those degrees of freedom could be understood as having origin in Condition 2; see Remark \ref{WouldBeGauge}. Their motivation came from entanglement entropy in gauge theories \cite{donnelly2014entanglement} and general relativity in spacetime domains with corners \cite{freidel2015quantum}. 

%

\begin{remark}[Other definitions of gauge]
\label{OtherDefsGauge}
Several references in the context of classical field theory and the variational bicomplex give definitions
closely related to Condition 1 (see for example \cite{Sardanashvily(2016)}). 
The work of Wald and Lee is the standard reference for the subject in the context of the covariant phase space formulation of field theory \cite{lee1990local}. 
The rough physical idea behind those other definitions of gauge is 
that families of symmetries depending on locally independent parameters become an obstacle for predictability of the theory and should be regarded as gauge. 
Complementary features of that notion of gauge symmetries are that there are relations among the field equations (and the linearized field equations) and that 
the Noether currents associated to the gauge symmetries vanish on-shell (up to boundary terms); these phenomena are the content of Noether's second theorem. 
Another important property is that the evaluation of a (pre)symplectic product of variations is independent of changes of the variations in gauge directions. 

Every gauge vector field according to Wald and Lee satisfies Condition 1 of our definition \cite{reyes2004covariant, Vitagliano(2009),lee1990local}. 
The work of Reyes \cite{reyes2004covariant} states the conjecture that 
Condition 1 may imply the usual definition, and in many references a strong version of Condition 1 (not asking that the form is horizontally exact, but that it vanishes) is considered as an indication that the vector field is gauge. 
\end{remark}


\begin{remark}[Gauge invariance]
\label{Ginv}
A function of the jet $f: JY \to {\mathbb R}$ is gauge invariant if it remains constant along gauge orbits in the intersection of its domain of definition with 
${\mathcal E}_L \subset JY$; similarly, a locally defined function of histories is gauge invariant if it remains constant along gauge orbits 
in the intersection of the space of solutions 
with its domain of definition. (The local existence of the mentioned gauge orbits is justified below.) 
Since in our work currents play a central role, we need to spell out the meaning of gauge invariance for them. 
The natural gauge invariance requirement for a current is to ask that its integration on cycles produces gauge invariant functions. 
The corresponding local requirement in the jet is to call 
a current $F$ (a locally defined $n-1$ horizontal form in the jet) gauge invariant if for every $X \in {\mathfrak G}_U$ 
\begin{equation}
\label{ginv}
(\mathscr{L}_{jX} F - {\sf d_h} \sigma)|_{\mathcal{E}_{L, U}} = 0 \quad 
\mbox{ for some } \sigma . 
\end{equation}
Notice that $\sigma$ must be linear in $X$ and that Condition 2 in Definition (\ref{Gauge}) 
implies that the restriction of $\sigma|_{\mathcal{E}_{L, U}}$ to the sub bundle over $\partial U$ vanishes, 
$(\sigma|_{\mathcal{E}_{L, U}}) |_{\partial U}= 0$. 
Any spacetime cycle $\Sigma \subset M$ may be decomposed as a sum of hypersurfaces contained in compact domains $U_i$ with 
$\partial \Sigma_i \subset \partial U_i$ and we may write $f_\Sigma [\phi]$ as a sum of contributions 
$f_{\Sigma_i} (\phi) = \int_{\Sigma_i} j \phi^\ast F$. 
Due to Condition 2 each $f_{\Sigma_i} (\phi)$ is gauge invariant. 
However, if $\Sigma$ is a hypersurface with boundary and $\partial \Sigma_i$ is not contained in $\partial U$ then $f_{\Sigma} (\phi)$ 
is not gauge invariant; if we choose a representative in the cohomology class of $F$ to calculate $f_{\Sigma} (\phi)$ a gauge transformation would not preserve our choice and the resulting boundary term in the integral would not vanish. 

Gauge vector fields $X \in {\mathfrak G}_U$ preserve the pre-multisymplectic form in the sense that 
$\mathscr{L}_{jX} \Omega_L|_{\mathcal{E}_{L, U}}$ is horizontally exact. Thus, 
the pre-symplectic form obtained by integration on any cycle (after the appropriate insertions of elements of ${\mathfrak F}_U$) 
as in formula (\ref{MSf}) is gauge invariant, 
$\mathscr{L}_{jX} \omega_{L \Sigma}|_{\mathrm{Sols}_U} = 0$. Additionally, if a hypersurface is not a cycle but 
$\partial \Sigma \subset \partial U$ then $\omega_{L \Sigma}$ is also preserved by gauge transformations. 
\end{remark}

\begin{remark}[Gauge equivalence classes]
\label{GeqCl}
We need to talk about equivalence classes of solutions. 
%
%
The local existence of orbits in the space of solutions 
follows from ${\mathfrak G}_U \subset {\mathfrak F}_U$, 
when restricted to $\mathcal{E}_{L, U}$ 
being a Lie subalgebra. 
Given any $X, Y \in {\mathfrak G}_U$ 
(with $(\iota_{jX} \Omega_L - {\sf d_h} \rho^X)|_{\mathcal{E}_{L, U} , {\mathfrak F}_U} = 0$, 
$\iota_{jY} \Omega_L - {\sf d_h} \rho^Y)|_{\mathcal{E}_{L, U} , {\mathfrak F}_U} = 0$) 
a short calculation yields 
\[
(\iota_{[jX, jY]} \Omega_L -  
{\sf d_h} (\mathscr{L}_{jX} \rho^Y - \iota_{jY} {\sf d_v} \rho^X ))|_{{\mathfrak F}_U , \mathcal{E}_{L, U}} = 0 . 
\]
Since $[jX, jY] = j[X, Y]$ the restriction to $\mathcal{E}_{L, U}$ of 
${\mathfrak G}_U \subset {\mathfrak F}_U$ is a Lie subalgebra. 
The calculation above proves directly that 
the generators of gauge flows in $\mathcal{E}_{L, U} \subset JY$ form a Lie algebra, and  
the space of solutions inherits a representation of this Lie algebra. 
This guarantees the local existence of gauge orbits. Given that our definition of observables will be  
based on locally defined functions the existence of gauge orbits is relevant in our framework. 
The issue of global existence of orbits is a hard problem; for a heuristic discussion of this issue and its relevance in quantization see \cite{dittrich2017can}.



The subalgebra of 
${\mathfrak F}_U$ preserving ${\mathfrak G}_U$ will be denoted by 
\[
	{\mathfrak F}^{{\mathfrak G}}_U
	:=
	\left\{
		V\in{\mathfrak F}_U
		:\; \; 
		\mathscr{L}_{V} X \in {\mathfrak G}_U
		\quad
		\forall
		X\in {\mathfrak G}_U
	\right\} . 
\]
Since ${\mathfrak G}_U \subset {\mathfrak F}^{\mathfrak G}_U$ is 
a Lie ideal 
the quotient makes sense and inherits 
a Lie algebra structure 
leading to a reduced space 
${\mathfrak F}_U // {\mathfrak G}_U := {\mathfrak F}^{\mathfrak G}_U/{\mathfrak G}_U$ in which 
the pre-multisymplectic form $\Omega_L$ becomes non degenerate 
except for 
degeneracy on (the clases of) elements of ${\mathfrak F}_U$ satisfying Condition 1 for 
gauge vector fields but not Condition 2 leading to ``would be gauge degrees of freedom'' residing at the boundary; see Remark \ref{WouldBeGauge}. 
\end{remark}

\begin{remark}[Isolated systems and measuring with respect to the boundary]
\label{IsolatedSystems}
We can apply our formalism in the context of asymptotically flat General Relativity formulated a la Palatini \cite{ashtekar1990covariant}. 
The spacetime domain considered in this case 
is of the type $U = \Sigma \times [0,1]$ 
with the boundary $\partial \Sigma \times [0,1]$ being a world tube at spatial infinity (and possibly an inner boundary modelling a horizon); it is known that  diffeomorphisms induce variations such that $\iota_{jX} \Omega_L$ is horizontally exact, which implies that $X$ satisfies Condition 1 of the definition of gauge vector fields. 
However, regarding variations that do not 
vanish at infinity as gauge is inappropriate because they modify the reference frame needed to define energy, momentum and angular momentum. 
Thus, 
preserving a reference frame at the boundary that may be used as a reference for measurements is 
another motivation for Condition 2 of the definition of gauge. 

In domains of the type $U = \Sigma \times [0, 1]$ 
endowed with a foliation $\Sigma_t$, it may be desirable that 
Condition 2 is replaced by $X |_{\partial \Sigma \times [0, 1]} = 0$. 
If we use this condition all leaves $\Sigma_t$ in a foliation would be analogous to the leafs of initial and final conditions at $t= 0, 1$. 
When we are interested in an initial data formulation on a given Cauchy surface $\Sigma$, or on a Cauchy surface that has been divided into pieces, we can consider spacetime domains with a given parametrization on domains of the form $\Sigma \times [0, \epsilon]$ and take a limit. In this scenario, this is the natural version of Condition 2 to use. 
\end{remark}

\begin{remark}[``Would be gauge'' degrees of freedom at the boundary]
\label{WouldBeGauge}
Condition 2 in the definition of gauge vector fields had the main purpose of allowing for a local description of physics. 
For the sake of this discussion consider the Lie algebra of evolutionary vector fields $\hat{{\mathfrak G}}_U \subset {\mathfrak F}_U$ satisfying Condition 1 of the definition of gauge without imposing Condition 2; we could call them ``may be gauge'' vector fields over $U$.


Notice that ${\mathfrak G}_U \subset \hat{{\mathfrak G}}_U$ is a Lie ideal. The quotient, denoted by 
$(\hat{{\mathfrak G}} / {\mathfrak G})_{\partial U}$, is characterized by evolutionary vector fields in the jet over $\partial U$ which are extendible to ``would be gauge'' vector fields on the bundle over $U$. 
%
Following Remark \ref{GeqCl} denote the Lie subalgebra of 
${\mathfrak F}_U$ preserving $\hat{\mathfrak G}_U$ by 
${\mathfrak F}^{\hat{\mathfrak G}}_U$ and notice that since $\hat{\mathfrak G}_U \subset {\mathfrak F}^{\hat{\mathfrak G}}_U$ is a Lie ideal 
we obtain the reduced Lie algebra 
${\mathfrak F}_U // \hat{\mathfrak G}_U := {\mathfrak F}^{\hat{\mathfrak G}}_U/\hat{\mathfrak G}_U$ in which 
the pre-multisymplectic form $\Omega_L$ becomes non degenerate in the appropriate sense. 
Clearly we also have 
${\mathfrak F}_U // \hat{\mathfrak G}_U = ({\mathfrak F}_U // {\mathfrak G}_U) // (\hat{{\mathfrak G}} / {\mathfrak G})_{\partial U}$. 

Notice that Condition 1 only requires 
$(\iota_{jX} \; \Omega_L)|_{{\mathfrak F}_U , \mathcal{E}_{L, U}}$ to be horizontally exact, which means that 
in general for $X \in \hat{{\mathfrak G}}_U$ 
(when $\partial \Sigma \subset \partial U$) $\iota_{jX} \; \omega_{L\Sigma}|_{{\mathfrak F}_U , \mathcal{E}_{L, U}} \neq 0$, 
and depends only on its class $[X] \in (\hat{{\mathfrak G}} / {\mathfrak G})_{\partial U}$. 
Thus, the elements of $(\hat{{\mathfrak G}} / {\mathfrak G})_{\partial U}$ should not be considered as generators of gauge transformations but as symmetry generators among boundary conditions. 

A formalism to study gauge theories in the presence of boundaries was recently put forward by Donnelly and Freidel in which boundary degrees of freedom are added to the system \cite{donnelly2016local}. 

From the perspective of our formalism the ``dynamics'' of these degrees of freedom ``added'' at the boundary is not dictated by new independent field equations. The field is bounded to be the restriction to $\partial U$ of a solution to the bulk field equation; additionally, there is a symmetry acting non trivially over those degrees of freedom generated by 
$(\hat{{\mathfrak G}} / {\mathfrak G})_{\partial U}$. 
We will mention 
in Remark \ref{ConsCharWouldBeGauge} that 
a class of vector fields in $(\hat{{\mathfrak G}} / {\mathfrak G})_{\partial U}$ 
which comes from local Lagrangian symmetries may have an associated Noether current which does not vanish. 
All these properties seem to be in agreement with \cite{donnelly2016local}. It would be interesting to have a detailed understanding of the relation between the formalism that we describe here and theirs. 
\end{remark}

\begin{remark}[Gluing spacetime domains]
Consider a domain that is constructed by gluing two subdomains $U= U_1 \cup U_2$ over a codimension one surface $\Sigma = U_1 \cap U_2$. Some gauge vector fields over $U$ are composed by a pair of a gauge vector fields over $U_1$ and a gauge vector fields over $U_2$. Notice that due to Condition 2 the given pair trivially satisfies the continuity condition at $\Sigma$, and it also trivially satisfies the linearized gluing field equation due to Condition 1. 
That is, joining gauge vector fields from $U_1$ and $U_2$ we can construct a Lie ideal of the Lie algebra of gauge vector fields on the spacetime domain $U$; 
we will write ${\mathfrak G}_{U_1} \#_\Sigma {\mathfrak G}_{U_2} := {\mathfrak G}_U^{0\Sigma} \subset {\mathfrak G}_U$. 
However, there are some gauge vector fields at $U$ that do not vanish over $\Sigma$. As mentioned in the previous remarks, these gauge vector fields 
when considered over $U_i$ were symmetry generators and after the domains are glued they become gauge vector fields. 
Thus, in order to model the Lie algebra of vector fields in the space of solutions modulo gauge 
we can calculate the quotient ${\mathfrak F}_U // {\mathfrak G}_U$ in each subdomain, glue the resulting spaces and then reduce by the gauge vector fields of $U$ that are not gauge in $U_1$ and $U_2$. 

We will write ${\mathfrak F}_{U_1} \#_\Sigma {\mathfrak F}_{U_2}$ to mean pairs of elements $V_i \in {\mathfrak F}_{U_i}$ which satisfy: (i) a zeroth order continuity condition at $\Sigma$ when evaluated on solutions (implying continuity in all the partial derivatives along $\Sigma$; which may be written as $j_\Sigma (V_1|_{\mathcal{E}_{L, U} , \Sigma}) = j_\Sigma (V_2|_{\mathcal{E}_{L, U} , \Sigma})$), 
(ii) 
the linearized gluing field equation at $\Sigma$ (\ref{LGEQ}). Notice that this condition involves directional derivatives normal to $\Sigma$, but in the presence of gauge the requirement would not be totally rigid resulting in accepting fields that are not included in ${\mathfrak F}_U$. 

${\mathfrak G}^{\hat{\Sigma}}_{U_i}$ will denote the 
subalgebra of ${\mathfrak F}_{U_i}$ whose elements satisfy Condition 1 for gauge vector fields but they do not have to vanish over $\partial U_i$ in the sense of Condition 2 because the condition may fail over $\Sigma^\circ$. 
Also we will write 
${\mathfrak G}_\Sigma = ({\mathfrak G}^{\hat{\Sigma}}_{U_1} \#_\Sigma {\mathfrak G}^{\hat{\Sigma}}_{U_2})/
({\mathfrak G}_{U_1} \#_\Sigma {\mathfrak G}_{U_2})$. 

With this notation at hand, gluing of spaces of solutions of the linearized field equation modulo gauge corresponding to adjacent domains behaves as follows  
\[
{\mathfrak F}_U // {\mathfrak G}_U = ({\mathfrak F}_U // {\mathfrak G}_U^{0\Sigma}) // ({{\mathfrak G}}_U / {\mathfrak G}_U^{0\Sigma}) = 
(({\mathfrak F}_{U_1} // {\mathfrak G}_{U_1}) \#_\Sigma ({\mathfrak F}_{U_2} // {\mathfrak G}_{U_2} )) // {\mathfrak G}_\Sigma . 
\]

This gluing procedure also seems to be in agreement with the construction of Donnelly and Freidel 
\cite{donnelly2016local}. 
We will return to the subject of gluing subdomains further ahead in the paper when we consider the algebras of observable currents associated to spacetime domains. 

In Section \ref{NestedAndGluedDomains} we discuss gluing of spaces of observable currents corresponding to adjacent domains. 
\end{remark}


\section{Observable currents}
\label{OCsSection}


Physical observables, functions of the space of solutions modulo gauge, may be constructed by integration of currents on hypersurfaces as in formula (\ref{OCf}). 
Currents that are defined only in a certain neighborhood of the jet will be called locally defined currents; 
it turns out that considering locally defined currents is necessary to obtain the rich family of currents allowing us to 
gather all the information from systems with local degrees of freedom. 
Locally defined 
gauge invariant conserved currents are the central object of this work; in order to emphasize the use that we will give them, we will call them observable currents. 


In order to consider a large enough family of physical observables we will need to consider currents depending on an arbitrary number of partial derivatives of the field. This will be easily done working with differential forms in the infinite jet; the reader is reminded that 
if the current under consideration only depends on partial derivatives of the field up to order $k$ then the current could be seen as living in the $k$-jet, which is a finite dimensional manifold. We will work with integrals of the form $\int_\Sigma j \phi^\ast F$ 
(where the boundary of the hypersurface $\Sigma$, if it has one, is required to be contained in the boundary of the spacetime domain of interest $\partial \Sigma \subset \partial U$) 
and they will yield functions of the solution depending locally on the field and its partial derivatives. 

\begin{dfn}[Observable currents] 
(i) A locally defined current $F \in \mathsf{\Omega}^{n-1, 0}({\cal U} \subset J Y)$ is {\em conserved} if 
$
	\mathsf{d_h} F|_{{\cal U} \cap \mathcal{E}_{L, U}}
	= 0
$. 
(ii) It is \emph{gauge invariant} if 
$
	\mathscr{L}_{jX} F |_{{\cal U} \cap \mathcal{E}_{L, U}} 
$ 
is horizontally exact 
for every $X\in {\mathfrak G}_U$.%
\footnote{
Equivalently, one may demand that 
$j\phi^\ast \mathscr{L}_{jX} F$ be exact for any solution with $j\phi (U) \subset {\cal U}$. 
}
\\
An \emph{observable current} is a
locally defined gauge invariant conserved current in $U$. 
We will write $F \in {\rm OC}_{{\cal U}\, U}$, and ${\rm OC}_U$ will be used 
for the space of observable currents with domain in some neighborhood ${\cal U}$.
\end{dfn}

The objective of any current $F \in {\sf \Omega}^{n-1, 0}({\cal U} \subset J Y)$, in its whole existence, is to be paired with an oriented hypersurface 
$\Sigma$ so they together beget a function 
$f_\Sigma : \mathrm{Hists}_U \to \mathbb{R}$ 
through integration (defined in a certain domain) 
\begin{equation}
\label{f}
f_\Sigma [\phi] = \int_\Sigma j \phi^\ast F . 
\end{equation}
The label of the function is an oriented hypersurface $\Sigma$, 
but the conservation law obeyed by observable currents 
disregards $\Sigma$ as unimportant (except for its homology class) 
and most of the features of the function $f_\Sigma$ have origin in 
\[
	F \in {\rm OC}_U . 
\]
Notice that 
if 
$\Sigma_1 \sim \Sigma_2$ and the hypersurfaces 
are not cycles then 
$\partial \Sigma_1 = \partial \Sigma_2$.

Functions induced by observable currents 
defined by equation (\ref{f}) are gauge invariant if $\partial \Sigma \subset \partial U$. 
When we restrict these functions to act on solutions we will call them physical observables. 
\begin{dfn}[Observables from observable currents] 
The complete space spanned by 
locally defined physical observables 
\[
f_\Sigma : {\rm Sols}_U \to \mathbb R
\]
associated to a hypersurface such that 
$\partial \Sigma \subset \partial U$ 
will be denoted by ${\rm Obs}_\Sigma$. 
\end{dfn}

Notice that two observable currents $F, G \in {\rm OC}_U$ such that 
$F = G + \mathsf{d_h} \sigma$ with $\sigma|_{\mathcal{E}_{L, U}, \, \partial U}=0$ 
yield the same observable after integration on any hypersurface with 
$\partial \Sigma \subset \partial U$. It is natural to regard such pair of observable currents as equivalent; this would be the natural notion of horizontal cohomology classes in our setting, and each observable current may be thought of as a representative of its $U$-restricted horizontal cohomology class. 

\begin{remark}[Domains with a foliation]
In a domain of the type $U = \Sigma \times [0, 1]$ endowed with a foliation $\Sigma_t$ we may be interested in studying evolution of functions $f_{\Sigma_t}$ as functions of the ``time'' parameter. In this situation 
the conservation law tells us that 
$f_{\Sigma_t} - f_{\Sigma_0} = f_{\partial \Sigma \times [0, t]}$. Thus, if the boundary conditions and 
$F \in {\rm OC}_U$ are such that $f_{\partial \Sigma \times [0, t]}= 0$ the conservation law will simply state that 
the value of $f_{\Sigma_t}$ is time independent. In particular, this is expected to be the case 
for a class of observable currents of physical interest 
when the physical boundary 
$\partial \Sigma \times [0, t]$ is located ``at infinity''. However, in general the term 
$f_{\partial \Sigma \times [0, t]}$ will be relevant. 
This remark also applies to the conservation of the pre-symplectic form $\omega_{\Sigma_t}$. 

Gauge invariance is delicate because $\Sigma_t$ may have a boundary. The requirement of gauge invariance is that 
$\mathscr{L}_X F|_{\mathcal{E}_{L, U}}$ be horizontally exact, but 
since gauge vector fields satisfy 
$X|_{\partial \Sigma \times [0,1]}= 0$ the differential form $\sigma$ satisfying $\mathscr{L}_X F|_{\mathcal{E}_{L, U}} = \mathsf{d_h} \sigma|_{\mathcal{E}_{L, U}}$ vanishes at the boundary, 
$\sigma|_{\partial \Sigma \times [0,1]}= 0$. Then, in the general case $f_{\Sigma'}$ is not gauge invariant when $\Sigma'$ is not a cycle due to boundary terms, but if $\partial \Sigma' \subset \partial \Sigma \times [0,1]$ 
the boundary term breaking gauge invariance vanishes and the function is gauge invariant. Thus due to Condition 2, functions $f_{\Sigma_t}$ associated to the leaves of the foliation $\Sigma_t$ are gauge invariant. 

In Remark \ref{IsolatedSystems} we mentioned that in domains endowed with a foliation one may opt to replace Condition 2 in the definition of gauge vector fields 
by $X |_{\partial \Sigma \times [0, 1]} = 0$. 
This modification has the effect of making all leaves with $t \in [0,1]$ equivalent. 
\end{remark}

Noether's theorem stating that symmetries lead to conserved quantities is crystal clear in this framework. 
\begin{tma}[Noether]
A Lagrange symmetry is an evolutionary vector field 
$V$ satisfying 
$
	\mathscr{L}_{jV}{{ L}} 
	= 
	{\sf d_h} \sigma_L^{V}
$. 
Every Lagrange symmetry has a corresponding Noether current $N^V \in {\rm OC}_U$ 
given by 
\[
	N^V
	=
	- \iota_{jV}\Theta_L - \sigma_L^{V} . 
\]
\end{tma}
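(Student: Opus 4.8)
The plan is to verify directly that $N^V$ satisfies the two defining properties of an observable current in $U$: horizontal conservation on $\mathcal{E}_{L,U}$ and gauge invariance. The only ingredients I would use are the variational formula $\mathsf{d_v}L = E(L) + \mathsf{d_h}\Theta_L$, the identity $\Omega_L = -\mathsf{d_v}\Theta_L$, the Cartan identity for prolonged evolutionary vector fields $\mathscr{L}_{jV} = \iota_{jV}\mathsf{d_v} + \mathsf{d_v}\iota_{jV}$ (which, compared with the full Cartan formula, yields the anticommutation $\iota_{jV}\mathsf{d_h} = -\mathsf{d_h}\iota_{jV}$), and Condition~1 of Definition~\ref{Gauge}.

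For conservation, since $N^V$ has vertical degree zero I would compute $\mathsf{d_h}N^V = -\mathsf{d_h}\iota_{jV}\Theta_L - \mathsf{d_h}\sigma_L^V$ and replace $\mathsf{d_h}\sigma_L^V$ by $\mathscr{L}_{jV}L$ using the symmetry hypothesis. Because $L$ has vertical degree zero, $\mathscr{L}_{jV}L = \iota_{jV}\mathsf{d_v}L = \iota_{jV}E(L) + \iota_{jV}\mathsf{d_h}\Theta_L = \iota_{jV}E(L) - \mathsf{d_h}\iota_{jV}\Theta_L$. Substituting, the two $\mathsf{d_h}\iota_{jV}\Theta_L$ terms cancel and one is left with the clean off-shell identity $\mathsf{d_h}N^V = -\iota_{jV}E(L)$, so that $\mathsf{d_h}N^V|_{\mathcal{E}_{L,U}} = 0$. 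As a byproduct, applying $\mathsf{d_v}$ to the symmetry hypothesis shows $\mathscr{L}_{jV}E(L) = -\mathsf{d_h}\alpha$ is horizontally exact (with $\alpha$ as below), so $V$ indeed represents an element of $\mathfrak{F}_U$, which legitimizes the insertions used in the next step.

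For gauge invariance I would again use that $N^V$ has vertical degree zero, so that for any $X \in \mathfrak{G}_U$ the Cartan identity collapses to $\mathscr{L}_{jX}N^V = \iota_{jX}\mathsf{d_v}N^V$. Computing $\mathsf{d_v}N^V$ via $\mathsf{d_v}\iota_{jV}\Theta_L = \mathscr{L}_{jV}\Theta_L + \iota_{jV}\Omega_L$ gives $\mathsf{d_v}N^V = -\iota_{jV}\Omega_L - \alpha$, where $\alpha := \mathscr{L}_{jV}\Theta_L + \mathsf{d_v}\sigma_L^V$ satisfies $\mathsf{d_h}\alpha = -\mathscr{L}_{jV}E(L)$. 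The contribution from $\iota_{jV}\Omega_L$ is the easy one: by Condition~1, $\iota_{jX}\Omega_L$ coincides with a horizontally exact form $\mathsf{d_h}\rho^X$ on $\mathfrak{F}_U,\mathcal{E}_{L,U}$, and contracting with $jV$ and using $\iota_{jV}\mathsf{d_h} = -\mathsf{d_h}\iota_{jV}$ shows that $\iota_{jX}\iota_{jV}\Omega_L|_{\mathcal{E}_{L,U}}$ is horizontally exact. Thus gauge invariance reduces to showing that $\iota_{jX}\alpha|_{\mathcal{E}_{L,U}}$ is horizontally exact; equivalently, that $\alpha|_{\mathcal{E}_{L,U}}$ is horizontally exact, that is, that $N^V$ is a Hamiltonian observable current in the sense $\mathsf{d_v}N^V|_{\mathcal{E}_{L,U}} = -\iota_{jV}\Omega_L + \mathsf{d_h}\sigma^{N^V}$.

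I expect this last point to be the main obstacle, and the place where the hypothesis that $V$ is a genuine Lagrange symmetry (not merely an element of $\mathfrak{F}_U$) is essential. The route I would take is to note that, for a genuine symmetry, $\mathscr{L}_{jV}E(L)$ lies in the ideal generated by $E(L)$ and its total derivatives (the linearized Noether identity, i.e.\ the infinitesimal statement that a symmetry carries solutions to solutions), hence vanishes on the prolonged manifold $\mathcal{E}_{L,U}$, so that $\mathsf{d_h}\alpha|_{\mathcal{E}_{L,U}} = 0$. The delicate step is then upgrading $\mathsf{d_h}$-closedness of the $(n-1,1)$-form $\alpha$ to $\mathsf{d_h}$-exactness, which requires care because the characteristic (horizontal) cohomology of $\mathcal{E}_{L,U}$ need not vanish; the argument must invoke the acyclicity of the vertical-degree-one row of the variational bicomplex together with the fact that the source part of $\alpha$, controlled by $\mathsf{I}$, is killed by the symmetry condition. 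An equivalent and perhaps more transparent packaging is to recognize the remainder as the Noether current $N^{[X,V]}$ up to horizontally exact terms, to observe that $[X,V] \in \mathfrak{G}_U$ because symmetries normalize the gauge ideal, and to invoke Noether's second theorem, by which the Noether current of a gauge symmetry is horizontally exact on shell. Either way, once $\alpha|_{\mathcal{E}_{L,U}}$ is shown to be horizontally exact the surviving contributions to $\mathscr{L}_{jX}N^V$ are horizontally exact, so $N^V$ is gauge invariant and therefore $N^V \in \mathrm{OC}_U$.
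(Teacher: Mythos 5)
Your conservation computation and your reduction of gauge invariance to the horizontal exactness of $\alpha = \mathscr{L}_{jV}\Theta_L + \mathsf{d_v}\sigma_L^V$ match the paper's appendix proof and are correct. The genuine gap sits exactly at the step you flag as ``the main obstacle,'' and your sketch does not close it. You establish only the on-shell statement $\mathsf{d_h}\alpha|_{\mathcal{E}_{L,U}} = 0$, and, as you yourself note, closedness restricted to $\mathcal{E}_{L,U}$ cannot be upgraded to exactness there (the characteristic cohomology of the equation manifold is precisely where nontrivial conserved currents live). The missing move --- the one the paper actually makes --- is to produce a form that is $\mathsf{d_h}$-closed \emph{off shell}, where Takens' lemma (Lemma \ref{lma:Takens}) applies. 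Concretely: applying $\mathsf{I}$ to the off-shell identity $\mathscr{L}_{jV}E(L) = -\mathsf{d_h}\alpha$ and using $\mathsf{I}\,\mathsf{d_h} = 0$ gives $\mathsf{I}\bigl(\mathscr{L}_{jV}E(L)\bigr) = E(\mathscr{L}_{jV}L) = E(\mathsf{d_h}\sigma_L^V) = 0$; then the decomposition $\mu = \mathsf{I}(\mu) + \mathsf{d_h}\eta$ applied to $\mu = \mathscr{L}_{jV}E(L)$ yields $\mathscr{L}_{jV}E(L) = \mathsf{d_h}\eta$ with $\eta$ built by the integration-by-parts homotopy out of $E(L)$ and its total derivatives, so $\eta|_{\mathcal{E}_{L,U}} = 0$. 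Hence $\alpha + \eta$ is $\mathsf{d_h}$-closed on all of $JY|_U$, Takens' lemma gives $\alpha + \eta = \mathsf{d_h}\sigma_N^V$ globally in the jet, and restriction to $\mathcal{E}_{L,U}$ kills $\eta$, producing the on-shell exactness you need. (The paper compresses this by asserting $\mathscr{L}_{jV}E(L) = E(\mathscr{L}_{jV}L) = 0$ outright; strictly, the two sides differ by the $\mathsf{d_h}$-exact ideal term $\mathsf{d_h}\eta$, but the mechanism --- symmetry kills the source part, Takens gives off-shell exactness, then restrict --- is exactly this one.) Your phrase about ``the source part of $\alpha$ controlled by $\mathsf{I}$'' gestures in this direction, but $\mathsf{I}$ acts on the $(n,1)$-form $\mathsf{d_h}\alpha$, not on $\alpha$, and the actual bridge from on-shell closedness to off-shell closedness is never constructed in your proposal.

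Your alternative packaging (Route B) does not work as stated. First, $\alpha$ has vertical degree one, so it cannot ``be the Noether current $N^{[X,V]}$ up to horizontally exact terms''; at best one could compare $\iota_{jX}\alpha$ with such a current, which is a different claim requiring its own proof. Second, $[X,V] \in {\mathfrak G}_U$ is not automatic: Condition 2 demands that $j[X,V]$ vanish on $\mathcal{E}_{L,U}$ over $\partial U$, and the bracket of a vector field vanishing on that set with an arbitrary $V \in {\mathfrak F}_U$ picks up the derivative of $jX$ along $jV$, which need not vanish there. Third, Noether's second theorem is nowhere established in the paper, and its conclusion (``the Noether current of a gauge symmetry is horizontally exact on shell'') would in any case face the same on-shell-exactness obstruction you are trying to circumvent, so invoking it begs the question.
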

We include a proof of this classical theorem in the appendix; a more detailed presentation can be found in \cite{Sardanashvily(2016)}. 
Proving conservation of the Noether current is trivial, but 
on the other hand gauge invariance requires the use of a lemma of Takens \cite{Takens(1979)}. 

A large family of observable currents is given below. Given our previous definitions the proof of this result is simple, we state it as a theorem because 
in the context of multisymplectic formulations of classical field theory the existence of a rich family of gauge invariant conserved currents 
has been a long standing problem (see for example \cite{Forger+Romero, Helein, kanatchikov1998canonical, Kijowski, Goldschmidt(1973)}). 
We elude the obstruction 
faced by the mentioned references due to two 
special features of our approach: first of all we work in the infinite jet, which is common place in treatments based on the variational bicomplex \cite{reyes2004covariant, Vitagliano(2009)}, and second we allow for locally defined currents. 

\begin{dfn}[Symplectic product current]
Given a pair of 
vector fields in the space of solutions modeled by 
$V,W\in\mathfrak{F}_U$ their \emph{symplectic product} is the conserved current 
(defined in the intersection of the domains of definition of the perturbations)
\[
F^{VW}=\iota_{jW}\iota_{jV}\Omega_L .
\]
\end{dfn}

\begin{tma}
If the vector fields 
are ${\mathfrak G}_U$ preserving, 
$V,W\in {\mathfrak F}^{\mathfrak G}_U \subset \mathfrak{F}_U$, 
their associated symplectic product current 
is an observable current 
\[
F^{VW} \in {\rm OC}_U .
\]
\end{tma}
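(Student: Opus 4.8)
The plan is to verify the two defining properties of an observable current for $F^{VW}=\iota_{jW}\iota_{jV}\Omega_L$: conservation and gauge invariance. Conservation is already built into the definition of the symplectic product current and follows directly from the multisymplectic formula (\ref{MSf}): writing $\mathsf{d_h}F^{VW}=\mathsf{d_h}\iota_{jW}\iota_{jV}\Omega_L$ and repeatedly using that contraction along the prolongation of an evolutionary vector field anticommutes with $\mathsf{d_h}$ (a consequence of $\mathscr{L}_{jV}=\iota_{jV}\mathsf{d_v}+\mathsf{d_v}\iota_{jV}$, which forces $\iota_{jV}\mathsf{d_h}=-\mathsf{d_h}\iota_{jV}$), one moves $\mathsf{d_h}$ past both contractions to land on $(\mathsf{d_h}\Omega_L)|_{{\mathfrak F}_U,\mathcal{E}_{L,U}}=0$. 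Thus the entire content of the theorem resides in gauge invariance, and this is where the hypothesis $V,W\in{\mathfrak F}^{\mathfrak G}_U$ is needed.

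For gauge invariance I would fix $X\in{\mathfrak G}_U$ and expand $\mathscr{L}_{jX}F^{VW}$ by the Cartan calculus, using $\mathscr{L}_{jX}\iota_{jA}=\iota_{[jX,jA]}+\iota_{jA}\mathscr{L}_{jX}$ together with $[jX,jA]=j[X,A]$. This produces three terms:
\begin{equation}
\mathscr{L}_{jX}F^{VW}=\iota_{j[X,W]}\iota_{jV}\Omega_L+\iota_{jW}\iota_{j[X,V]}\Omega_L+\iota_{jW}\iota_{jV}\mathscr{L}_{jX}\Omega_L .
\end{equation}
The goal is then to show that each summand, restricted to $\mathcal{E}_{L,U}$, is horizontally exact.

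The two commutator terms are where the $\mathfrak{G}$-preserving hypothesis enters. By definition of ${\mathfrak F}^{\mathfrak G}_U$ we have $[X,V],[X,W]\in{\mathfrak G}_U$, so Condition 1 of Definition \ref{Gauge} applies to them: $(\iota_{j[X,V]}\Omega_L)|_{{\mathfrak F}_U,\mathcal{E}_{L,U}}$ and $(\iota_{j[X,W]}\Omega_L)|_{{\mathfrak F}_U,\mathcal{E}_{L,U}}$ coincide with horizontally exact forms $\mathsf{d_h}\rho^{[X,V]}$, $\mathsf{d_h}\rho^{[X,W]}$. Contracting once more with $jV$ or $jW$ and using the anticommutation $\iota_{jA}\mathsf{d_h}=-\mathsf{d_h}\iota_{jA}$ (together with the antisymmetry $\iota_{jA}\iota_{jB}=-\iota_{jB}\iota_{jA}$ on the two-form $\Omega_L$ to expose the inner slot) turns each of these into $\pm\mathsf{d_h}(\iota_{\cdot}\rho^{\cdot})$, hence horizontally exact on $\mathcal{E}_{L,U}$. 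For the third term I would invoke Remark \ref{Ginv}, which states that $\mathscr{L}_{jX}\Omega_L|_{\mathcal{E}_{L,U}}=\mathsf{d_h}\beta$ for $X\in{\mathfrak G}_U$; contracting $\mathsf{d_h}\beta$ twice and commuting $\mathsf{d_h}$ outward through both contractions again yields $\mathsf{d_h}(\iota_{jW}\iota_{jV}\beta)$. Assembling the three pieces shows $\mathscr{L}_{jX}F^{VW}|_{\mathcal{E}_{L,U}}$ is horizontally exact, which is exactly gauge invariance.

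The step I expect to be most delicate is not the formal Cartan algebra but the careful handling of the restriction to $\mathcal{E}_{L,U}$ and of the qualifier $|_{{\mathfrak F}_U}$ attached to Condition 1. One must check that contracting the one-slot-remaining form $\iota_{j[X,V]}\Omega_L$ with $jW$ is a legitimate instance of the ``$|_{{\mathfrak F}_U}$'' restriction (i.e. $W\in{\mathfrak F}_U$), and that the horizontal exactness guaranteed only after pairing with an element of ${\mathfrak F}_U$ genuinely survives this extra contraction rather than being an artifact of it. Equally, because $jV,jW$ are required to be tangent to $\mathcal{E}_{L,U}$ (they satisfy the linearized field equation), I would verify that the contractions commute with restriction to $\mathcal{E}_{L,U}$, so that ``restrict then contract'' and ``contract then restrict'' agree; this tangency is precisely what makes the on-shell manipulations above well defined.
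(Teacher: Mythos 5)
Your proof is correct and follows essentially the same route as the paper, which disposes of conservation by citing the multisymplectic formula and of gauge invariance by citing the gauge invariance of $\Omega_L$ together with the ${\mathfrak G}_U$-preserving property of $V$ and $W$. Your Cartan-calculus expansion of $\mathscr{L}_{jX}F^{VW}$ into the two commutator terms (handled by Condition 1 applied to $[X,V],[X,W]\in{\mathfrak G}_U$) and the $\mathscr{L}_{jX}\Omega_L$ term is precisely the detail the paper's one-line argument leaves implicit.
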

\begin{proof}
The following statements hold 
in the intersection of the domains of definition of $V$ and $W$. 
Conservation of $F^{VW}$ is the statement that the multisymplectic formula, described in Section \ref{Framework}, 
holds. 
Gauge invariance follows from the gauge invariance of $\Omega_L$ and the ${\mathfrak G}_U$ preserving property of $V$ and $W$. 
\end{proof}

In Section \ref{Ex} we give the elements to evaluate symplectic product observable currents in the case of the Maxwell field. 
An explicit example (with spacetime being one dimensional) 
showing that symplectic product observable currents are generically nontrivial is rigid body motion \cite{abraham1978foundations}, where the configuration of the system at time $t$ is given by 
$q \in SO(3)$. 
Let us denote left invariant vector fields in $SO(3)$ by $\xi \in {\mathfrak X}(SO(3))$. 
In the first order Lagrangian framework, the state of the system at time $t$ is given by $(q, \xi_q)\in TSO(3)$. 
Perturbations corresponding to generators of rotations may be parametrized by left invariant vector fields in $SO(3)$; let us denote such perturbations 
in the space of first order data 
by 
$V^{\xi} \in {\mathfrak X}(TSO(3))$. 
Consider the system at time $t=0$ at state $(q, \xi_q)\in TSO(3)$ and two perturbations of the system at that time $V^{\xi_1}$ and $V^{\xi_2}$. Evolution according to the Euler-Lagrange equation will yield $(q(t), \xi(t)_{q(t)})\in TSO(3)$; the perturbations will also evolve according to the linearized equation, and yield 
$V^{\xi_1}(t)$ and $V^{\xi_2}(t)$. 
The evaluation of the symplectic product $f^{V^{\xi_1(t)} V^{\xi_2(t)}}$ using 
the symplectic form in $TSO(3)$ induced by the Lagrangian (or equivalently by Legendre transformation of the symplectic form of 
$T^\ast SO(3)$) is 
$\omega_L (V^{\xi_1}(t), V^{\xi_2}(t))_{(q(t), \xi(t)_{q(t)})} 
= - d \theta_L (V^{\xi_1}(t), V^{\xi_2}(t))_{(q(t), \xi(t)_{q(t)})}$, where the symplectic potential is basically the angular momentum calculated in the body reference frame. 
The body angular momentum is not constant in time and the perturbations also evolve in time, but their combination in $f^{V^{\xi_1}(t) V^{\xi_2}(t)}$ is a conserved quantity. 
The Hamiltonian vector field associated to the observable shown above is the commutator of the vector fields, $[V^{\xi_1}(t), V^{\xi_2}(t)]$. 
This is a family of conserved quantities parametrized by the choice of two elements of the Lie algebra $\xi_1, \xi_2$ which encode information regarding the state of the system. 
A much larger family of observables is obtained if the perturbations of the initial conditions are allowed to depend in the state of the system at the initial time; in Lagrangian language we would say that the 
vector fields $V^{\xi_1}$ and $V^{\xi_2}$ are allowed to be {\em state dependent}. 
The same logic can be used in the case of the Yang-Mills field 
\cite{Helein:2014fta} to obtain nontrivial explicit observable currents of the symplectic product type.

\begin{remark}[Observable currents in linear field theories]
\label{LineFieldsOCs}
If we have a theory in which ${\rm Sols}_U$ is a linear subspace of ${\rm Hists}_U$ then the spaces ${\rm Sols}_U$ and ${\mathfrak F}_U$ may be identified. 
This trick, extensively used by Wald in the quantization of linear fields \cite{wald1994quantum}, leads to the following special type of 
observable currents $F^V \in {\rm OC}_U$ 
parametrized by an element 
$V\in {\mathfrak F}_U$, 
\[
F^V (j\phi) = \iota_{jW(\phi)} \iota_{jV} \Omega_L (j\phi) , 
\]
where $W(\phi)$ is an element of ${\mathfrak F}_U$ that is 
compatible with the solution $\phi \in {\rm Sols}_U$. 
We give an explicit example in Section \ref{Ex}. 
\end{remark}


\section{Locally Hamiltonian vector fields and \\
Hamiltonian observable currents}
\label{HVFs}




%
%


In the multisymplectic framework for field theory described in Section \ref{Framework} 
the core geometrical structure associated to a field theory is given by the structure of $J^1Y$ (and $J^2Y$), 
the field equations $\mathcal{E}_{L, U} \subset J^2Y$ and the pre-multisymplectic form $\Omega_L$. 
Thus, it would be natural to look for the locally defined structure preserving automorphisms of $J^1Y|_U$ (or $J^2Y|_U$). 
However, the objective just described turns out not to be ambitious enough 
because the space of vector fields generating structure preserving automorphisms is 
much smaller than the space vector fields on the space of solutions. The appropriate way to model perturbations of solutions in the jet 
uses locally defined {\em evolutionary vector fields} (described in Section \ref{Framework} and in the appendix) which moreover satisfy the linearized field equation, 
the requirement that 
$\mathscr{L}_{jV} {E}(L)|_{\mathcal{E}_{L, U}} = 0$ be horizontally exact. 
If we want to preserve the pre-multisymplectic structure, 
understood as the assignment of pre-symplectic structures to hypersurfaces (possibly with boundary and corners), 
then our requirements on the evolutionary vector fields should also include that 
\begin{equation}\label{llhh}
\mathscr{L}_{jV} \Omega_L  = {\sf d_h} \sigma^V
\end{equation}
when restricted to ${\mathfrak F}_U$ and evaluated in $\mathcal{E}_{L, U}$ 
for some boundary term such that \\
$(\sigma^V|_{{\mathfrak F}_U, \mathcal{E}_{L, U}})|_{\partial U} = 0$. 
Those locally defined evolutionary vector fields will be called {\em locally Hamiltonian vector fields}, and will be denoted by 
${\mathfrak F}_U^{\rm LH}\subset {\mathfrak F}_U$. 
In future arguments we will also refer to the space 
$\hat{\mathfrak F}_U^{\rm LH}\subset {\mathfrak F}_U$ composed by 
solutions of a version of equation (\ref{llhh}) in which no restriction is imposed on the boundary term over $\partial U$.

For a locally Hamiltonian vector field, a solution, and a hypersurface with $\partial \Sigma \subset \partial U$, 
it is easy to check that the pre-symplectic form induced by the solution $\omega_{L \Sigma}|_{{\mathfrak F}_U}$ satisfies 
$\mathscr{L}_{jV} \omega_{L \Sigma}  = 0$. 

There are two remarks relating locally Hamiltonian vector fields and gauge vector fields. 
First, 
if $X$ is a gauge vector field 
then it is locally Hamiltonian, 
${\mathfrak G}_U \subset {\mathfrak F}_U^{\rm LH}$. 
Second, locally Hamiltonian vector fields are ${\mathfrak G}_U$ preserving, 
${\mathfrak F}_U^{\rm LH} \subset {\mathfrak F}^{{\mathfrak G}}_U$; this is because 
preserving $\Omega_L$ implies preserving Condition 1 of Definition \ref{Gauge} for ${\mathfrak G}_U$, and Condition 2 of the definition is also preserved. 
Similarly, $\hat{\mathfrak G}_U \subset \hat{\mathfrak F}_U^{\rm LH}$.

The equation above says that, 
when restricted to ${\mathfrak F}_U, \mathcal{E}_{L, U}$, the form 
$\iota_{jV} \Omega_L$ is vertically closed (up to  horizontally exact terms vanishing at the boundary). Thus, it is natural to 
study if it can be promoted to be vertically exact (up to  horizontally exact terms whose vertical derivative vanishes at the boundary). More concretely, we look for an observable current $F$ such that 
\begin{equation}
\label{HOCs}
{\sf d_v} F = - \iota_{jV} \Omega_L + {\sf d_h}\sigma^F 
\end{equation}
when restricted to ${\mathfrak F}_U$ and evaluated in $\mathcal{E}_{L, U}$ for some boundary term such that \\
$(({\sf d_v}\sigma^F - \lambda^F)|_{{\mathfrak F}_U, \mathcal{E}_{L, U}}
)|_{\partial U} = 0$ for some form with 
${\sf d_h} \lambda^F |_{{\mathfrak F}_U, \mathcal{E}_{L, U}}= 0$.%
\footnote{
The presence of $\lambda^F$ is only due to the fact that $\sigma^V$ and $\sigma^F$ are uniquely defined only up to ${\sf d_h}$-closed terms because they are 
boundary terms. Arbitrarily setting $\lambda^F$ to zero would limit the formalism considerably; for example 
the crucial result stated in Remark \ref{EveryHVFhasOCs} and used in Theorem \ref{SeparationOfSols} may not hold. 
} 
This condition on the boundary term lets us define 
$\sigma^V = - {\sf d_v}\sigma^F + \lambda^F$ satisfying the condition over $\partial U$ required for $V$ to be a locally Hamiltonian vector field  (\ref{llhh}). Additionally, 
the condition is compatible with an equivalence relation among observable currents differing by horizontally exact terms vanishing over the boundary in the appropriate sense as discussed in the previous section. 

\begin{dfn}[Hamiltonian observable currents]\label{dfn:HOC}
An observable current $F \in {\rm OC}_U$ 
participating in equation (\ref{HOCs}) together with some $V \in {\mathfrak F}_U$ is called a 
{\em Hamiltonian observable current}. The equation implies that $V$ is locally Hamiltonian, and we say that 
$V \in {\mathfrak F}_U^{\rm LH}$ is associated to $F \in {\rm HOC}_U$. 


If the boundary term in equation (\ref{HOCs}) 
satisfies the stronger condition $(\sigma^F|_{{\mathfrak F}_U, \mathcal{E}_{L, U}})|_{\partial U} = 0$ the current will be called a {\em strict Hamiltonian observable current}, and we will write 
$F \in {\rm sHOC}_U \subset {\rm HOC}_U$.%
\footnote{
A proof of ${\rm sHOC}_U \subset {\rm HOC}_U$ is given as part of the prof of Theorem \ref{pro:F_VW-Hamiltonian}. 
} 
On the other hand, observable currents 
and associated vector fields that obey a version of equation (\ref{HOCs})
in which there is no condition at $\partial U$ on the boundary term belong to the spaces denoted by 
$\widehat{\rm HOC}_U$, $\hat{\mathfrak F}_U^{\rm LH}$. Notice that in the absence of boundaries all the variants of the space of Hamiltonian observable currents agree.
\footnote{
In the case of a domain with a foliation $U = \Sigma \times [0, 1]$ (or $\Sigma \times [0, \epsilon]$ if we are interested in initial data over an embedded hypersurface $\Sigma$) in which we are interested in integrating observable currents only on the leaves $\Sigma_t$ of the foliation, we mentioned that Condition 2 for gauge vector fields could be traded for the vanishing of the field over $\partial \Sigma \times [0, 1]$; accordingly, the condition on the boundary terms of equation (\ref{llhh}) defining locally Hamiltonian vector fields 
should be 
$(\sigma^V|_{{\mathfrak F}_U, \mathcal{E}_{L, U}})|_{\partial \Sigma \times [0, 1]} = 0$, the condition on the boundary terms of equation (\ref{HOCs}) 
concerning this definition  
should be $(({\sf d_v}\sigma^F - \lambda^F)|_{{\mathfrak F}_U, \mathcal{E}_{L, U}})|_{\partial \Sigma \times [0, 1]} = 0$ 
for some form $\lambda^F$ which is horizontally closed in the appropriate sense, 
and similarly the condition for strict Hamiltonian observable currents should be properly adjusted.} 
\end{dfn}

Let us discuss first the version of equation (\ref{HOCs}) in which no conditions are imposed on the boundary term over $\partial U$. 
Consider two elements of $\hat{\mathfrak F}_U^{\rm LH}$ compatible with the same observable current $F \in \widehat{\rm HOC}_U$ in the sense that 
$({\sf d_v} F + \iota_{jV_i} \Omega_L - {\sf d_h}\sigma_i^{F})|_{{\mathfrak F}_U , \mathcal{E}_{L, U}} = 0$. 
Then $(\iota_{j(V_1 -V_2)} \Omega_L - {\sf d_h}(\sigma_1^{F}- \sigma_2^{F}))_{{\mathfrak F}_U, \mathcal{E}_{L, U}} = 0$ 
proving that the vector fields differ by an element of $\hat{\mathfrak G}_U$. Thus, 
to each element of $\widehat{\rm HOC}_U$ corresponds a unique element of $\hat{\mathfrak F}_U^{\rm LH} / \hat{\mathfrak G}_U$, and the currents that are mapped to the zero element 
are those that are ${\sf d_v}$-constant up to ${\sf d_h}$-exact terms; in other words, these observable currents are pure boundary terms up to a field independent constant.

Now let us consider locally Hamiltonian vector fields ${\mathfrak F}_U^{\rm LH}\subset \hat{\mathfrak F}_U^{\rm LH}$ and their gauge equivalence classes. 
We know that ${\mathfrak F}_U^{\rm LH} \cap \hat{\mathfrak G}_U$ 
contains those locally Hamiltonian vector fields which are compatible with the zero observable current ${\mathfrak F}_U^{{\rm LH},0}$; they are vector fields in $\hat{\mathfrak G}_U$ whose associated boundary terms satisfy extra conditions over $\partial U$. It is simple to see that 
${\mathfrak F}_U^{{\rm LH},0} \subset {\mathfrak F}_U^{\rm LH}$ is a sub Lie algebra; however, it is not an ideal. 
On the other hand, ${\mathfrak G}_U$ is an ideal of ${\mathfrak F}_U^{\rm LH}$; 
we consider an scenario in which 
${\mathfrak G}_U$ is the maximal ideal contained in ${\mathfrak F}_U^{\rm LH} \cap \hat{\mathfrak G}_U$. 
That this scenario takes place should be tested in each field theory of interest. 
Under this assumption, there is an assignment of a unique element of ${\mathfrak F}_U^{\rm LH} / {\mathfrak G}_U$ 
to each element of ${\rm HOC}_U$. 
The currents that are mapped to the zero element (i.e. currents associated to gauge vector fields) 
are those that are ${\sf d_v}$-constant up to a boundary term 
(i.e. 
$({\sf d_v} F - {\sf d_h}\sigma^{F})|_{{\mathfrak F}_U , \mathcal{E}_{L, U}} = 0$)
such that 
$(({\sf d_v}\sigma^F - \lambda^F)|_{{\mathfrak F}_U, \mathcal{E}_{L, U}}
)|_{\partial U} = 0$
for some form with 
${\sf d_h} \lambda^F |_{{\mathfrak F}_U, \mathcal{E}_{L, U}}= 0$; 
in other words, these observable currents are field independent up to 
a pure boundary term 
which when evaluated over $\partial U$ yields a field independent constant.

%

\begin{remark}[Every locally Hamiltonian vector field has corresponding observable currents]
\label{EveryHVFhasOCs}
The obstruction for the existence of a Hamiltonian observable current associated to a given locally Hamiltonian vector field is the non triviality of 
the vertical cohomology group $H_{\sf d_v}^{n-1, 1}(J Y|_U)$, 
where furthermore there is an equivalence relation among vertically closed forms differing by horizontally exact terms ${\sf d_h}\sigma^V$ such that 
$(\sigma^V|_{{\mathfrak F}_U, \mathcal{E}_{L, U}})|_{\partial U} = 0$, 
and a corresponding equivalence relation among vertically exact forms that differ by horizontally exact terms ${\sf d_h}\sigma^F$ 
such that 
$(({\sf d_v}\sigma^F - \lambda^F)|_{{\mathfrak F}_U, \mathcal{E}_{L, U}}
)|_{\partial U} = 0$ for some form with 
${\sf d_h} \lambda^F |_{{\mathfrak F}_U, \mathcal{E}_{L, U}}= 0$. 
However, since we allow 
observable currents that are defined only on neighborhoods of $j \phi (U)$ 
the mentioned cohomology group is trivial. Thus,  
there is a Hamiltonian observable current for any given 
locally Hamiltonian vector field in a properly adjusted domain of definition. 
For any $j \phi (U)$ in the domain of definition of $V \in {\mathfrak F}_U^{\rm LH}$ there is a neighborhood containing it which is the domain of definition of an observable current $F \in {\rm OC}_U$ satisfying formula (\ref{HOCs}). 
This fact indicates that there are plenty of observable currents, and will be a key ingredient for proving that the physical observables calculable from observable currents can distinguish gauge inequivalent solutions (see Section \ref{OCsSeparate}). 

This feature of Definition \ref{dfn:HOC} is the reason why in this work we mention locally Hamiltonian vector fields and never Hamiltonian vector fields. It is appropriate to say that in a framework allowing for locally defined observable currents all locally Hamiltonian vector fields are automatically Hamiltonian. 
\end{remark}

\begin{remark}[Further support for the definition of gauge vector fields]
\label{FurtherSup}
Notice that every conserved current satisfying equation (\ref{HOCs}) 
is gauge invariant; Condition 1 and Condition 2 are essential for this. 
Thus, regarding observable currents as generators of multisymplectomorphisms gives further support for 
our definition of gauge vector fields. 
\end{remark}

\begin{remark}[Hamiltonian observables]
Equation (\ref{HOCs}) induces on ${\rm Obs}_\Sigma$ 
the all important equation of symplectic geometry; given any vector field in the space of solutions $w$ we have
\[
\iota_w df_\Sigma = - \iota_w \iota_v \omega_{L \Sigma} - 
\int_{\partial \Sigma} j\phi^\ast \iota_{j W } \sigma^F 
.
\] 
The resulting space of Hamiltonian observables is 
the complete subspace 
denoted by \\
${\rm HObs}_\Sigma \subset {\rm Obs}_\Sigma$. 

Notice that the familiar formula is induced in the absence of a boundary, but 
in the the general case $df_\Sigma$ acquires 
an extra term. 

Strict Hamiltonian observable currents ${\rm sHOC}_U \subset {\rm HOC}_U$ induce strict Hamiltonian observables 
${\rm sHObs}_\Sigma \subset {\rm HObs}_\Sigma \subset {\rm Obs}_\Sigma$ which obey the equation 
$df_\Sigma = - \iota_v \omega_{L \Sigma}$. 
\end{remark}

Of course the first examples of Hamiltonian observable currents 
are Noether currents. 
\begin{tma}\label{NoetherThm}
A Noether current 
$
	N^V
	=
	- \iota_{jV}\Theta_L - \sigma_L^{V} 
$ satisfies the equation 
\[
({\sf d_v} N^V + \iota_{jV} \Omega_L - {\sf d_h}\sigma_N^V)|_{{\mathfrak F}_U, \mathcal{E}_{L, U}} = 0 . 
\]
Thus $N^V \in \widehat{\rm HOC}_U$. 
If 
$(({\sf d_v}\sigma_N^V - \lambda^F)|_{{\mathfrak F}_U, \mathcal{E}_{L, U}}
)|_{\partial U} = 0$, for some form $\lambda^F$ with 
${\sf d_h} \lambda^F |_{{\mathfrak F}_U, \mathcal{E}_{L, U}}= 0$, 
then it is is a Hamiltonian observable current according to our definition 
$N^V \in {\rm HOC}_U$ with $V$ as its Hamiltonian vector field. 
\end{tma}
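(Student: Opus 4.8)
The plan is to reduce the asserted identity to the variational formula (\ref{dvL}) together with the defining relation $\Omega_L = -\mathsf{d_v}\Theta_L$, and then to dispose of a leftover vertical one-form by invoking horizontal exactness on shell. That $N^V$ is already a conserved, gauge invariant current (i.e. $N^V \in {\rm OC}_U$) is supplied by the Noether theorem stated above, so the only new content is the Hamiltonian equation. First I would compute $\mathsf{d_v} N^V$ directly from $N^V = -\iota_{jV}\Theta_L - \sigma_L^{V}$. Applying the vertical Cartan formula $\mathscr{L}_{jV} = \iota_{jV}\mathsf{d_v} + \mathsf{d_v}\iota_{jV}$ (the same one used in the derivation of (\ref{LGEQ})) gives $\mathsf{d_v}\iota_{jV}\Theta_L = \mathscr{L}_{jV}\Theta_L - \iota_{jV}\mathsf{d_v}\Theta_L = \mathscr{L}_{jV}\Theta_L + \iota_{jV}\Omega_L$, so that
\[
\mathsf{d_v} N^V + \iota_{jV}\Omega_L = -\mathscr{L}_{jV}\Theta_L - \mathsf{d_v}\sigma_L^{V} .
\]
Thus the whole problem reduces to showing that the right-hand side, an $(n-1,1)$ form, is horizontally exact when restricted to ${\mathfrak F}_U$ and evaluated on $\mathcal{E}_{L,U}$; the form realizing this exactness is precisely the boundary term $\sigma_N^V$.

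Second I would produce the horizontal primitive using the Lagrange symmetry hypothesis. Applying $\mathscr{L}_{jV}$ to the variational formula $\mathsf{d_v} L = E(L) + \mathsf{d_h}\Theta_L$, and using that $\mathscr{L}_{jV}$ commutes with both $\mathsf{d_v}$ and $\mathsf{d_h}$ together with $\mathscr{L}_{jV} L = \mathsf{d_h}\sigma_L^{V}$ and $\mathsf{d_v}\mathsf{d_h} = -\mathsf{d_h}\mathsf{d_v}$, I obtain
\[
\mathsf{d_h}\!\left(\mathscr{L}_{jV}\Theta_L + \mathsf{d_v}\sigma_L^{V}\right) = -\,\mathscr{L}_{jV} E(L) .
\]
On $\mathcal{E}_{L,U}$ the term $\mathscr{L}_{jV} E(L)$ is horizontally exact---this is the linearized field equation built into the definition of ${\mathfrak F}_U$, which a genuine Lagrange symmetry satisfies since it maps solutions to solutions---so the bracketed $(n-1,1)$ form is horizontally closed on shell. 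A horizontal homotopy argument of Takens type \cite{Takens(1979)}, the same device invoked to prove gauge invariance in the Noether theorem above, then upgrades horizontal closedness in this bidegree to horizontal exactness, yielding a form $\sigma_N^V$ with $-\mathscr{L}_{jV}\Theta_L - \mathsf{d_v}\sigma_L^{V} = \mathsf{d_h}\sigma_N^V$ on ${\mathfrak F}_U,\mathcal{E}_{L,U}$. Combined with the first step this is exactly the asserted equation, whence $N^V \in \widehat{\rm HOC}_U$.

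Finally, the passage from $\widehat{\rm HOC}_U$ to ${\rm HOC}_U$ is immediate: the hypothesis provides a form $\lambda^F$ with $\mathsf{d_h}\lambda^F|_{{\mathfrak F}_U,\mathcal{E}_{L,U}} = 0$ such that $((\mathsf{d_v}\sigma_N^V - \lambda^F)|_{{\mathfrak F}_U,\mathcal{E}_{L,U}})|_{\partial U} = 0$, which is precisely the boundary condition of Definition \ref{dfn:HOC}; setting $\sigma^V = -\mathsf{d_v}\sigma_N^V + \lambda^F$ then exhibits $V$ as the associated locally Hamiltonian vector field, so $N^V \in {\rm HOC}_U$ with Hamiltonian vector field $V$. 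I expect the main obstacle to be the middle step: controlling the $\mathscr{L}_{jV} E(L)$ term and invoking the Takens-type horizontal homotopy carefully enough that the resulting $\sigma_N^V$ is a genuine local form on a neighborhood inside $\mathcal{E}_{L,U}$ compatible with the restriction to ${\mathfrak F}_U$, rather than merely an on-shell equality of horizontal cohomology classes. The algebraic manipulations of the first and last steps are routine once the bicomplex signs are fixed.
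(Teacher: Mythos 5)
Your first and last steps are correct and coincide with the paper's argument: the Cartan-formula computation giving $\mathsf{d_v}N^V + \iota_{jV}\Omega_L = -\mathscr{L}_{jV}\Theta_L - \mathsf{d_v}\sigma_L^{V}$, and the observation that the hypothesis on $\sigma_N^V$ is verbatim the boundary condition of Definition \ref{dfn:HOC}, with $\sigma^V = -\mathsf{d_v}\sigma_N^V + \lambda^F$ exhibiting $V$ as the associated locally Hamiltonian vector field. The gap is in the middle step, and it is exactly the one you flagged as the ``main obstacle'': you establish only that $\mathscr{L}_{jV}E(L)$ is horizontally exact \emph{on} $\mathcal{E}_{L,U}$ (the linearized field equation, which any solution-preserving vector field satisfies), so that the form $\mathscr{L}_{jV}\Theta_L + \mathsf{d_v}\sigma_L^{V}$ is $\mathsf{d_h}$-closed only on shell, and then you invoke Takens' lemma. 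But that lemma, as stated in the appendix, applies to forms in $\mathsf{\Omega}^{n-1,1}(J^kY)$ that are $\mathsf{d_h}$-closed on the jet bundle itself; it says nothing about forms closed only upon restriction to the equation subvariety. On-shell horizontal closedness does not upgrade to on-shell horizontal exactness in general --- the nontriviality of precisely this kind of cohomology of $\mathcal{E}_{L,U}$ is what makes conserved currents a nontrivial subject in the first place --- so as written the step does not go through, and no amount of care in applying the homotopy can repair it without new input. (There is also a smaller slip: if $\mathscr{L}_{jV}E(L)$ is merely exact on shell, $= \mathsf{d_h}\mu$ say, it is the bracketed form \emph{plus} $\mu$ that is closed on shell, not the bracketed form itself.)

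The missing idea, which is how the paper closes the argument, is that a Lagrange symmetry satisfies a much stronger, fully off-shell identity. Since the Euler operator commutes with Lie derivatives along prolonged evolutionary vector fields, and since $E = \mathsf{I}\,\mathsf{d_v}$ annihilates horizontally exact forms (because $\mathsf{I}\circ\mathsf{d_h} = 0$), one has
\[
\mathscr{L}_{jV}E(L) \;=\; E(\mathscr{L}_{jV}L) \;=\; E(\mathsf{d_h}\sigma_L^{V}) \;=\; 0
\]
identically on $JY$, not merely up to $\mathsf{d_h}$-exact terms on $\mathcal{E}_{L,U}$. Feeding this into your own identity $\mathsf{d_h}\bigl(\mathscr{L}_{jV}\Theta_L + \mathsf{d_v}\sigma_L^{V}\bigr) = -\mathscr{L}_{jV}E(L)$ shows that $\mathscr{L}_{jV}\Theta_L + \mathsf{d_v}\sigma_L^{V}$ is $\mathsf{d_h}$-closed everywhere, which is exactly the hypothesis of Takens' lemma \cite{Takens(1979)}; the lemma then produces a genuine local form $\sigma_N^V$ with $-\mathscr{L}_{jV}\Theta_L - \mathsf{d_v}\sigma_L^{V} = \mathsf{d_h}\sigma_N^V$, and the asserted equation in fact holds off shell, the restriction to ${\mathfrak F}_U, \mathcal{E}_{L,U}$ in the statement being harmless. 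With this single replacement your proof becomes the paper's proof.
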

A proof for the existence of such a horizontally exact term requires the use of Takens' lemma \cite{Takens(1979)}, and it is stated in the appendix; for details see \cite{deligne1999classical}. 
Notice that it is not a priory clear if 
the boundary term satisfies the required condition over $\partial U$; 
in general Noether currents are not Hamiltonian Observable currents when there are boundaries.

\begin{remark}[Conserved charges associated to ``would be gauge'' symmetries]
\label{ConsCharWouldBeGauge}
A result of Wald and Lee 
\cite{lee1990local, Vitagliano(2009)} says that 
an element $X$ of a family of Lagrange symmetries depending on parameters with possible arbitrary local variation satisfies Condition 1 of the definition of gauge vector fields \cite{lee1990local, Vitagliano(2009)}, 
and it has 
a corresponding conserved Noether current $N^X$. 
If $X$ also satisfies 
the locality condition in the definition of gauge vector fields 
requiring that $jX |_{\mathcal{E}_{L, U} , \partial U} = 0$, 
then 
$(\sigma_L^{X}|_{{\mathfrak F}_U, \mathcal{E}_{L, U}})|_{\partial U} = 0$
which implies that 
$N^X \in {\rm HOC}_U$ and that 
$n^X_\Sigma (\phi) = \int_\Sigma j\phi^\ast N^X = 0$ 
for any hypersurface with 
with $\partial \Sigma \subset \partial U$. 

Now consider one of these generators of local Lagrangian symmetries $X$ which does not vanishing over $\partial U$, a ``would be gauge'' vector field. 
The Noether charge 
$n^X_\Sigma (\phi) = \int_\Sigma j\phi^\ast N^X$ would vanish if 
$\Sigma$ is a cycle; thus, the current must be horizontally exact 
$N^X = {\sf d_h}\nu_X$. In our case we have 
\[
n^X_\Sigma (\phi) = \int_{\partial \Sigma} j\phi^\ast \nu_{X} , 
\] 
which would not vanish in general. 
Moreover, since any hypersurface $\Sigma'$ homologous with $\Sigma$ has the same boundary our ability to move the hypersurface to a region where 
the vector field vanishes (as used in the argument in the absence of boundaries) is crucially diminished, and the boundary integral in general does not vanish. 
This is consistent with the study of boundary integrals by Regge and Teitelboim \cite{regge1974role}. 

In general the boundary term $\sigma_N^X$ 
participating in equation (\ref{HOCs}) for 
a Noether current $N^X$ of this type will not follow the condition over $\partial U$ making the current not Hamiltonian according to our definition. 
One aspect of this is that two different vector fields $X, X'$ both satisfying equation (\ref{HOCs}) together with $N^X$ differ by an element of 
$\hat{\mathfrak G}_U$ which is a larger set than ${\mathfrak G}_U$. 


The vertical differential of the Noether current 
associated to a ``would be gauge'' vector field (one which does not satisfy the locality condition) 
is a pure boundary term. The differential of the corresponding charge is 
\[
d n^X_\Sigma (\phi) = -\int_{\partial \Sigma} j^1\phi^\ast {\sf d_v}\nu^{X} 
= \int_{\partial \Sigma} j^1\phi^\ast (\sigma_N^X - \rho^X) . 
\] 
In abelian Chern-Simons theory over a bounded domain 
Noether charges associated to would be gauge vector fields do not vanish. 
The interested reader can perform the calculation 
following the notation presented in the example of Section \ref{Ex} 
and arrive to the results recently presented in \cite{Geiller:2017xad} expressed in our language. 
\end{remark}

\begin{tma}\label{pro:F_VW-Hamiltonian}
Let $F^{VW}$ be a symplectic product observable current associated to two locally Hamiltonian vector fields $V,W \in {\mathfrak F}^{\rm LH}_U$. Then $F^{VW} \in {\rm sHOC}_U \subset {\rm HOC}_U$ with associated locally Hamiltonian vector field $[V,W] \in  \mathfrak{F}^{\rm LH}_U$ 
\[
	\mathsf{d_v}F^{VW}=-\iota_{j[V,W]}\Omega_L + \mathsf{d_h} \sigma^{VW}
\]
where 
\[
	\sigma^{VW}
	=
	\iota_{jW}\sigma^V - \iota_{jV}\sigma^W .
\]
\end{tma}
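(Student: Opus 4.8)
The plan is to compute $\mathsf{d_v}F^{VW}$ directly with the Cartan calculus of the variational bicomplex, treating $jV$ and $jW$ as the prolongations of the evolutionary vector fields $V,W$, and with the understanding that every identity below holds in the restricted sense of equation (\ref{HOCs}), i.e.\ evaluated on $\mathcal{E}_{L,U}$ with arguments in ${\mathfrak F}_U$. First I would record the three structural facts to be used: (a) since $\Omega_L = -\mathsf{d_v}\Theta_L$, we have $\mathsf{d_v}\Omega_L = 0$; (b) because $jV,jW$ are prolongations their Lie derivatives preserve the bidegree of the bicomplex, which forces both the vertical Cartan formula $\mathscr{L}_{jW} = \mathsf{d_v}\iota_{jW} + \iota_{jW}\mathsf{d_v}$ and the horizontal anticommutation $\iota_{jV}\mathsf{d_h} = -\mathsf{d_h}\iota_{jV}$; and (c) the locally Hamiltonian hypothesis, $\mathscr{L}_{jV}\Omega_L = \mathsf{d_h}\sigma^V$ and $\mathscr{L}_{jW}\Omega_L = \mathsf{d_h}\sigma^W$, each boundary term vanishing over $\partial U$ as in Definition \ref{dfn:HOC}.

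I would then carry out the computation. Writing $F^{VW}=\iota_{jW}\iota_{jV}\Omega_L$ and applying $\mathsf{d_v}\iota_{jW} = \mathscr{L}_{jW} - \iota_{jW}\mathsf{d_v}$ from (b), together with $\mathsf{d_v}\iota_{jV}\Omega_L = \mathscr{L}_{jV}\Omega_L$ from (a), gives
\[
\mathsf{d_v}F^{VW} = \mathscr{L}_{jW}\iota_{jV}\Omega_L - \iota_{jW}\mathscr{L}_{jV}\Omega_L .
\]
Pushing $\mathscr{L}_{jW}$ through $\iota_{jV}$ with $[\mathscr{L}_{jW},\iota_{jV}] = \iota_{[jW,jV]}$ and the prolongation identity $[jW,jV] = j[W,V] = -j[V,W]$ isolates the term $-\iota_{j[V,W]}\Omega_L$, leaving $\iota_{jV}\mathscr{L}_{jW}\Omega_L - \iota_{jW}\mathscr{L}_{jV}\Omega_L$. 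Substituting (c) and using the horizontal anticommutation from (b) turns this into
\[
-\mathsf{d_h}\iota_{jV}\sigma^W + \mathsf{d_h}\iota_{jW}\sigma^V = \mathsf{d_h}\bigl(\iota_{jW}\sigma^V - \iota_{jV}\sigma^W\bigr) = \mathsf{d_h}\sigma^{VW} .
\]
Assembling the pieces gives exactly the asserted identity, and comparison with equation (\ref{HOCs}) identifies $[V,W]$ as the associated vector field; that $[V,W]\in{\mathfrak F}^{\rm LH}_U$ follows since ${\mathfrak F}^{\rm LH}_U$ is closed under the bracket, with $\sigma^{[V,W]} = \mathscr{L}_{jV}\sigma^W - \mathscr{L}_{jW}\sigma^V$.

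It remains to upgrade the conclusion from $\widehat{\rm HOC}_U$ to ${\rm sHOC}_U$, and this is where the explicit formula for $\sigma^{VW}$ does the work: the locally Hamiltonian hypothesis gives $(\sigma^V|_{{\mathfrak F}_U,\mathcal{E}_{L,U}})|_{\partial U}=0$ and likewise for $\sigma^W$, and contracting with $jV$ or $jW$ (both in ${\mathfrak F}_U$) preserves this vanishing, so $(\sigma^{VW}|_{{\mathfrak F}_U,\mathcal{E}_{L,U}})|_{\partial U}=0$, which is precisely the strict boundary condition. The inclusion ${\rm sHOC}_U \subset {\rm HOC}_U$ then follows because the strict condition implies the defining condition with $\lambda^F=0$. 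That $F^{VW}$ is an observable current at all is already guaranteed by the earlier symplectic-product theorem, since ${\mathfrak F}^{\rm LH}_U \subset {\mathfrak F}^{{\mathfrak G}}_U$.

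The step I expect to be most delicate is not the formal manipulation but keeping every identity honest under the simultaneous restriction to $\mathcal{E}_{L,U}$ and to arguments in ${\mathfrak F}_U$: the relations in (c) hold only in this restricted sense, so I must verify that applying $\iota_{jV}$, $\iota_{jW}$, and $\mathsf{d_h}$ keeps us within the restriction --- in particular that $\mathsf{d_h}$ is tangent to the prolongation of $\mathcal{E}_{L,U}$ and that the horizontal anticommutation survives restriction. The boundary bookkeeping for strictness is the secondary point requiring care, since it is exactly what separates ${\rm sHOC}_U$ from the weaker ${\rm HOC}_U$ and $\widehat{\rm HOC}_U$.
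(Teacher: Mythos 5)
Your core computation is exactly the paper's: the same Cartan-calculus expansion giving $\mathsf{d_v}F^{VW}=-\iota_{j[V,W]}\Omega_L+\iota_{jV}\mathscr{L}_{jW}\Omega_L-\iota_{jW}\mathscr{L}_{jV}\Omega_L$, the same substitution of the locally Hamiltonian hypothesis, and the same observation that $\sigma^{VW}=\iota_{jW}\sigma^V-\iota_{jV}\sigma^W$ inherits the vanishing over $\partial U$ because contraction with elements of ${\mathfrak F}_U$ preserves the restricted vanishing; that establishes $F^{VW}\in{\rm sHOC}_U$. The gap is your final sentence: you claim ${\rm sHOC}_U\subset{\rm HOC}_U$ ``follows because the strict condition implies the defining condition with $\lambda^F=0$.'' That implication is precisely what must be proven, and it is not formal. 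With $\lambda^F=0$ the ${\rm HOC}_U$ condition reads $(({\sf d_v}\sigma^F)|_{{\mathfrak F}_U,\mathcal{E}_{L,U}})|_{\partial U}=0$, and in general the vanishing of a form on a locus says nothing about the vanishing of its differential there (think of $f=y$ on $\{y=0\}$): ${\sf d_v}$ differentiates in vertical directions that need not respect the restriction to $\mathcal{E}_{L,U}$ and to arguments in ${\mathfrak F}_U$ under which the strict condition holds.

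This implication is what the second half of the paper's proof establishes, by showing that the restricted vertical derivative only ever sees ``tangential'' differentiations: for arbitrary $Y,Z\in{\mathfrak F}_U$ one computes, on $\mathcal{E}_{L,U}$ over $\partial U$,
\[
\iota_{jY}\iota_{jZ}{\sf d_v}\sigma^F
=\iota_{jY}\bigl(\mathscr{L}_{jZ}-{\sf d_v}\iota_{jZ}\bigr)\sigma^F
=-\mathscr{L}_{jY}\bigl(\iota_{jZ}\sigma^F\bigr)
=-\bigl(\iota_{j[Y,Z]}+\iota_{jZ}\mathscr{L}_{jY}\bigr)\sigma^F
=0 ,
\]
where one uses: that $\iota_{jY}$ annihilates the vertical-degree-zero form $\iota_{jZ}\sigma^F$, so $\iota_{jY}{\sf d_v}=\mathscr{L}_{jY}$ on it; that ${\mathfrak F}_U$ is closed under the bracket; and that contractions with, and Lie derivatives along, elements of ${\mathfrak F}_U$ preserve the restricted vanishing over $\partial U$, so every surviving term is killed by the strict condition. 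Without this argument (or an equivalent one) your conclusion $F^{VW}\in{\rm HOC}_U$ is unsupported. A secondary, related point: you justify $[V,W]\in{\mathfrak F}^{\rm LH}_U$ by appealing to bracket-closure of ${\mathfrak F}^{\rm LH}_U$, but in the paper that closure is Corollary \ref{corollary1}, deduced \emph{from} this theorem; the non-circular route is that once equation (\ref{HOCs}) holds with a boundary term satisfying the ${\rm HOC}_U$ condition, Definition \ref{dfn:HOC} itself produces $\sigma^{[V,W]}=-{\sf d_v}\sigma^{VW}+\lambda^F$ as a valid boundary term for $[V,W]$ — which again runs exactly through the implication you skipped.
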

\begin{proof}
A short calculation yields 
$\mathsf{d_v}F^{VW}=-\iota_{[jV,jW]}\Omega_L + \iota_{jV} \mathscr{L}_{jW}\Omega_L - \iota_{jW} \mathscr{L}_{jV}\Omega_L$. 
The proof is completed noticing that 
$(\mathscr{L}_{jV}\Omega_L - \mathsf{d_h} \sigma^V)|_{{\mathfrak F}_U, \mathcal{E}_{L, U}} = 0$ and 
$(\mathscr{L}_{jW}\Omega_L - \mathsf{d_h} \sigma^W)|_{{\mathfrak F}_U, \mathcal{E}_{L, U}} = 0$, and using 
$[jV,jW] = j[V,W]$ (for a brief explanation see the appendix). 

We need to prove that the boundary term satisfies the condition over $\partial U$ demanded by the definition. 
Notice first that $\sigma^{VW}$ satisfies the condition over $\partial U$ that makes the observable current a strict Hamiltonian observable current, 
$F^{VW}\in {\rm sHOC}_U$. Below we will prove that ${\rm sHOC}_U \subset {\rm HOC}_U$. 

Consider 
$({\sf d_v} F + \iota_{jV} \Omega_L - {\sf d_h}\sigma^{F})|_{{\mathfrak F}_U , \mathcal{E}_{L, U}} = 0$ with 
$\sigma^{F}|_{{\mathfrak F}_U , \mathcal{E}_{L, U}})|_{\partial U} = 0$, and an arbitrary pair of vector fields $Y, Z \in {\mathfrak F}_U$. 
We calculate ${\sf d_v} \sigma^{F}$ and 
proceed eliminating terms due to the behavior of the boundary term over $\partial U$ 
\begin{eqnarray*}
\iota_{jY} \iota_{jZ} {\sf d_v} \sigma^{F}|_{\mathcal{E}_{L, U}} 
&=& \iota_{jY}( \mathscr{L}_{jZ} - 
{\sf d_v} \iota_{jZ} ) \sigma^{F} |_{\mathcal{E}_{L, U}} = 
- \iota_{jY} {\sf d_v} \iota_{jZ} \sigma^{F} |_{\mathcal{E}_{L, U}} = - \mathscr{L}_{jY}( \iota_{jZ} \sigma^{F} )|_{\mathcal{E}_{L, U}}  \\
&=&  
- ( \iota_{j[Y, Z]} + \iota_{jZ} \mathscr{L}_{jY} ) \sigma^{F} |_{\mathcal{E}_{L, U}} = 0. 
\end{eqnarray*}
\end{proof}

Apart from describing a property of an important family of observable currents, the previous result has the following corollary. 
\begin{cor}\label{corollary1}
${\mathfrak F}_U^{\rm LH} 
\subset {\mathfrak F}_U^{\mathfrak G}$ 
is a Lie subalgebra, 
and $\left[  {\mathfrak F}_U^{\rm LH}, {\mathfrak F}_U^{\rm LH}\right]
\subset 	{\mathfrak F}_U^{\rm LH}$  
is a Lie ideal corresponding to symplectic product observable currents. 
Additionally, this structure is compatible with reduction by gauge vector fields 
producing the natural inclusions 
\[
\left[  {\mathfrak F}_U^{\rm LH}, {\mathfrak F}_U^{\rm LH}\right]/{\mathfrak G}_U \to 
{\mathfrak F}_U^{\rm LH}/{\mathfrak G}_U \to 
{\mathfrak F}_U//{\mathfrak G}_U:= {\mathfrak F}_U^{\mathfrak G}/{\mathfrak G}_U . 
\]
\end{cor}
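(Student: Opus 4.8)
The plan is to extract all of the structural content from Theorem \ref{pro:F_VW-Hamiltonian} together with the inclusions already recorded earlier in this section, so that the corollary becomes a chain of elementary Lie-algebraic observations. First I would recall the two facts stated just before the corollary: gauge vector fields are locally Hamiltonian, ${\mathfrak G}_U \subset {\mathfrak F}_U^{\rm LH}$, and locally Hamiltonian vector fields are ${\mathfrak G}_U$-preserving, ${\mathfrak F}_U^{\rm LH} \subset {\mathfrak F}_U^{\mathfrak G}$. The subalgebra claim then reduces to closure under the bracket, which is exactly the content of Theorem \ref{pro:F_VW-Hamiltonian}: for $V,W \in {\mathfrak F}_U^{\rm LH}$ the commutator $[V,W]$ is again locally Hamiltonian, with associated current $F^{VW}$. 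Hence ${\mathfrak F}_U^{\rm LH}$ is closed under $[\,\cdot\,,\cdot\,]$, and being a subspace of the Lie algebra ${\mathfrak F}_U^{\mathfrak G}$ it is a Lie subalgebra.

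For the ideal claim I would invoke the standard fact that the derived subalgebra of any Lie algebra is automatically a Lie ideal: for $Z \in {\mathfrak F}_U^{\rm LH}$ and a generator $[V,W]$ of the derived algebra, $[Z,[V,W]]$ is the bracket of $Z$ with an element of ${\mathfrak F}_U^{\rm LH}$, hence again lies in $[{\mathfrak F}_U^{\rm LH},{\mathfrak F}_U^{\rm LH}]$. Theorem \ref{pro:F_VW-Hamiltonian} simultaneously supplies the identification with symplectic product observable currents, since each bracket $[V,W]$ is precisely the locally Hamiltonian vector field associated to $F^{VW} \in {\rm sHOC}_U$; this is what the phrase ``corresponding to symplectic product observable currents'' records.

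For the chain of inclusions I would use that ${\mathfrak G}_U$ is a Lie ideal of every algebra in sight. It is an ideal of ${\mathfrak F}_U^{\mathfrak G}$ (so that ${\mathfrak F}_U // {\mathfrak G}_U$ is defined, cf.\ Remark \ref{GeqCl}) and, since ${\mathfrak G}_U \subset {\mathfrak F}_U^{\rm LH} \subset {\mathfrak F}_U^{\mathfrak G}$, also an ideal of the subalgebra ${\mathfrak F}_U^{\rm LH}$; its intersection with the ideal $[{\mathfrak F}_U^{\rm LH},{\mathfrak F}_U^{\rm LH}]$ is an ideal there. Each inclusion of numerators $[{\mathfrak F}_U^{\rm LH},{\mathfrak F}_U^{\rm LH}] \subset {\mathfrak F}_U^{\rm LH} \subset {\mathfrak F}_U^{\mathfrak G}$ therefore descends to a homomorphism of the corresponding quotients, and injectivity follows because in each case the preimage of the zero coset is exactly the common ideal ${\mathfrak G}_U$ (respectively its intersection with the derived algebra). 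The rightmost arrow is the cleanest: $V + {\mathfrak G}_U \mapsto V + {\mathfrak G}_U$ has kernel ${\mathfrak F}_U^{\rm LH} \cap {\mathfrak G}_U = {\mathfrak G}_U$, hence is injective.

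The step requiring care rather than genuine difficulty — and thus the main obstacle — is the precise meaning of the leftmost quotient. In general ${\mathfrak G}_U$ need not be contained in $[{\mathfrak F}_U^{\rm LH},{\mathfrak F}_U^{\rm LH}]$, so the symbol $[{\mathfrak F}_U^{\rm LH},{\mathfrak F}_U^{\rm LH}]/{\mathfrak G}_U$ must be read as the image of the derived algebra in the reduced Lie algebra ${\mathfrak F}_U^{\rm LH}/{\mathfrak G}_U$, equivalently as $[{\mathfrak F}_U^{\rm LH},{\mathfrak F}_U^{\rm LH}]\big/\bigl({\mathfrak G}_U \cap [{\mathfrak F}_U^{\rm LH},{\mathfrak F}_U^{\rm LH}]\bigr)$, which is the derived subalgebra of ${\mathfrak F}_U^{\rm LH}/{\mathfrak G}_U$. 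With this reading the leftmost arrow is the inclusion of that derived subalgebra, and the asserted compatibility with gauge reduction is the identity $\pi\bigl([{\mathfrak F}_U^{\rm LH},{\mathfrak F}_U^{\rm LH}]\bigr) = \bigl[\,\pi({\mathfrak F}_U^{\rm LH}),\pi({\mathfrak F}_U^{\rm LH})\,\bigr]$ for the gauge projection $\pi$, which holds for any surjective Lie algebra homomorphism.
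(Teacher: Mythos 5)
Your proposal is correct and follows essentially the paper's own route: the paper states the corollary as an immediate consequence of Theorem \ref{pro:F_VW-Hamiltonian} (closure of ${\mathfrak F}_U^{\rm LH}$ under the bracket, with $[V,W]$ realized as the Hamiltonian vector field of $F^{VW}$) combined with the inclusions ${\mathfrak G}_U \subset {\mathfrak F}_U^{\rm LH} \subset {\mathfrak F}_U^{\mathfrak G}$ and the ideal property of ${\mathfrak G}_U$ from Remark \ref{GeqCl}, which is exactly what you spell out. Your extra care about reading $\left[{\mathfrak F}_U^{\rm LH},{\mathfrak F}_U^{\rm LH}\right]/{\mathfrak G}_U$ as the derived subalgebra of ${\mathfrak F}_U^{\rm LH}/{\mathfrak G}_U$ is a legitimate clarification of a point the paper leaves implicit, not a departure from its argument.
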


In symplectic geometry every function of phase space has an associated Hamiltonian vector field. 
However, 
from the families of examples given above (Noether and symplectic product observable currents) 
we see that 
for field theories over confined spacetime domains 
not all observable currents are Hamiltonian observable currents. 
However, if 
$U = \Sigma \times [0, 1]$ we can use the weaker locality requirement for gauge vector fields 
(requiring them to vanish on $\partial \Sigma \times [0, 1]$
instead of vanishing in the whole $\partial U$) 
and end up with definitions according to which, in the case of $\Sigma$ being a cycle, 
all the families of observables currents exhibited here are Hamiltonian. 
Moreover, 
section \ref{OCsSeparate} proves that, 
in the presence of a Cauchy surface with no boundary, 
Hamiltonian observable currents are capable of distinguishing gauge inequivalent solutions. 
Then any observable induced by an observable current maybe approximated by observables 
calculable from Hamiltonian observable currents, which appealing to completeness implies that 
every observable current is Hamiltonian. 
Thus, 
in the presence of a Cauchy surface with no boundary every observable current is Hamiltonian; 
the corresponding statement in a context closely related to ours was rigorously proven by Vitagliano in \cite{Vitagliano(2009)}.


\section{A bracket for observable currents and \\
the Poisson algebra of local observables}
\label{BracketSection}

%
%


Given two Hamiltonian vector fields $V, W$, with associated Hamiltonian observable currents $F, G \in {\rm HOC}_U$, Corollary \ref{corollary1} tells us that 
their Lie product $[V, W]$ is 
another Hamiltonian vector field. We would like to find a Hamiltonian observable current associated to $[ V, W]$. 
It would be even nicer if the resulting observable current could be calculated only from $F$ and $G$ and the assignment made the vector space of 
Hamiltonian observable currents ${\rm HOC}_U$ into a Lie algebra extending the Lie algebra of locally Hamiltonian vector fields ${\mathfrak F}_U^{\rm LH}$. 
Below we will show several different Hamiltonian observable currents which have 
$[ V, W]$ as their Hamiltonian vector field; they differ by horizontally exact terms vanishing over $\partial U$. 
Thus, when these different candidates 
are integrated over hypersurfaces with $\partial \Sigma \subset \partial U$ 
they all coincide. 
In this way ${\rm HObs}_\Sigma$ acquires the structure of a Lie algebra. 
Furthermore, if there is a Cauchy surface $\Sigma$ without boundary 
we will see below that ${\rm HObs}_\Sigma$ becomes a Poisson  algebra. 

Consider any two Hamiltonian observable currents $F, G \in {\rm HOC}_U$ with choices of locally Hamiltonian vector fields $V,W \in\mathfrak{F}^{\rm LH}_U,$ respectively. 
We have already shown (see Proposition \ref{pro:F_VW-Hamiltonian}) that $F^{VW} \in {\rm HOC}_U$ with Hamiltonian vector field $[V, W]$. This gives us a natural definition of a bracket among Hamiltonian observable currents; however, different choices of Hamiltonian vector fields for $F$ and $G$ lead to results that differ by a boundary term. 
Here is the formal statement. 
\begin{dfn}[Bracket for observable currents] Let $F, G \in {\rm HOC}_U$ with choices of locally Hamiltonian vector fields $V,W \in \mathfrak{F}^{\rm LH}_U$ respectively. The bracket 
\begin{equation}\label{Poisson-Lie brackets}
	\{F^V,G^W\} = \iota_{jW} \iota_{jV}	\Omega_L 
\end{equation}
defines a Hamiltonian observable current $\{F^V,G^W\} \in {\rm HOC}_U$. 
\end{dfn}

\begin{remark}[Dependence on the choice of Hamiltonian vector fields]
Consider $V_1, V_2$ locally Hamiltonian vector fields for $F \in {\rm HOC}_U$. Thus, 
\[
\{F^{V_1},G^W\} - \{F^{V_2},G^W\}
= \iota_{jW} \iota_{jV_1-jV_2}	\Omega_L = - {\sf d_h} \iota_{jW} \rho^{V_1-V_2}  
\]
since, as shown just after Definition \ref{dfn:HOC}, $V_1-V_2 \in \hat{\mathfrak G}_U$. Thus, different choices of locally Hamiltonian vector fields lead to observable currents differing by a boundary term that is not simple in principle. 
If the scenario described 
after Definition \ref{dfn:HOC} takes place in the field theory of interest 
$V_1-V_2 \in {\mathfrak G}_U$ and the boundary term vanishes over $\partial U$ when evaluated on $\mathcal{E}_{L, U}$. 
The induced bracket on ${\rm sHObs}_\Sigma$ would be independent of any choice. 

In the absence of a boundary, or if use the weaker condition on the boundary terms appropriate for domains with a foliation, 
the boundary terms would not be relevant for ${\rm HObs}_\Sigma$ and the induced bracket 
is independent of the choice of Hamiltonian vector field, and we can write $\{F,G\}$. 
\end{remark}

The following result was proven in the previous section. 
\begin{lma}\label{lma:above-1}
Let $F,G  \in  {\rm HOC}_U$ be observable currents with locally Hamiltonian vector fields  $V,W \in {\mathfrak F}^{\rm LH}_U$ respectively. 
Then 
\[
{\sf d_v} \{F^V,G^W\} = - \iota_{j[V,W]} \Omega_L + {\sf d_h}\sigma^{VW}, \quad \mbox{with } \sigma^{VW} = \iota_{jW} \sigma^V - \iota_{jV} \sigma^W 
\]
where 
$\sigma^{VW}$ satisfies the conditions at $\partial U$ necessary to make $\{F^V,G^W\}$ a strict Hamiltonian observable current with $[V, W]$ as associated Hamiltonian vector field. 
\end{lma}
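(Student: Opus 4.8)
The plan is to recognize that this lemma is a restatement of Theorem \ref{pro:F_VW-Hamiltonian} in the notation of the bracket just introduced, so the main move is to identify the two objects and invoke that theorem. By the defining formula (\ref{Poisson-Lie brackets}), the bracket $\{F^V, G^W\} = \iota_{jW} \iota_{jV} \Omega_L$ is literally the symplectic product current $F^{VW}$ built from the chosen locally Hamiltonian vector fields $V, W \in {\mathfrak F}^{\rm LH}_U$. Since $V$ and $W$ are locally Hamiltonian by hypothesis (they are the associated vector fields of $F$ and $G$), the hypotheses of Theorem \ref{pro:F_VW-Hamiltonian} are met and its conclusion transfers verbatim.

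For completeness I would reproduce the algebraic identity. First I would apply ${\sf d_v}$ to $\iota_{jW} \iota_{jV} \Omega_L$ and commute it through the two contractions using the graded relations among ${\sf d_v}$, $\iota$, and $\mathscr{L}$ together with ${\sf d_v} \Omega_L = 0$ (which holds since $\Omega_L = -{\sf d_v}\Theta_L$); this gives ${\sf d_v} F^{VW} = -\iota_{[jV, jW]} \Omega_L + \iota_{jV} \mathscr{L}_{jW} \Omega_L - \iota_{jW} \mathscr{L}_{jV} \Omega_L$. Next I would substitute the locally Hamiltonian conditions (\ref{llhh}), replacing $\mathscr{L}_{jV} \Omega_L$ and $\mathscr{L}_{jW} \Omega_L$ by ${\sf d_h} \sigma^V$ and ${\sf d_h} \sigma^W$ on ${\mathfrak F}_U, \mathcal{E}_{L, U}$. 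Using the anticommutation of $\iota_{jV}$ with ${\sf d_h}$ for evolutionary fields and the identity $[jV, jW] = j[V, W]$, the two surviving terms collect into a single horizontally exact form, yielding exactly ${\sf d_v} \{F^V, G^W\} = -\iota_{j[V, W]} \Omega_L + {\sf d_h}(\iota_{jW} \sigma^V - \iota_{jV} \sigma^W)$, so $\sigma^{VW} = \iota_{jW} \sigma^V - \iota_{jV} \sigma^W$.

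The remaining content is the boundary statement: that $\sigma^{VW}$ satisfies the condition over $\partial U$ making $\{F^V, G^W\}$ a strict Hamiltonian observable current, and that ${\rm sHOC}_U \subset {\rm HOC}_U$. I would not redo this, since it is precisely the second half of the proof of Theorem \ref{pro:F_VW-Hamiltonian}: because $\sigma^V$ and $\sigma^W$ vanish over $\partial U$ when evaluated on ${\mathfrak F}_U, \mathcal{E}_{L, U}$ (this is built into the definition of a locally Hamiltonian vector field), the combination $\sigma^{VW}$ inherits the same vanishing, establishing strictness directly, and the inclusion into ${\rm HOC}_U$ follows from the short computation there showing $\iota_{jY} \iota_{jZ} {\sf d_v} \sigma^F$ vanishes on $\mathcal{E}_{L, U}$ for arbitrary $Y, Z \in {\mathfrak F}_U$.

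Because the essential work already lives in Theorem \ref{pro:F_VW-Hamiltonian}, there is no real obstacle. The only point demanding care is bookkeeping: making sure the $\sigma^V, \sigma^W$ entering $\sigma^{VW}$ are the boundary terms coming from $V, W$ being locally Hamiltonian, so that their prescribed vanishing on $\partial U$ is exactly what certifies strict Hamiltonicity of the bracket.
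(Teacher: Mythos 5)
Your proposal is correct and takes essentially the same route as the paper: the paper's proof of this lemma is precisely the observation that $\{F^V,G^W\}=\iota_{jW}\iota_{jV}\Omega_L$ is the symplectic product current $F^{VW}$ of the locally Hamiltonian fields $V,W$, so that Theorem \ref{pro:F_VW-Hamiltonian} (``proven in the previous section'') applies verbatim. The computation you reproduce for completeness, and your deferral to that theorem's second half for the vanishing of $\sigma^{VW}$ over $\partial U$ and the inclusion ${\rm sHOC}_U \subset {\rm HOC}_U$, coincide with the paper's argument.
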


Other Hamiltonian observable currents with $[V, W]$ as Hamiltonian vector field are 
$\mathscr{L}_{jV} G$ and $-\mathscr{L}_{jW} F$, which are geometrically interesting since they associate observable currents to Lie derivatives along 
vector fields in the jet. However, they have the disadvantage of not being skew symmetric, but it is also possible to skew symmetrize them. Here is the relation between the mentioned Hamiltonian observable currents. 
\begin{eqnarray*}
	\{F^V,G^W\} &=& \mathscr{L}_{jV} G + {\sf d_h} \iota_{jV} \sigma^G = - \mathscr{L}_{jW} F - {\sf d_h} \iota_{jV} \sigma^F \\
	&=& \frac{1}{2} \left( \mathscr{L}_{jV} G - \mathscr{L}_{jW} F \right) + 
	\frac{1}{2} {\sf d_h} \left( \iota_{jV} \sigma^G - \iota_{jV} \sigma^F \right)  
\end{eqnarray*}
which holds 
when evaluated in $\mathcal{E}_{L, U}$. 

It is clear that our bracket is bilinear and skew symmetric. However, it does not satisfy a Jacobi relation. On the other hand, 
it is a straight forward calculation to verify that the Lie derivative bracket 
$\{F^V,G\}_l = \mathscr{L}_V G$, which is not skew symmetric, satisfies a Jacobi identity 
\begin{eqnarray*}
	\left\{F^{V_1}_1,\{F^{V_2}_2,F_3\}_l\right\}_l &=& 
	\mathscr{L}_{V_1} \mathscr{L}_{V_2} F_3 = 
	\mathscr{L}_{[V_1, V_2]} F_3 + 
	\mathscr{L}_{V_2} \mathscr{L}_{V_1} F_3 \\
	&=& 
	\left\{\{F^{V_1}_1,F_2\}^{[V_1, V_2]}_l,F_3\right\}_l + 
	\left\{F^{V_2}_2,\{F^{V_1}_1,F_3\}_l\right\}_l 
. 
\end{eqnarray*}
Now, repeated use of the identity 
$(\{F^V,G\}_l - \{F^V,G^W\} + {\sf d_h} \iota_{jV} \sigma^G)|_{\mathcal{E}_{L, U}}= 0$ 
lets us see that our bracket, when evaluated in $\mathcal{E}_{L, U}$, 
is subject to a Jacobi relation that is modified by a horizontally exact term 
\[
	\left\{F^{V_1}_1,\{F^{V_2}_2,F^{V_3}_3\}\right\} + {\sf d_h} J
	=
	\left\{\{F^{V_1}_1,F^{V_2}_2\},F^{V_3}_3\right\} + 
	\left\{F^{V_2}_2,\{F^{V_1}_1,F^{V_3}_3\}\right\} , 
\]
with $J = -\iota_{jV_1}\sigma^{V_2 V_3} + \iota_{jV_2}\sigma^{V_1 V_3} + 
( \iota_{jV_1}\mathscr{L}_{jV_2} - \iota_{jV_2}\mathscr{L}_{jV_1} - \iota_{j[V_1, V_2]}) \sigma^{F_3}$. Notice that, 
since $V_1, V_2, V_3 \in {\mathfrak F}_U$ and 
$(\sigma^{V_i}|_{{\mathfrak F}_U, \mathcal{E}_{L, U}})|_{\partial U} = 0$ 
the first two terms 
in $J$ vanish over $\partial U$; 
however, the third term does not necessarily vanish over $\partial U$.

\begin{remark}[Lie $n$-algebra of observable currents]
The structure in 
${\rm HOC}_U$ given by the brackets defined above 
fits into the general structure described by Rogers \cite{rogers2012algebras} 
for multisymplectic field theory. Baez and Rogers studied the case of 
the classical bosonic string with particular detail \cite{baez2010categorified}. 
Our bracket $\{F^V,G^W\}$ corresponds to the hemibracket, and $\{F^V,G\}_l$ corresponds to semibracket in their notation. 

In \cite{barnich1998sh} Barnich et al use the variational bicomplex to develop a framework appropriate for spacetime-localized observables. For us it would be of great interest to understand the relation between their work and ours. 

There is further work \cite{fiorenza2014l_} with the motivation of studying algebraic properties of Noether currents in multisymplectic field theory. 
\end{remark}

If we are working on a domain $U = \Sigma \times [0, 1]$ 
with a foliation we can use the weak condition on the boundary terms $\sigma^{V_i}$ and $\sigma^{F_i}$ that take place over 
$(\partial \Sigma) \times [0, 1]$ which has the advantage that if the leaves $\Sigma_t$ do not have a boundary the conditions 
on boundary terms become trivially satisfied. In that case the calculation given above showing that the Jacobi relation is modified by 
a horizontally exact term ${\sf d_h} J$ implies that the bracket will induce a Lie algebra structure in ${\rm HObs}_\Sigma$. 

We also defined a restricted class of Hamiltonian observable currents, ${\rm sHOC}_U$, for which the boundary terms satisfy 
$(\sigma^{F_i}|_{{\mathfrak F}_U, \mathcal{E}_{L, U}})|_{\partial U} = 0$. 
Lemma \ref{lma:above-1} implies that ${\rm sHOC}_U$ is a subalgebra of ${\rm HOC}_U$. 
For this class of Hamiltonian observable currents the modification of the Jacobi identity is by a horizontally exact term which 
vanishes over $\partial U$ (when evaluating on $\mathcal{E}_{L, U}$). Thus, for any hypersurface with $\partial \Sigma \subset \partial U$ 
the bracket given above induces a Lie algebra structure in ${\rm sHObs}_\Sigma$. 
Before stating the result formally, we recall that as any space of functions ${\rm HObs}_\Sigma$ is endowed with the spacetime non-local product of pointwise evaluation 
$(f\cdot g)_\Sigma [\phi] = f_\Sigma [\phi] g_\Sigma [\phi]$. 
In Remark \ref{GenObsFromOCs} we comment on 
the nontrivial issue of whether any product observable is realizable as the integral of a current. 
If $\Sigma$ has no boundary the degeneracy of $\omega_{L\Sigma}$ is completely characterized by 
${\mathfrak G}_U = \hat{\mathfrak G}_U \subset {\mathfrak F}_U$; on the other hand, since observables in ${\rm Obs}_\Sigma$ are required to be 
gauge invariant it is reasonable to expect that ${\rm Obs}_\Sigma = {\rm HObs}_\Sigma$. 
Furthermore, if $\Sigma$ is a Cauchy surface then it is reasonable to assume that all observables are generated by integrals of currents over $\Sigma$. 

\begin{pro}\label{BracketObs}
\begin{enumerate}
\item
If $\Sigma$ is a hypersurface with no boundary the bracket induced by the equation 
\begin{equation}
\label{BracketObsEq}
[f_\Sigma , g_\Sigma ]_\Sigma = \int_\Sigma j \phi^\ast \{ F^V , G^W\} 
\end{equation}
gives 
${\rm HObs}_\Sigma$ the structure of a Lie algebra. 
\item
If $\partial \Sigma \subset \partial U$ the same bracket makes ${\rm sHObs}_\Sigma$ a Lie algebra. 
\item
If $\Sigma$ is a Cauchy surface without boundary, 
${\rm HObs}_\Sigma$ is closed under 
the product of pointwise evaluation and it acquires 
the structure of a Poisson algebra. 
\end{enumerate}
\end{pro}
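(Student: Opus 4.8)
The plan is to verify the Lie-algebra axioms for the induced bracket (\ref{BracketObsEq}), handling the boundary contributions separately in the three cases, and then to upgrade the structure to a Poisson algebra in the Cauchy case by exhibiting the Leibniz rule inherited from the symplectic structure on the reduced phase space. Bilinearity and skew-symmetry are immediate from the contraction formula (\ref{Poisson-Lie brackets}), since $\iota_{jW}\iota_{jV}\Omega_L = -\iota_{jV}\iota_{jW}\Omega_L$. The first substantive point is that the bracket descends to a well-defined operation on the relevant space of observables. For this I would use the remark on the dependence on the choice of Hamiltonian vector field: two admissible choices $V_1,V_2$ for a single current differ by an element of $\hat{\mathfrak G}_U$, so the resulting currents differ by a horizontally exact term. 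In the boundaryless case of Part 1 this term integrates to zero over $\Sigma$ by Stokes, and the same argument shows independence of the representative of the $U$-restricted horizontal cohomology class; for Part 2 I would restrict to ${\rm sHOC}_U$ and invoke the scenario described after Definition \ref{dfn:HOC}, under which $V_1-V_2\in{\mathfrak G}_U$ and the boundary term vanishes over $\partial U\supset\partial\Sigma$.

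The central step is the Jacobi identity, which I would obtain by integrating over $\Sigma$ the modified Jacobi relation established just above the proposition,
\[
\left\{F^{V_1}_1,\{F^{V_2}_2,F^{V_3}_3\}\right\} + {\sf d_h} J = \left\{\{F^{V_1}_1,F^{V_2}_2\},F^{V_3}_3\right\} + \left\{F^{V_2}_2,\{F^{V_1}_1,F^{V_3}_3\}\right\}.
\]
By Stokes, $\int_\Sigma j\phi^\ast {\sf d_h} J = \int_{\partial\Sigma} j\phi^\ast J$, so the whole content of this step is to show that this boundary integral vanishes. For Part 1 the boundary is empty and the term drops out at once, yielding the exact Jacobi identity and hence the Lie algebra structure on ${\rm HObs}_\Sigma$. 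For Part 2 I would use that for strict Hamiltonian observable currents $\sigma^{F_3}$ vanishes over $\partial U$; since the first two terms of $J$ already vanish over $\partial U$ (they are built from the $\sigma^{V_i}$), and the remaining term is built from $\sigma^{F_3}$, the whole of $J$ vanishes on $\partial U\supset\partial\Sigma$, so $\int_{\partial\Sigma}j\phi^\ast J = 0$ and ${\rm sHObs}_\Sigma$ is a Lie algebra as well.

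For Part 3 I would first apply Part 1 to obtain the Lie algebra structure, and then establish the Poisson property. The key observation is that on a Cauchy surface without boundary the induced bracket coincides with the symplectic bracket: by (\ref{MSf}) we have $[f_\Sigma,g_\Sigma]_\Sigma = \omega_{L\Sigma}(v,w)$ with $v,w$ the Hamiltonian vector fields of $f,g$, and the relation $df_\Sigma = -\iota_v\omega_{L\Sigma}$ (valid because the boundary term is absent) lets me rewrite this as $v(g)$, i.e. as the action of the Hamiltonian vector field of $f$ on $g$. The Leibniz rule $\{f,g\cdot h\} = \{f,g\}\cdot h + g\cdot\{f,h\}$ is then automatic, since $v$ is a derivation of the pointwise product. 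The genuinely hard point --- and the main obstacle --- is closure of ${\rm HObs}_\Sigma$ under the pointwise product, namely that $g\cdot h$ is again realizable as the integral of an observable current. Algebraically $g\cdot h$ does carry a Hamiltonian vector field, $X_{gh}=gX_h+hX_g$, because on a Cauchy surface without boundary the degeneracy of $\omega_{L\Sigma}$ is exhausted by ${\mathfrak G}_U=\hat{\mathfrak G}_U$ and the reduced form is nondegenerate; but whether this product observable descends from an observable current is precisely the nontrivial issue flagged in Remark \ref{GenObsFromOCs}. I would isolate this realizability as the one place where the Cauchy-surface hypothesis is indispensable, with everything else following from the symplectic Leibniz identity.
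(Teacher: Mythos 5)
Your proposal is correct and follows essentially the same route as the paper: bilinearity and skew-symmetry are inherited from the current bracket, Jacobi follows by integrating the modified Jacobi relation so that the $\mathsf{d_h}J$ term drops out (empty boundary in Part 1, vanishing of $J$ over $\partial U$ for strict currents in Part 2), and the Poisson structure in Part 3 rests on closure under pointwise products via Remark \ref{GenObsFromOCs} together with the Leibniz rule coming from the bracket acting as a derivation, which is exactly the paper's formula $[f_\Sigma , g_\Sigma ]_\Sigma = \int_\Sigma j\phi^\ast \mathscr{L}_V G$. Your additional explicit treatment of well-definedness under the choice of Hamiltonian vector field is a point the paper delegates to the preceding remark rather than the proof itself, but it does not change the argument.
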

\begin{proof}
For a hypersurface with no boundary 
the bracket $[\cdot , \cdot ]_\Sigma$ in ${\rm HObs}_\Sigma$ inherits bilinearity and skew symmetry from the bracket $\{ \cdot , \cdot \}$ in 
${\rm HOC}_U$. 
Jacobi's identity holds because, 
after integration on a hypersurface with no boundary, 
the modification of the pure boundary term 
${\sf d_h}J$ modifying Jacobi's relation is irrelevant. 

In the case with $\partial \Sigma \subset \partial U$ and observables in ${\rm sHObs}_\Sigma$
the boundary term ${\sf d_h}J$ vanishes over $\partial U$ and again after integration it vanishes.

If $\Sigma$ is a Cauchy surface without boundary,
the assumption guarantees that bracket observables 
and the product observables are again in 
${\rm HObs}_\Sigma$; 
for the product observables this is shown in Remark \ref{GenObsFromOCs}. 
In addition, 
the Leibnitz's rule is satisfied because 
for any $f_\Sigma \in {\rm HObs}_\Sigma$ the bracket induces a derivative operator 
$[f_\Sigma , g_\Sigma ]_\Sigma = \int_\Sigma j^1\phi^\ast \mathscr{L}_V G$. 
\end{proof}

These results allow us to refine Theorem \ref{NoetherThm} about Noether currents; the following result is a corollary of that theorem and Lemma 
\ref{lma:above-1}. 

\begin{cor}[Algebra of Noether currents]
A Lie algebra of Lagrange symmetries $\mathscr{S}_L \subset {\mathfrak F}_U$ 
induces a vector space of observable currents 
${\rm O}_{\mathscr{S}_L} \subset {\rm HOC}_U$ which is compatible with the brackets in the sense that 
given $V, W \in \mathscr{S}_L$ we have 
\[
\{ N^V , N^W \} = N^{[V, W]} + {\sf d_h} \sigma_N^{VW} . 
\]
with boundary term $\sigma_N^{VW}= \iota_W \sigma_N^V - \iota_V \sigma_N^W$. 
Notice that 
the boundary term will satisfy 
$(\sigma_N^{VW}|_{\mathcal{E}_{L, U}})|_{\partial U} = 0$
if the symmetry algebra obeys the locality condition \\
$V \in \mathscr{S}_L \implies (\mathscr{L}_V \Omega_L - {\sf d_h} \sigma^V)|_{{\mathfrak F}_U, \mathcal{E}_{L, U}})|_{\partial U} = 0$ with 
$(\sigma^V|_{{\mathfrak F}_U, \mathcal{E}_{L, U}})|_{\partial U} = 0$. 

If 
the locality condition written above is satisfied%
\footnote{
Notice that the appropriate version of the condition is trivially satisfied if 
we are working on a domain $U = \Sigma \times [0, 1]$ 
with a foliation and the leaves $\Sigma_t$ do not have a boundary.} 
by the symmetry algebra then 
for any $V \in \mathscr{S}_L$ we have 
$n^V_\Sigma = \int_\Sigma j^1\phi^\ast N^V \in {\rm HObs}_\Sigma$. 
Moreover, 
${\rm HObs}_\Sigma$ 
acquires the structure of a Lie algebra and 
the correspondence is a Lie algebra morphism 
\[
\mathscr{S}_L \to {\rm HObs}_\Sigma \quad . 
\]
\end{cor}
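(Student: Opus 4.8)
The plan is to establish the current-level identity $\{N^V, N^W\} = N^{[V,W]} + {\sf d_h}\sigma_N^{VW}$ first, and then to obtain the remaining claims by integrating it over $\Sigma$. Since $\mathscr{S}_L$ is a Lie algebra, $[V,W]\in \mathscr{S}_L$ is again a Lagrange symmetry, so the Noether current $N^{[V,W]}$ exists and, by Theorem \ref{NoetherThm}, lies in $\widehat{\rm HOC}_U$ with $[V,W]$ as its Hamiltonian vector field. By the definition of the bracket, $\{N^V, N^W\} = \iota_{jW}\iota_{jV}\Omega_L = F^{VW}$, and by Theorem \ref{pro:F_VW-Hamiltonian} this symplectic product current also carries $[V,W]$ as its Hamiltonian vector field. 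Thus both sides of the desired identity are Hamiltonian observable currents for the same vector field, so their difference is ${\sf d_v}$-closed modulo horizontally exact terms; the real work is to pin down the exact boundary term rather than only its cohomology class.

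To do this I would run a direct computation instead of the cohomological comparison. Using the on-shell relation $\{N^V, N^W\} = \mathscr{L}_{jV} N^W + {\sf d_h}\iota_{jV}\sigma_N^W$ recorded in Section \ref{BracketSection}, it suffices to prove the \emph{representation up to a boundary term}
\[
\mathscr{L}_{jV} N^W = N^{[V,W]} - {\sf d_h}\iota_{jW}\sigma_N^V \quad \text{on } \mathcal{E}_{L, U}.
\]
Expanding $N^W = -\iota_{jW}\Theta_L - \sigma_L^W$ and using $\mathscr{L}_{jV}\iota_{jW} = \iota_{jW}\mathscr{L}_{jV} + \iota_{j[V,W]}$ together with $j[V,W] = [jV, jW]$, the term $-\iota_{j[V,W]}\Theta_L$ reproduces the leading part of $N^{[V,W]}$. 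The remainder is handled by Takens' lemma in the form supplied by Theorem \ref{NoetherThm}: comparing ${\sf d_v} N^V = -\iota_{jV}\Omega_L + {\sf d_h}\sigma_N^V$ with the direct expansion of ${\sf d_v}(-\iota_{jV}\Theta_L - \sigma_L^V)$ gives $\mathscr{L}_{jV}\Theta_L = -{\sf d_h}\sigma_N^V - {\sf d_v}\sigma_L^V$ on $\mathcal{E}_{L, U}$. Substituting this, using $\iota_{jV}{\sf d_h} = -{\sf d_h}\iota_{jV}$ for evolutionary vector fields, noting that $\iota_{jW}\sigma_L^V = 0$ because $\sigma_L^V$ has vertical degree zero, and choosing the natural representative $\sigma_L^{[V,W]} = \mathscr{L}_{jV}\sigma_L^W - \mathscr{L}_{jW}\sigma_L^V$ for the commutator symmetry, the extra terms collapse to a single horizontally exact form. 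Combining this with the bracket relation then yields $\{N^V, N^W\} = N^{[V,W]} + {\sf d_h}\sigma_N^{VW}$ with $\sigma_N^{VW} = \iota_{jW}\sigma_N^V - \iota_{jV}\sigma_N^W$, as claimed.

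With the current identity in hand the rest is bookkeeping of boundary terms. Under the stated locality condition the boundary pieces vanish over $\partial U$ when restricted to $\mathcal{E}_{L, U}$, hence so does $\sigma_N^{VW}$; moreover each $N^V$ then satisfies the hypotheses of Theorem \ref{NoetherThm} and lies in ${\rm HOC}_U$, so $n^V_\Sigma = \int_\Sigma j^1\phi^\ast N^V$ is gauge invariant for any hypersurface with $\partial\Sigma \subset \partial U$ by Condition 2 of Definition \ref{Gauge}, giving $n^V_\Sigma \in {\rm HObs}_\Sigma$. Integrating the current identity over such a $\Sigma$ and applying Stokes' theorem, the contribution $\int_\Sigma j^1\phi^\ast {\sf d_h}\sigma_N^{VW} = \int_{\partial\Sigma} j^1\phi^\ast \sigma_N^{VW}$ vanishes because $\partial\Sigma \subset \partial U$ and $\sigma_N^{VW}|_{\partial U} = 0$ on-shell; hence $[n^V_\Sigma, n^W_\Sigma]_\Sigma = n^{[V,W]}_\Sigma$. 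Since Proposition \ref{BracketObs} already endows ${\rm HObs}_\Sigma$ with a Lie algebra structure, this is precisely the statement that $V \mapsto n^V_\Sigma$ is a morphism of Lie algebras.

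I expect the main obstacle to be the precise boundary-term bookkeeping in the direct computation: controlling the inherent ambiguities in $\sigma_L^V$ and $\sigma_N^V$ (each determined only up to horizontally closed terms), making a consistent choice of $\sigma_L^{[V,W]}$, and correctly tracking the bidegrees together with the anticommutation of $\iota_{jV}$ with ${\sf d_h}$, so that the residual terms assemble into exactly ${\sf d_h}\sigma_N^{VW}$ with the advertised sign rather than merely into a ${\sf d_v}$-closed-modulo-exact current. A secondary delicate point is verifying that the locality hypothesis, phrased in terms of the boundary term $\sigma^V$ of $\mathscr{L}_{jV}\Omega_L = {\sf d_h}\sigma^V$, genuinely forces the Noether boundary term $\sigma_N^V$, and hence $\sigma_N^{VW}$, to vanish over $\partial U$ on $\mathcal{E}_{L, U}$, since that vanishing is exactly what the final integration argument consumes.
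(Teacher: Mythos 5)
Your proposal is correct in substance, and it is actually more complete than what the paper records. The paper's entire proof is the remark that the corollary follows from Theorem \ref{NoetherThm} and Lemma \ref{lma:above-1}: $N^V$ and $N^W$ are Hamiltonian observable currents with Hamiltonian vector fields $V$ and $W$, so their bracket is a Hamiltonian observable current with Hamiltonian vector field $[V,W]$, which is then identified with $N^{[V,W]}$ up to a boundary term. That citation by itself pins down only the Hamiltonian vector field of $\{N^V,N^W\}$, hence fixes it relative to $N^{[V,W]}$ only up to a ${\sf d_v}$-constant current plus horizontally exact terms; moreover, Lemma \ref{lma:above-1} strictly applies to ${\rm HOC}_U$ and ${\mathfrak F}^{\rm LH}_U$, whereas for general Lagrange symmetries the Noether currents are a priori only in $\widehat{\rm HOC}_U$. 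Your direct computation --- expanding $\mathscr{L}_{jV}N^W$ with $N^W=-\iota_{jW}\Theta_L-\sigma_L^W$, using $[\mathscr{L}_{jV},\iota_{jW}]=\iota_{j[V,W]}$, the Takens relation $\mathscr{L}_{jV}\Theta_L=-{\sf d_v}\sigma_L^V-{\sf d_h}\sigma_N^V$, and then converting with $\{N^V,N^W\}=\mathscr{L}_{jV}N^W+{\sf d_h}\iota_{jV}\sigma_N^W$ --- removes both deficiencies: it yields the exact current identity with no additive integration constant, it needs no boundary conditions, and it makes explicit that one must take the representative $\sigma_L^{[V,W]}=\mathscr{L}_{jV}\sigma_L^W-\mathscr{L}_{jW}\sigma_L^V$ for the identity to hold on the nose. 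That choice is a genuine clarification the paper leaves implicit. Your integration step (Stokes with $\partial\Sigma\subset\partial U$, then Proposition \ref{BracketObs}) matches the paper's intended conclusion.

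Two caveats. First, a sign slip: your own intermediate formulas combine to $\{N^V,N^W\}=N^{[V,W]}+{\sf d_h}\bigl(\iota_{jV}\sigma_N^W-\iota_{jW}\sigma_N^V\bigr)$, the negative of the boundary term you (and the paper) state at the end; this is harmless given that the paper's conventions are themselves inconsistent at this level (the main text defines $N^V=-\iota_{jV}\Theta_L-\sigma_L^V$, the appendix $-\iota_{jV}\Theta_L+\sigma_L^V$), but you should not assert both. Second, the step you rightly flag as delicate is indeed a real gap shared with the paper: the locality hypothesis constrains $\sigma^V$, equivalently $({\sf d_v}\sigma_N^V-\lambda^V)|_{\partial U}$, and via Theorem \ref{NoetherThm} it does give $N^V\in{\rm HOC}_U$ and $n^V_\Sigma\in{\rm HObs}_\Sigma$; but it constrains the vertical derivative of $\sigma_N^V$ at the boundary, not the contractions $\iota_{jW}\sigma_N^V$ themselves, so the vanishing of $(\sigma_N^{VW}|_{\mathcal{E}_{L,U}})|_{\partial U}$ --- exactly what the morphism property $[n^V_\Sigma,n^W_\Sigma]_\Sigma=n^{[V,W]}_\Sigma$ consumes --- does not follow from what either you or the paper writes (the paper disposes of it with a ``Notice that''). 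Since the paper offers no argument here, this does not put you behind its proof, but in a self-contained write-up this point should be proven, imposed as a hypothesis, or absorbed into a choice of representatives of $\sigma_N^V$.
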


%

Symplectic product observable currents are a large family of observable currents. In the special case when the vector fields ${V,W}$ are locally Hamiltonian we gave explicit formula for the Hamiltonian vector field associated to $F^{VW}$. 
The following result about the algebra of symplectic product currents is a trivial consequence of the definitions, and complements Corollary \ref{corollary1}. 
\begin{pro}[Algebra of symplectic product currents]
Let $F^{V_1W_1}$ and $F^{V_2W_2}$ be symplectic product observable currents associated to the locally Hamiltonian vector fields $V_1,W_1; V_2, W_2 \in {\mathfrak F}^{\rm LH}_U$ respectively. Then 
\[
\{ F^{V_1W_1},F^{V_2W_2}\} = F^{[V_1, W_1]\, [V_2, W_2]} ,  
\]
which implies that observable currents corresponding to the 
symplectic product of locally Hamiltonian vector fields 
form a Lie algebra isomorphic to the commutator subalgebra of the Lie algebra of locally Hamiltonian vector fields modulo gauge. 
\end{pro}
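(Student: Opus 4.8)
The plan is to reduce the displayed bracket identity to the computation of the Hamiltonian vector field of a symplectic product current already performed in Theorem \ref{pro:F_VW-Hamiltonian}, and then to read off the isomorphism from the fact that the assignment of a Hamiltonian vector field to a Hamiltonian observable current intertwines the two brackets. First I would establish the identity itself. By Theorem \ref{pro:F_VW-Hamiltonian} the symplectic product current $F^{V_1W_1}$ is a strict Hamiltonian observable current whose associated locally Hamiltonian vector field is $[V_1,W_1]$, and likewise $F^{V_2W_2}$ has associated vector field $[V_2,W_2]$. Feeding these choices of Hamiltonian vector fields into the defining formula (\ref{Poisson-Lie brackets}) for the bracket yields
\[
\{F^{V_1W_1},F^{V_2W_2}\}=\iota_{j[V_2,W_2]}\,\iota_{j[V_1,W_1]}\,\Omega_L ,
\]
which is, by the very definition of a symplectic product current, the current $F^{[V_1,W_1]\,[V_2,W_2]}$ associated to the pair $[V_1,W_1],[V_2,W_2]$ (both lying in ${\mathfrak F}_U^{\rm LH}$ by Corollary \ref{corollary1}). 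This is precisely the first assertion, and it shows in passing that the vector space spanned by symplectic product currents of locally Hamiltonian vector fields is closed under the bracket.

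Next I would promote this to the isomorphism statement. Under the assumption discussed just after Definition \ref{dfn:HOC} there is a well-defined assignment ${\rm HOC}_U \to {\mathfrak F}_U^{\rm LH}/{\mathfrak G}_U$ sending a Hamiltonian observable current to its Hamiltonian vector field modulo gauge; restricted to symplectic product currents it reads $\Phi\colon F^{VW}\mapsto [V,W]\bmod {\mathfrak G}_U$. The identity just proven, combined with $[j[V_1,W_1],j[V_2,W_2]]=j[[V_1,W_1],[V_2,W_2]]$, shows that $\Phi$ intertwines the current bracket with the Lie bracket,
\[
\Phi(\{F^{V_1W_1},F^{V_2W_2}\})=[[V_1,W_1],[V_2,W_2]]=[\Phi(F^{V_1W_1}),\Phi(F^{V_2W_2})] .
\]
By Corollary \ref{corollary1} the image of $\Phi$ is exactly the commutator ideal $[{\mathfrak F}_U^{\rm LH},{\mathfrak F}_U^{\rm LH}]/{\mathfrak G}_U$, while its kernel consists of those symplectic product currents whose Hamiltonian vector field is gauge, i.e. the currents identified with zero in ${\rm HOC}_U$. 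Hence $\Phi$ descends to the claimed isomorphism of Lie algebras, transporting the current-level bracket to the genuine Lie bracket on the target. In particular, skew symmetry is inherited directly from the antisymmetry of the double contraction $\iota_{jW}\iota_{jV}\Omega_L$, and the Jacobi identity, which at the level of currents is only satisfied up to the horizontally exact term ${\sf d_h}J$ recorded in Section \ref{BracketSection}, becomes exact after passing to ${\mathfrak F}_U^{\rm LH}/{\mathfrak G}_U$ since that boundary term lies in the kernel of $\Phi$.

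The main (and essentially only) obstacle is not the computation but the bookkeeping of the two quotients. One must verify that $\Phi$ is genuinely well defined on currents, that is, that distinct pairs $(V,W)$ producing the same current necessarily give the same $[V,W]$ modulo ${\mathfrak G}_U$; this is exactly the uniqueness-up-to-${\mathfrak G}_U$ assignment established after Definition \ref{dfn:HOC} under the maximality hypothesis on ${\mathfrak G}_U$, so it may be invoked rather than reproven. One must also confirm that the kernel is precisely the set of currents that are trivial in ${\rm HOC}_U$. Once these identifications are in place the isomorphism is immediate, and I expect no analytic difficulty whatsoever: as the text anticipates, the result is a formal consequence of Theorem \ref{pro:F_VW-Hamiltonian}, the definition of the bracket, and Corollary \ref{corollary1}.
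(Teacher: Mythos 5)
Your proof is correct and takes essentially the same route the paper intends: the paper presents this proposition as ``a trivial consequence of the definitions'' complementing Corollary \ref{corollary1}, and your argument --- feeding the Hamiltonian vector fields $[V_1,W_1]$ and $[V_2,W_2]$ supplied by Theorem \ref{pro:F_VW-Hamiltonian} into the definition of the bracket and recognizing the result as the symplectic product current $F^{[V_1,W_1]\,[V_2,W_2]}$ --- is precisely that consequence. Your additional bookkeeping for the isomorphism (well-definedness of $F^{VW}\mapsto [V,W] \bmod {\mathfrak G}_U$ via the discussion after Definition \ref{dfn:HOC}, with image and kernel identified through Corollary \ref{corollary1}) spells out what the paper leaves implicit, in the same spirit.
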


The result stated above together with other results of this section and Section \ref{HVFs} may be summarized in the following diagram. 
In the diagram 
${\rm SPLHOC}_U \subset {\rm HOC}_U$ denotes the Lie algebra of symplectic product observable currents associated to locally Hamiltonian vector fields 
as a sub Lie $n$-algebra of the algebra of Hamiltonian observable currents. 
We will state the result starting at a level which ignores the conditions on the boundary terms over $\partial U$. 
This occupies the two columns at the right of the diagram. 
Separately, in the two columns in the left,  we state the corresponding results which do take into account the conditions over $\partial U$, and the connection between the two sets of results 
are maps that exist in the scenario in which 
${\mathfrak G}_U$ is the maximal ideal in ${\mathfrak F}^{\rm LH}_U \cap \hat{\mathfrak G}_U$.%
\footnote{
An arrow starting with a hook denotes an inclusion map. An arrow with a double head at the end denotes an onto map. 
An arrow starting with a bifurcation ironically denotes a one to one map. 
} 
\[\xymatrix{
	\left[	{\mathfrak F}^{\rm LH}_U, {\mathfrak F}_U^{\rm LH}\right]/{\mathfrak G}_U
		\ar@{^{(}->}[r]
	&
	{\mathfrak F}^{\rm LH}_U/{\mathfrak G}_U
		\ar@{>->}[r]^{\mbox{conj.}}
	&
	\hat{\mathfrak F}^{\rm LH}_U/\hat{\mathfrak G}_U
		\ar@{<-^{)}}[r]
	&
	\left[	\hat{\mathfrak F}^{\rm LH}_U, \hat{\mathfrak F}_U^{\rm LH}\right]/\hat{\mathfrak G}_U
	&	
	\\
	{\rm SPLHOC}_U
		\ar@{->}[u]		\ar@{^{(}->}[r]
	&
	{\rm HOC}_U
		\ar@{->>}[u]		\ar@{^{(}->}[r]
		&
	\widehat{\rm HOC}_U
		\ar@{->>}[u]    \ar@{<-^{)}}[r]
	&
	\widehat{\rm SPLHOC}_U
		\ar@{->}[u]
		&
	\\
	0
		\ar@{^{(}->}[u]
	&
	C
		\ar@{^{(}->}[u]			\ar@{^{(}->}[r]
	&
	\hat{C}
		\ar@{^{(}->}[u]
	&
	0
		\ar@{^{(}->}[u]	
	&
}\]
where the sets of constant observable currents are 
\[
\hat{C} = \{ F \in {\rm OC}_U : ({\sf d_v} F - {\sf d_h}\sigma^{F})|_{{\mathfrak F}_U , \mathcal{E}_{L, U}} = 0 \} , 
\]
\[
C = \{ F \in {\rm OC}_U : ({\sf d_v} F - {\sf d_h}\sigma^{F})|_{{\mathfrak F}_U , \mathcal{E}_{L, U}} = 0 \} , 
\]
where the boundary terms for the elements of $C$ are required to satisfy \\
$(({\sf d_v}\sigma^F - \lambda^F)|_{{\mathfrak F}_U, \mathcal{E}_{L, U}} )|_{\partial U} = 0$ for respective horizontally closed forms 
${\sf d_h} \lambda^F |_{{\mathfrak F}_U, \mathcal{E}_{L, U}}= 0$.


\section{Observable currents separate solutions modulo gauge}
\label{OCsSeparate}


In previous work on the multisymplectic approach to classical field theory it is argued that the set of physical observables that can be obtained from the integration of conserved currents is very limited including almost nothing besides Noether currents (see for example \cite{Forger+Romero, Helein, kanatchikov1998canonical, Kijowski, Goldschmidt(1973)}). 
Here we defined the notion of observable currents, 
which differs from that used in previous works in two aspects: First, we allow for currents depending on arbitrarily high order derivatives of the field. Second, the currents that we consider may be defined only locally in the jet. 
In previous sections we exhibited 
the large family of symplectic product observable currents together with the corresponding locally defined physical observables. 
In order to be conclusive showing that observable currents are an interesting source of physical observables we 
(i) prove that 
if our domain is foliated by Cauchy surfaces with no boundary 
the algebra of 
Hamiltonian observable currents is capable of distinguishing between gauge inequivalent solutions, and 
(ii) we give supporting evidence to conjecture that for any spacetime domain 
gauge inequivalent solutions may be distinguished by means of 
observable currents. 
With the aim of 
making the task more transparent we will prove a local version of statement (i), 
and then discuss an extension of our argument to address more general cases. 
The key ingredient of our result was introduced in Remark \ref{EveryHVFhasOCs} showing that every locally Hamiltonian vector field has corresponding observable currents. 
This ingredient is complemented with the assumption that {\em given a solution and a variation of it obeying the linearized field equation, 
there is a locally Hamiltonian vector field inducing the given variation of the given solution which is defined at least in a neighborhood of the solution. 
This assumption is very mild in the presence of a Cauchy surface with no boundary, and it is somewhat less trivial for a general domain 
due to the conditions on boundary terms of equation (\ref{HOCs}) over $\partial U$. 
} 
\begin{tma}\label{SeparationOfSols}
Assume that the spacetime domain $U$ 
is endowed with a foliation by Cauchy surfaces with no boundary and that the assumption stated above holds. 
Consider any curve of solutions $\phi_t \in {\rm Sols}_U$ starting at $\phi_0 = \phi$ 
whose tangent vector at the initial solution does not correspond to a gauge vector field in the jet. 
Then there is a Hamiltonian observable current $F\in {\rm HOC}_U$ defined at least in a neighborhood of 
$j \phi (U)$ 
such that for any Cauchy surface 
$f_\Sigma \in {\rm HObs}_\Sigma$ 
distinguishes between $[\phi]$ and neighboring points in the curve 
of solutions modulo gauge $[\phi_t]$. 
\end{tma}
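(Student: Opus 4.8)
The plan is to reduce the separation statement to the non-degeneracy of the pre-symplectic form $\omega_{L\Sigma}$ on a boundaryless Cauchy surface, and then to feed the witnessing variation into the existence result of Remark~\ref{EveryHVFhasOCs}. Write $W \in {\mathfrak F}_U$ for an evolutionary vector field modelling the tangent $w$ of the curve at $\phi$; since $\phi_t$ is a curve of solutions, $w$ satisfies the linearized field equation, and by hypothesis $w$ is not gauge, i.e. $W \notin {\mathfrak G}_U$. The goal is to produce $F \in {\rm HOC}_U$, defined near $j\phi(U)$, with $\tfrac{d}{dt}\big|_{t=0} f_\Sigma[\phi_t] \neq 0$. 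Since $F$ is an observable current and $\partial\Sigma = \emptyset$, the function $f_\Sigma$ is gauge invariant, so a nonzero derivative forces $f_\Sigma[\phi_t] \neq f_\Sigma[\phi]$ for small $t\neq 0$ and hence detects that $[\phi_t]\neq[\phi]$.

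First I would compute the derivative for a candidate Hamiltonian observable current $F$ with associated locally Hamiltonian vector field $V$ (inducing a variation $v$). Using $\tfrac{d}{dt}\big|_{t=0}(j\phi_t)^\ast F = j\phi^\ast \mathscr{L}_{jW}F$, the vanishing $\iota_{jW}F = 0$ (as $F$ has vertical degree zero), the relation $\mathscr{L}_{jW} = \iota_{jW}{\sf d_v} + {\sf d_v}\iota_{jW}$ used after (\ref{LGEQ}), and equation (\ref{HOCs}),
\[
\frac{d}{dt}\Big|_{t=0} f_\Sigma[\phi_t]
= \int_\Sigma j\phi^\ast \iota_{jW}\,{\sf d_v}F
= -\int_\Sigma j\phi^\ast \iota_{jW}\iota_{jV}\Omega_L
- \int_\Sigma j\phi^\ast {\sf d_h}\,\iota_{jW}\sigma^F
= -\,\omega_{L\Sigma}(v,w),
\]
where the last term drops by Stokes' theorem since $\partial\Sigma=\emptyset$ (here I use that $\iota_{jW}$ anticommutes with ${\sf d_h}$ on prolongations, so $\iota_{jW}{\sf d_h}\sigma^F = -{\sf d_h}\iota_{jW}\sigma^F$) and the first term is the pre-symplectic pairing by (\ref{MSf}). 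This step is where the boundaryless hypothesis is essential.

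It then remains to choose $V$ so that $\omega_{L\Sigma}(v,w)\neq 0$. On a Cauchy surface with no boundary the degeneracy of $\omega_{L\Sigma}$ on ${\mathfrak F}_U$ is exactly ${\mathfrak G}_U = \hat{\mathfrak G}_U$; as $w$ is not gauge it lies outside this kernel, so there is a perturbation $v$ of $\phi$ with $\omega_{L\Sigma}(v,w)\neq 0$. Because $\omega_{L\Sigma}(v,w)$ depends only on the perturbations at $\phi$ along $\Sigma$, I may, by the assumption stated before the theorem, replace the realizing vector field by a locally Hamiltonian $V \in {\mathfrak F}_U^{\rm LH}$ inducing the same $v$ near $\phi$ without changing the pairing. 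Remark~\ref{EveryHVFhasOCs} then furnishes an associated Hamiltonian observable current $F \in {\rm HOC}_U$ solving (\ref{HOCs}) on a neighborhood of $j\phi(U)$; shrinking this neighborhood so that $j\phi_t(U)$ remains in it for small $t$, the computation gives $\tfrac{d}{dt}\big|_{t=0} f_\Sigma[\phi_t] = -\omega_{L\Sigma}(v,w)\neq 0$, completing the separation.

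The main obstacle, and conceptual heart of the argument, is the passage from ``$w$ is not gauge'' to ``$w$ pairs nontrivially with some admissible test variation $v$ under $\omega_{L\Sigma}$'': this rests on the identification of the kernel of $\omega_{L\Sigma}$ with precisely the gauge directions, which holds cleanly only because $\Sigma$ has no boundary. The secondary delicate point is the bookkeeping of domains of definition, namely that the witness $v$ be simultaneously realizable by a locally Hamiltonian vector field and carry an observable current on a common neighborhood of $j\phi(U)$ containing $j\phi_t(U)$ for small $t$; this is exactly what the pre-theorem assumption together with Remark~\ref{EveryHVFhasOCs} are designed to provide. For general domains with boundary the kernel characterization fails and the boundary conditions in (\ref{HOCs}) obstruct the realization of $v$, which is why only statement (i) is established here.
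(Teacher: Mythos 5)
Your proof is correct and follows essentially the same route as the paper's: both arguments use the non-gauge hypothesis together with the kernel characterization of $\omega_{L\Sigma}$ on a boundaryless Cauchy surface to produce a pairing partner $V$, invoke the pre-theorem assumption to take $V$ locally Hamiltonian, and then apply Remark \ref{EveryHVFhasOCs} to obtain $F \in {\rm HOC}_U$ with $\frac{d}{dt}\big|_{t=0} f_\Sigma[\phi_t] = -\omega_{L\Sigma}(v,w) \neq 0$. Your explicit Stokes/Cartan bookkeeping and the remark that gauge invariance of $f_\Sigma$ converts a nonzero derivative into separation of gauge classes only make explicit what the paper leaves implicit.
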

\begin{proof}
We will prove the theorem showing that there must be a locally defined Hamiltonian observable current $F\in {\rm HOC}_U$ such that 
$f_\Sigma$ satisfies 
\[
\frac{d}{dt}|_{t=0} f_\Sigma[\phi_t] \neq 0 . 
\]
Consider any Cauchy surface $\Sigma$ 
and $W \in {\mathfrak F}_U^{\mathfrak G}$ modeling the variation of the curve of solutions $\phi_t$ at the initial time. 
The assumption is that $W \in {\mathfrak F}_U^{\mathfrak G}$ is not a gauge vector field, 
and this implies that 
there is 
$V \in {\mathfrak F}_U^{\mathfrak G}$ such that 
$\iota_w \iota_v \omega_{L\Sigma} = \int_\Sigma j\phi^\ast \iota_W \iota_V \Omega_L \neq 0$. 
Now we appeal to the validity of our assumption stating that 
$V$ can actually be assumed to be a locally Hamiltonian 
defined at least in a neighborhood of $j\phi(U)$; that is, $V \in \mathfrak{F}_U^{\rm LH} \subset {\mathfrak F}_U^{\mathfrak G}$. 
By construction, we know 
that $\iota_W \iota_V \Omega_L$ is not horizontally exact. 
Thus, 
thanks to the result of Remark \ref{EveryHVFhasOCs}, 
there is a locally defined Hamiltonian observable current $F\in {\rm HOC}_U$ 
such that 
\[
\mathscr{L}_W F = \iota_W {\sf d_v} F = 
- \iota_W \iota_V \Omega_L - {\sf d_h}\iota_W \sigma^F . 
\]
It follows that $f_\Sigma$ distinguishes between $\phi$ and neighboring solutions in the curve 
of solutions modulo gauge 
$[\phi_t]$. 
\end{proof}

The proof given above appeals to the existence of a (locally defined) locally Hamiltonian vector field 
$V \in \mathfrak{F}_U^{\rm LH}$ with the needed properties, and which ends up being the Hamiltonian 
vector field associated to the observable current distinguishing solutions modulo gauge. 
Since $V$ models a vector field in the space of solutions it 
has to solve the linearized field equation in a neighborhood, which are non linear, 
it is expected that $V$ depends on the field and partial derivatives of arbitrarily high orders. 
This forces the corresponding observable current $F$ not to fit in a jet of a predefined finite order. 
This was the main reason for us to work in the infinite jet, apart from computational convenience 
and geometric clarity. 
In the context of General Relativity Anderson and Torre \cite{anderson1996classification}
proved that Hamiltonian vector fields associated to non trivial observables 
need to depend on partial derivatives of the field of infinitely high order.

\begin{remark}[General observables written as integrals of observable currents]
\label{GenObsFromOCs}
Not all observables $f: {\rm Sols}_U \to {\mathbb R}$ are of the type $f_\Sigma \in {\rm Obs}_\Sigma$ for some hypersurface. However, under the conditions of Theorem 
\ref{SeparationOfSols} general arguments show that 
a large class of local observables can be approximated with arbitrary precision by means of observables induced by Hamiltonian observable currents. 
Here we will not deal with approximations; we will show, under the same assumptions, 
how to write any physical observable in that class 
as the integral of a, locally defined, observable current in an exact way. 

Spacetime localized measurement is the primary source of observables with direct physical interest. 
Such observables are modeled as local functionals, $a_M[\phi] = \int_M A(j\phi)$ 
(integrals over spacetime of densities $A$ depending on the field and its partial derivatives with compact support; for a refinement of this notion see \cite{khavkine2015local}). 
Moreover, the Peierls bracket assigns locally Hamiltonian vector fields $V_A \in {\mathfrak F}_U^{\rm LH}$ to such spacetime densities 
\cite{Forger+Romero, Khavkine(2013)}. 


Since the observable $a_M$ has $V_A$ as associated locally Hamiltonian vector field, its change under a variation of the field is determined by $V_A$. 
Thus, $a_M$ is determined by $V_A$ up to a constant that may be fixed using an auxiliary solution $\phi_0 \in {\rm Sols}_M$.%
\footnote{
We assume that the locally Hamiltonian vector field can be modeled as an evolutionary vector field 
$V_A \in {\mathfrak F}_U^{\rm LH}$. The information that we have concerns only $V_A|_{\mathcal{E}_{L, U}}$ the assumption consists merely on the extendibility of $jV_A$ to a neighborhood of $j\phi_0(U)$. 
} 
On the other hand, we showed that every locally Hamiltonian vector field has corresponding locally defined 
observable current (which under the current assumptions are Hamiltonian). Let $\tilde{F}_{V_A} \in {\rm OC}_U$ be one of the observable currents 
defined in a neighborhood of $j\phi_0(U)$ and 
satisfying 
$({\sf d_v} \tilde{F}_{V_A} + \iota_{jV_A} \Omega_L 
- {\sf d_h} \sigma^{\tilde{F}_{V_A}} )|_{{\mathfrak F}_U , \mathcal{E}_{L, U}} = 0$. 
The observable current that we are looking for is 
\[
F^{V_A} = \tilde{F}^{V_A} + F_0 , 
\]
where $F^0$ is a field independent current such that $f_\Sigma^{V_A}[\phi_0] = a_M [\phi_0]$ for any Cauchy surface $\Sigma$.

A direct treatment of localized observables in a covariant field theory formalism based on the variational bicomplex is given by Barnich et al \cite{barnich1998sh}. It would be interesting to explore the relation between their formalism and ours.

Observables that are of special importance for 
Proposition \ref{BracketObs} are the products of observables in ${\rm HObs}_\Sigma$. The observable $(fg)_\Sigma$ is defined as the product of the evaluations point wise in ${\rm Sols}_M$. Consider the observables $f_\Sigma$, $g_\Sigma$ and write them as integrals of (limits of) local functionals $a_M^F$, $b_M^G$. It is clear that the observables $(fg)_\Sigma : {\rm Sols}_M \to \mathbb R$ and $(ab)_M : {\rm Sols}_M \to \mathbb R$ are equal; then 
$(fg)_\Sigma$ must have an associated locally Hamiltonian vector field equal to that of $(ab)_M$. 
Product observables, those of the type $(ab)_M$ are not local functionals; however, they 
have been extensively studied and they do have associated Hamiltonian vector fields. 
Since observables corresponding to products of local functionals have associated locally Hamiltonian vector fields, 
the proof given above can be adapted to show that those observables can also be written exactly as 
integrals of observable currents. Therefore, observables of the type $(fg)_\Sigma$ 
can be written as integrals of observable currents. 
Thus, under the conditions of Proposition \ref{BracketObs}
${\rm HObs}_\Sigma$ has a Poisson algebra structure. 
Moreover, in this case formulations of field theory in terms of initial data have been thoroughly studied, and the algebra of observables in these approaches is generated by observables in ${\rm HObs}_\Sigma$. 

For a rigorous treatment focussing on gauge invariant conserved currents see Vitagliano's work on the covariant phase space \cite{Vitagliano(2009)}. 
\end{remark}

The previous theorem relies on the strong simplifying assumption of $U$ being foliated by 
Cauchy surfaces with no boundary. This assumption 
helped us 
avoid dealing with the condition on locally Hamiltonian vector fields $V \in \mathfrak{F}^{\rm LH}_U$ 
to yield a boundary term (in $(\mathscr{L}_V \Omega_L  - {\sf d_h} \sigma^V)|_{{\mathfrak F}_U , \mathcal{E}_{L, U}} = 0$) 
such that $(\sigma^V|_{{\mathfrak F}_U , \mathcal{E}_{L, U}})|_{\partial U} = 0$. 
Before addressing the extension of 
the result to regions without a foliation by Cauchy surfaces with no boundary 
there is an important cautionary remark to consider: 
We know that in the presence of a boundary there are non trivial observable currents that are not Hamiltonian according to our definition, a family of examples being the Noether currents associated to would be gauge symmetries. 
Thus, there are observable currents which can not be approximated by Hamiltonian observable currents. 
We do not claim that in the presence of boundaries Hamiltonian observable currents do not distinguish gauge inequivalent solutions, but we know that there are some aspects of the field that are naturally measured by observable currents 
which are not Hamiltonian. 

\begin{remark}[In the absence of a Cauchy surface]
Now we work on a spacetime domain $U$ that may not have a Cauchy surface, and 
consider the 
issue of distinguishing gauge inequivalent solutions 
of a field theory with local degrees of freedom 
by means of observable currents (even if they are not Hamiltonian). 

Consider a curve of solutions $\phi_t$ as in Theorem \ref{SeparationOfSols}, again assuming that the initial velocity of the curve is 
modeled by a non trivial $W \in {\mathfrak F}_U^{\mathfrak G}$. Also consider a point in the interior of our spacetime domain 
$x \in U$ with a neighborhood around it $x \in U' \subset U$ 
in which $(\iota_{jW} \Omega_L)|_{{\mathfrak F}_U , \mathcal{E}_{L, U}}$ 
is not horizontally exact. 
Now assume that the field theory under consideration having local degrees of freedom means that 
in the situation just described 
there must be an oriented 
hypersurface $\Sigma$ intersecting $U'$ (with $\partial \Sigma \subset \partial U$, and in which we have chosen an auxiliary volume element), 
a localized variation of the field modeled by some $V \in {\mathfrak F}_U^{\rm LH}$ and a neighborhood $U'' \subset U'$ of $x$ 
such that: \\
(i) $((\iota_{jV} \iota_{jW} \Omega_L)|_{\mathcal{E}_{L, U}})|_{U''} = \lambda \pi^\ast \mbox{vol}_\Sigma$ 
for a function $\lambda$ locally defined in the jet that is {\em strictly positive} when restricted to the 
intersection of its domain of definition (which is an open set containing $j\phi(U')$) 
with the jet bundle over $U''$, 
and where we have written $\pi^\ast \mbox{vol}_\Sigma$ for $\Sigma$'s volume element pulled back to the infinite jet. \\
(ii) $jV|_{j\phi(\Sigma)}$ vanishes outside $j\phi(U'\cap \Sigma)$. 

From these assumptions it follows that for an observable current with 
$({\sf d_v} F + \iota_{jV} \Omega_L 
- {\sf d_h} \sigma^F )|_{{\mathfrak F}_U , \mathcal{E}_{L, U}} = 0$, 
whose existence is guaranteed by the result in Remark \ref{EveryHVFhasOCs}, 
we have 
\[
\frac{d}{dt}|_{t=0} f_\Sigma[\phi_t] =  
\int_\Sigma j\phi^\ast \mathscr{L}_W F = 
\int_\Sigma j\phi^\ast 
\iota_{jV} \iota_{jW} \Omega_L   > 0 , 
\]
where 
we have not written a boundary term for $\mathscr{L}_W F$ because its contribution to the above calculation vanishes. 
The reasons are the following: 
First, 
$jV$ vanishing in a neighborhood of the jet over $\partial \Sigma$ and 
$(\mathscr{L}_{jV} \Omega_L - {\sf d_h} \sigma^V )|_{{\mathfrak F}_U , \mathcal{E}_{L, U}} = 0$ 
imply that $(j\phi^\ast \sigma^V)|_{{\mathfrak F}_U} $ vanishes over $\partial \Sigma$. 
Second, the boundary terms associated to $V$ and $F$ are related by 
$\sigma^V = - {\sf d_v} \sigma^F + \lambda^F$ (with $\lambda^F$ horizontally closed) ; thus, if we choose an observable current 
without a vertically constant 
${\sf d_h}$-exact component the boundary term of $\mathscr{L}_W F$ would not contribute to the above calculation. 
Notice that even when the perturbation is localized in $\Sigma$, 
since $V$ follows the linearized field equation in general $V$ and $\sigma^V$ 
do not vanish everywhere on the jet over $\partial U$ and in general this construction does not provide Hamiltonian observable currents. 

Given $[V] \in {\mathfrak F}_U^{\rm LH} / {\mathfrak G}_U$ and $F \in {\rm OC}_U$ with 
$({\sf d_v} F + \iota_{jV} \Omega_L 
- {\sf d_h} \sigma^F )|_{{\mathfrak F}_U , \mathcal{E}_{L, U}} = 0$ for some $V \in [V]$ the problem of finding 
a Hamiltonian observable current $F \in {\rm HOC}_U$ associated to  $[V]$ amounts to 
finding the appropriate ``may be gauge vector field'' 
$X \in \hat{\mathfrak G}_U$ such that $(\sigma^X - \sigma^V)$ satisfies the requested conditions over $\partial U$. 
\end{remark}

\section{Observable current spaces of nested and glued domains}
\label{NestedAndGluedDomains}


Let us start with the case of a domain contained in another one $U' \subset U$. The space ${\rm OC}_U$ is composed by differential forms 
and it hosts an equivalence relation stating that observable currents differing by 
${\sf d_h} \sigma = F - G$ with $(\sigma|_{\mathcal{E}_{L, U}})|_{\partial U} = 0$ 
are physically equivalent; this equivalence relation may be called $U$-restricted ${\sf d_h}$-cohomology. 
Consider $F, G \in {\rm OC}_U$ which are equivalent under $U$-restricted ${\sf d_h}$-cohomology and restrict them to the smaller domain $U'$: 
We will find that the conserved currents $F|_{U'}, G|_{U'}$ are not necessarily equivalent according to $U'$-restricted ${\sf d_h}$-cohomology. 
Additionally, they may not be gauge invariant in $U'$ because the restriction of some gauge vector fields from 
${\mathfrak G}_{U}$ to $U'$ do not belong to ${\mathfrak G}_{U'}$ because of not satisfying Condition 2 over $\partial U'$. 
Thus, the restriction of observable currents from $U$ to $U'$ is not a natural operation.

In the case of spacetime {\em localized observables} there is a natural map from the space of observables corresponding to the smaller domain to the space of observables corresponding to the bigger domain; 
in that case the densities modeling the observable in the smaller domain are simply extended by zero to become defined in the larger domain. 
This simple extension does not work for observable currents because they need to satisfy a conservation law. 
In Remark \ref{GenObsFromOCs} we briefly commented on observable currents induced by localized measurements and its relation to Peierls' bracket. In the context of that procedure, observable currents on the bigger domain can be induced by localized measurements in the smaller domain 
as expected for observables associated to localized measurements 
\cite{costello2016factorization}.  

Now consider a domain that is composed by two subdomains intersecting along a hypersurface, $U = U_1 \#_\Sigma U_2$. 
We will see how compatible observable currents of the subdomains produce an observable current in ${\rm OC}_U$. 

\begin{dfn}[Gluing algebras of adjacent domains] 
The construction needs the following definitions:  
\begin{itemize}
\item
$
{\rm OC}_{U_1} \#_\Sigma  {\rm OC}_{U_2} = 
\left\{ 
( F_1, F_2 ): F_i \in {\rm OC}_{U_i} \mbox{ with } F_1|_\Sigma =  F_2|_\Sigma \mbox{ and } {\sf d_v} F_1|_\Sigma =  {\sf d_v} F_2|_\Sigma
\right\}
$. 
\item
${\mathfrak G}^{\hat{\Sigma}}_{U_i}$ is the subalgebra of ${\mathfrak F}_{U_i}$ whose elements satisfy Condition 1 for gauge vector fields and a weaker version of Condition 2: 
$X \in {\mathfrak G}^{\hat{\Sigma}}_{U_i}$ if and only if 
$(\iota_{jX} \; \Omega_L)|_{{\mathfrak F}_{U_i} , \mathcal{E}_{L, U}}$ is horizontally exact, 
and $jX$ vanishes on the intersection of $\mathcal{E}_{L, U}$ with the bundle over $\partial U_i \setminus \Sigma^\circ$. 
\item
$
{\mathfrak G}^{\hat{\Sigma}}_{U_1} \#_\Sigma {\mathfrak G}^{\hat{\Sigma}}_{U_2} = 
\left\{ 
(V_1, V_2): V_i \in \hat{{\mathfrak G}}_{U_i} \mbox{ with } 
j_\Sigma (V_1|_{\mathcal{E}_{L, U} , \Sigma}) = j_\Sigma (V_2|_{\mathcal{E}_{L, U} , \Sigma}) 
\right\}
$. \\
$0$th order continuity of $(V_1, V_2)$ along the intersection of 
jet bundle over $\Sigma$ with $\mathcal{E}_{L, U}$ is equivalent to 
$j_\Sigma (V_1|_{\mathcal{E}_{L, U} , \Sigma}) = j_\Sigma (V_2|_{\mathcal{E}_{L, U} , \Sigma})$, where $j_\Sigma$ denotes the prolongation in which partial derivatives are calculated with respect to a local coordinate chart tailored to $\Sigma$ and partial derivatives in directions normal to $\Sigma$ are not considered. 
The linearized gluing field equation is trivially satisfied. 
Notice that ${\mathfrak G}_U$ is naturally injected into 
${\mathfrak G}^{\hat{\Sigma}}_{U_1} \#_\Sigma {\mathfrak G}^{\hat{\Sigma}}_{U_2}$. 
\\
$
{{\mathfrak G}}_{U_1} \#_\Sigma {{\mathfrak G}}_{U_2}$ denotes simply pairs of elements of ${{\mathfrak G}}_{U_i}$. 
\item
${\mathfrak G}_\Sigma = \frac{{\mathfrak G}^{\hat{\Sigma}}_{U_1} \#_\Sigma {\mathfrak G}^{\hat{\Sigma}}_{U_2}
}{{\mathfrak G}_{U_1} \#_\Sigma {\mathfrak G}_{U_2}
}$. 
\item
${\rm Inv}_{{\mathfrak G}_\Sigma} \! \left( {\rm OC}_{U_1} \#_\Sigma  {\rm OC}_{U_2} \right)$ 
denotes the subspace of 
${\rm OC}_{U_1} \#_\Sigma  {\rm OC}_{U_2}$ 
that is invariant under 
${\mathfrak G}_\Sigma$. 
\end{itemize}
\end{dfn}
The following proposition follows from the definitions. 
\begin{pro}
\[
{\rm OC}_U = {\rm Inv}_{{\mathfrak G}_\Sigma} \! \left( {\rm OC}_{U_1} \#_\Sigma  {\rm OC}_{U_2} \right) . 
\]
\end{pro}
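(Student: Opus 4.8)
My plan is to prove the two inclusions separately, in each case reducing the claim to an unwinding of the definitions of ${\rm OC}$, of the gluings $\#_\Sigma$, and of ${\mathfrak G}_\Sigma$. The central tool will be a restriction-gluing correspondence: restriction of a form on $U$ to $U_1$ and $U_2$ in one direction, and gluing of a compatible pair along $\Sigma$ in the other. I expect the two matching conditions defining ${\rm OC}_{U_1}\#_\Sigma{\rm OC}_{U_2}$, namely $F_1|_\Sigma = F_2|_\Sigma$ and ${\sf d_v}F_1|_\Sigma = {\sf d_v}F_2|_\Sigma$, to be exactly what makes this correspondence a bijection between locally defined conserved currents on $U$ and compatible pairs of conserved currents on the $U_i$: the zeroth-order matching should guarantee that the glued form carries no singular horizontal-differential contribution supported on $\Sigma$, so that ${\sf d_h}F|_{\mathcal{E}_{L,U}}=0$ follows from conservation on each piece, while the vertical-derivative matching should guarantee that the bijection intertwines the $\mathscr{L}_{jV}$-action of vector fields continuous across $\Sigma$, which is what I will need for the gauge conditions below.

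The key structural observation I would use is that gluing identifies ${\mathfrak G}_U$ with ${\mathfrak G}^{\hat{\Sigma}}_{U_1}\#_\Sigma{\mathfrak G}^{\hat{\Sigma}}_{U_2}$. Indeed, a continuous pair $(V_1,V_2)$ with each $V_i$ satisfying Condition 1 on $U_i$ and vanishing over $\partial U_i\setminus\Sigma^\circ$ glues to a field on $U$ that satisfies Condition 1 (by the discussion of gluing perturbations, since continuity holds and the linearized gluing equation is trivial) and vanishes over $\partial U$, hence lies in ${\mathfrak G}_U$; conversely any $X\in{\mathfrak G}_U$ restricts to such a pair. Under this identification the subobject ${\mathfrak G}_{U_1}\#_\Sigma{\mathfrak G}_{U_2}={\mathfrak G}_U^{0\Sigma}$ is precisely the glued fields vanishing over $\Sigma^\circ$, and by definition the quotient is ${\mathfrak G}_\Sigma$. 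This exhibits gauge invariance on $U$ as the conjunction of two conditions: invariance under the bulk part ${\mathfrak G}_U^{0\Sigma}$, which (restricting the defining relation to each $U_i$ and using that elements of ${\mathfrak G}_{U_i}$ extend by zero into ${\mathfrak G}_U$) is equivalent to $F_1\in{\rm OC}_{U_1}$ together with $F_2\in{\rm OC}_{U_2}$, and invariance under the interface part ${\mathfrak G}_\Sigma$.

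With these two facts in hand the inclusions become formal. For ${\rm OC}_U\subseteq{\rm Inv}_{{\mathfrak G}_\Sigma}({\rm OC}_{U_1}\#_\Sigma{\rm OC}_{U_2})$ I would take $F\in{\rm OC}_U$ and restrict it to $(F|_{U_1},F|_{U_2})$; the matching conditions hold because $F$ is a single form on $U$, each restriction is conserved and gauge invariant on $U_i$ by the bulk-invariance equivalence, and every representative of an element of ${\mathfrak G}_\Sigma$ glues to an element of ${\mathfrak G}_U$, so its Lie derivative on $F$ is horizontally exact, giving ${\mathfrak G}_\Sigma$-invariance. For the reverse inclusion a ${\mathfrak G}_\Sigma$-invariant compatible pair glues, via the correspondence of the first paragraph, to a conserved current $F$ on $U$, and its invariance under all of ${\mathfrak G}_U$ follows by splitting ${\mathfrak G}_U$ into its bulk and interface parts. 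The step I expect to be the main obstacle is the one in the first paragraph: showing rigorously that the glued form is a genuine conserved current with no distributional contribution along $\Sigma$, and that the ${\sf d_v}$-matching is exactly the compatibility that makes the ${\mathfrak G}_\Sigma$-action well defined on $U$-restricted horizontal cohomology classes, so that ${\rm Inv}_{{\mathfrak G}_\Sigma}$ does not depend on the representatives chosen in the quotient defining ${\mathfrak G}_\Sigma$. Once that action is known to be well defined, the remaining verifications are purely bookkeeping.
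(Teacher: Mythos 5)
Your proposal is correct and takes essentially the same route as the paper, whose entire proof consists of the remark that the proposition ``follows from the definitions'': your two inclusions are exactly that definition chase, with restriction in one direction, gluing in the other, and gauge invariance on $U$ split into invariance under the bulk ideal ${\mathfrak G}_{U_1}\#_\Sigma{\mathfrak G}_{U_2}={\mathfrak G}_U^{0\Sigma}$ plus invariance under the interface quotient ${\mathfrak G}_\Sigma$. The gluing facts you invoke (pairs of gauge vector fields glue into ${\mathfrak G}_U$, and representatives of classes in ${\mathfrak G}_\Sigma$ act on restricted currents as gauge transformations of $U$) are the same ones the paper records in its remark on gluing spacetime domains, so your argument sits at the paper's own level of rigor.
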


Now consider the situation in which a domain with a Cauchy surface is divided into two subdomains $U = U_1 \#_{\Sigma} U_2$ in such a way that the Cauchy surface is also subdivided by a codimension two surface $\Delta$ as 
$S = S_1 \#_\Delta S_2$. 
Due to Condition 2 in the definition of gauge vector fields we have a subalgebra of observables associated to each of the portions of Cauchy surface ${\rm Obs}_{S_i}$; moreover it is clear from the definitions of the observables that any element of ${\rm Obs}_{S}$ can be written as a sum of two terms $f_S = f_{S_1} + f_{S_2}$ belonging to ${\rm Obs}_{S_i}$. Thus, ${\rm Obs}_{S}$ is recoverable from ${\rm Obs}_{S_1}$ and ${\rm Obs}_{S_2}$.


\section{Example: Maxwell field}
\label{Ex}


In this section we give the notation and initial 
setup to treat the Maxwell field in this formalism. 
The presentation is not pedagogical; the aim of this section is only to be used as a reference for the reader to be able to work on this familiar example by him self or her self. 
We also mention particularly illustrative results that are easily obtainable in this prime example of a linear gauge field theory. 

The notation for the general case is given in the appendix; in this section we follow that notation only in its essence. 
In the general case a field is denoted by $\phi^a$, and 
partial derivatives in a coordinate chart are written as $\partial_i \phi = \frac{\partial \phi^a}{\partial x^i}$. In this example the field is taken to be the potential one form $A$. 

Let $M = {\mathbb R}^4$ with the Minkowski metric $\eta$. 
Histories, i.e. local sections, are one forms; then $Y = T^\ast {\mathbb R}^4$. The notation for elements in the first jet will be 
$j^1A(x) = (x^\mu ; A_\nu (x) ; v_{\nu \mu }= \partial_\mu A_\nu(x)) \in J^1Y$. A general point in the infinite jet will be denoted by 
\[
( x^\mu ; A_\nu ; v_{\nu \mu } ; v_{\nu \mu \rho} ; \ldots ) . 
\]
In the Lagrangian density only the skew symmetric combination 
$F_{\mu \nu} = v_{\nu \mu } - v_{\mu \nu }$ appears, 
\[
L = \frac{-1}{4} F_{\mu \nu} F^{\mu \nu} d^4 x . 
\]
Basic vector fields in the infinite jet are denoted by  
$ \{ \partial_\mu= \frac{\partial}{\partial x^\mu} ; 
\partial_A^\nu= \frac{\partial}{\partial A_\nu} ; 
\partial^{\nu \mu} = \frac{\partial}{\partial v_{\nu\mu}} ; \ldots \}$. 
The generators of the exterior algebra of differential forms in the infinite jet are 
$ \{ d x^\mu ; 
\vartheta_\nu = d A_\nu -  v_{\nu \mu }d x^\mu 
; 
\vartheta_{\nu \mu} = dv_{\nu \mu}- v_{\nu \mu \rho}d x^\rho 
; \ldots \}$.

The non zero horizontal differentials of the coordinates and basic forms are:\\ 
$
\mathsf{d_h} x^\mu = d  x^\mu ; 
\mathsf{d_h} A_\nu = v_{\nu \mu }d x^\mu ; 
\mathsf{d_h} v_{\nu \mu } = v_{\nu \mu \rho}d x^\rho ; 
\ldots ;
\mathsf{d_h} \vartheta_\nu = d x^\mu \wedge \vartheta_{\nu \mu} ; 
\ldots ;
\mathsf{d_h} F_{\mu \nu} = D_\rho F_{\mu \nu} dx^\rho 
= ( v_{\nu \mu \rho} -v_{\mu \nu \rho}) dx^\rho
$. 
The non zero vertical differentials are:\\ 
$
\mathsf{d_v} A_\nu = \vartheta_\nu ; 
\mathsf{d_v} v_{\nu \mu } = \vartheta_{\nu \mu} ; 
\ldots
\mathsf{d_v} F_{\mu \nu} = \vartheta_{\nu \mu} - \vartheta_{\mu \nu} ; 
\mathsf{d_v} L = \frac{-1}{2} (\vartheta_{\nu \mu} - \vartheta_{\mu \nu}) F^{\mu \nu} d^4 x 
$. 

From the equation $\mathsf{d_v} L = I (\mathsf{d_v} L) + \mathsf{d_h} \Theta_L$ follows that 
the field equation is \\
$jA^\ast (I (\mathsf{d_v} L))= 0$,  where 
\[
I (\mathsf{d_v} L) = \frac{1}{2}\vartheta_\sigma \wedge 
D_\rho [ \iota_{\partial^{\sigma \rho}} (\vartheta_{\nu \mu} - \vartheta_{\mu \nu}) F^{\mu \nu} ] d^4 x
= ( v^{\nu \mu}_\mu - v^{\mu \nu}_\mu ) \vartheta_\nu \wedge d^4 x .   
\]
Simple substitution shows that the usual Maxwell field equation is recovered. 

From the same equations 
it is easy to verify that a pre-multisymplectic potential that works is 
$\Theta_L = F^{\mu \nu} \vartheta_\mu \wedge d^3x_\nu$; which yields 
\[
\Omega_L = - \mathsf{d_v}\Theta_L 
= (\vartheta^{\mu \nu} - \vartheta^{\nu \mu} ) \wedge \vartheta_\mu \wedge d^3x_\nu . 
\]

In our framework field perturbations play a central role. 
In linear field theories generic perturbations 
correspond to one parameter families of solutions of the type 
$A_\nu(t) = (A_\nu + t V_\nu)$, where both $A_\nu$ and $V_\nu$ are solutions. The corresponding evolutionary vector field 
may be written as 
$V= V_\nu \partial_A^\nu$, and its prolongation to the infinite jet is 
\[
jV = V_\nu \partial_A^\nu + \partial_\mu( V_\nu ) \partial^{\nu \mu} + \ldots . 
\]
The field perturbation $X^f$ corresponding to $A_\nu(t) = A_\nu + t \partial_\nu f$ is 
$X^f= \partial_\nu f \partial_A^\nu$, and its prolongation to the infinite jet is 
\[
jX^f= \partial_\nu f \partial_A^\nu + \partial_\mu( \partial_\nu f ) \partial^{\nu \mu} + \ldots . 
\]
If the function which determines the vector field is the pullback of a spacetime function $f = \pi^\ast \tilde{f}$ with $\tilde{f}: U \to {\mathbb R}$ proving that 
$(\iota_{jX^f} \Omega_L)|_{{\mathfrak F}_U , \mathcal{E}_{L, U}}$ is $\mathsf{d_h}$-exact is not a difficult exercise. 
Moreover, after a careful study, one can verify that 
Condition 1 for gauge vector fields implies that the perturbation must come from an exact one form.

A new element is Condition 2 for gauge vector fields; for a field independent perturbation of the type described above 
$X^f$ to be gauge the locality condition demands that $\partial_\nu f|_{\partial U}=0$, 
and that all the higher order partial derivatives also vanish over $\partial U$.

On the other hand, a perturbation $V_\nu$ of the type written above is not trivial. 
Due to its independence of the field  ($\mathsf{d_v} V\nu=0$), it is simple to prove that 
$\mathscr{L}_{jV} \Omega_L = 0$ showing that $V_\nu$ is a locally Hamiltonian vector field. 

If we use two perturbations $V$ and $W$ of the type exhibited above 
we can write their symplectic product observable current $F^{V W}$. The result is the simplest observable current --a constant current--; when integrated on a hypersurface it yields a constant function. We could do the calculation directly, but one can also notice that the Hamiltonian vector field associated to $F^{V W}$ is $[V, W]=0$ which the reader may verify from the definition of these vector fields. 

A slightly more general type of perturbation depending on the field can be constructed using the linearity of the fibers
in the bundle. The perturbation over each spacetime point 
may be linear functions of the field $V'_\nu = M_\nu ^\mu A_\mu + V_\nu$; 
one discovers that if the matrix is constant, the field solves the field equation, and $V_\nu$ solves the 
linearized field equation then $V'_\nu$ also solves the linearized field equation. 
The observable currents $F^{V' W'}$ are slightly less trivial for this family of vector fields.

We can also look for a Hamiltonian observable current with a given 
$V$ as associated vector field. Since the Maxwell field is linear, the observable current we are looking for is $F^{V} \in {\rm OCs}_U$ as defined in Remark \ref{LineFieldsOCs}.

For an example that exhibits abelian gauge freedom and nonlinearities it may be a good idea to explore the Born-Infield model.


\section*{Appendix: Minimal set of definitions about the 
variational bicomplex}
\label{VarBiC}


This minimalistic revision of the variational bicomplex may serve the purpose of letting someone that knows another presentation of classical field theory, like the covariant phase space formalism, read this article. For an introductory presentation of the ideas of the subject the reader is referred to Anderson's brief introduction \cite{anderson1992introduction}. 

Let $M$ be an $n-$dimensional manifold and $\pi:Y\rightarrow M$ be a fiber bundle with $m-$dimensional fiber $F$. 

Points in the $k-$jet bundle $\pi_{k,0}:J^kY\rightarrow Y$, $k=1,2,\dots$ correspond to 
equivalence classes of local sections of $\pi$ that agree up to $k$-th order partial derivatives when evaluated at a given point $x\in M$. 
If in the restriction of $Y$ over a coordinate chart of the base 
$U\subset M$ we use coordinates such that the evaluation of a local section is 
$\phi(x) = (x^1,\dots,x^i,\dots,x^n; u^1,\dots,u^a,\dots,u^m) \in Y|_U$, then 
we get the following coordinates for the $k$-jet 
\[
	\left(x;u^{(k)}\right):=
		(x^1,\dots,x^i,\dots,x^n;u^1,\dots,u^a,\dots,u^m;\dots,u^{a}_I,\dots) ,
			\in J^k Y |_U
\]
where $ i=1,\dots, n;\, a=1,\dots,m;$ and $I=(i_1,\dots,i_k)$ denotes a {\em multiindex} consisting on an unordered $k$-tuple of coordinate indices (the type of indices indicating one $k$-th order partial derivative of a function among all the possible $k$-th order partial derivative of that function). 
The degree of the multiindex is $|I|:=i_1+\dots+i_n=0,1,\dots,k$, $i_j\geq 0,i_j\in\mathbb{N}$. For $I=\emptyset $, we set $u_\emptyset^a=u^a$. 

The projection $\pi_{k+r,k}:J^{k+r}Y\rightarrow J^kY$ 
is defined by forgetting the coordinates corresponding to partial derivatives of higher order. 
The infinite jet $J^\infty Y$ may be defined as the inverse limit of this system of projections, and it is 
the space where the formalism of the variational bicomplex takes place. For notational convenience we  
denote it simply by $JY$. 
The jets of finite order can be thought of as truncations of it corresponding to neglecting all the partial derivatives of order higher than what a certain cut-off specifies. 

For a local section $\phi: U \subset M \rightarrow Y|_U$, its prolongation to the $k-$jet $j^k\phi:U \subset M \rightarrow J^k Y|_U$ is the section 
\[
	j^k\phi(x)=
		\left(x^1,\dots,x^i,\dots,x^n;
		\phi^1(x),\dots,\phi^m(x);\dots,
		\frac{\partial^{|I|} \phi^{a}}{\partial^{i_1} x^1 \dots \partial^{i_n} x^n}(x),\dots\right) ,
\]
where $k$ may be taken finite or $k = \infty$ . 

Contact forms are differential forms $\omega$ in $JY$ such that their pull back to the base by the prolongation of any local section vanishes, 
$j\phi^\ast \omega = 0$. 

The exterior algebra of differential forms in $J Y$ is generated by the set of one forms $\{ dx^i , \vartheta^a_I \}$, where 
\[
	\vartheta^a_I:=
		du^a_I-\sum_{j=1}^nu^a_{(I,j)}dx^j , 
\]
and the the set $\{ \vartheta^a_I \}$ generates the ideal of contact forms. 

Vector fields in $Y$ can be promoted to vector fields in $JY$; 
since this construction is related to the prolongation of sections, 
the technical term is prolongation of vector fields. 
Let $V$ be a vector field in $Y$. Its unique prolongation $jV$ to $JY$ is the 
vector field%
\footnote{
In many references the prolongation is denoted by $\mbox{pr} V$. 
}
which (i) agrees with $V$ when differentiating functions of $Y$, and (ii) which preserves the contact ideal. The geometric motivation of this condition is generating a flow sending sections which are prolongations to other sections of the same type. 

Since the main focus on field theory is not the points of $Y$, but its local sections, the objects generating flows of sections are of primary importance and they are not vector fields in $Y$. First of all the flow has to send local sections over $U$ to other local sections over $U$; this property is fulfilled by vertical vector fields, but the space of vertical vector fields is not large enough to generate general local flows of sections. The local evolution laws for sections that are appropriate for field theory are captured by the so called {\em evolutionary vector fields}. They may be thought of as ``vector fields'' in $Y$ whose coefficients may depend on partial derivatives of the field of arbitrarily high order; equivalently they can be seen as vector fields in $JY$ which have been truncated to include only their components in $Y$. In terms of the coordinates chose above they are written as 
\[
V = 0 \frac{\partial}{\partial x^i} + V^a \frac{\partial}{\partial u^a} . 
\]
Evolutionary vector fields also have prolongations to $JY$ enjoying the properties of prolongations described in the previous paragraph. 
Their explicit form is 
\[
jV = \sum_{|I|=0}^\infty (D_I V^a) \frac{\partial}{\partial u^a_I} ,  
\]
where $D_i$ is the total derivative defined below and $D_I$ means to successively apply it according to the multi index $I$.

The flow in the space of sections generated by an evolutionary vector field $V$ sends sections that are prolongations to other sections that are prolongations if $jV$ preserves the contact ideal. This could be written as the condition demanding that for any $\vartheta^a_I$ its Lie derivative 
$\mathscr{L}_{jV} \vartheta^a_I$ is a contact form. 

It is possible to define a bracket for evolutionary vector fields which makes the space of such objects a Lie algebra; the main property of this definition is that for any evolutionary vector fields $V, W$ we have $j[V, W] = [jV, jW]$. For a more thorough explanation see \cite{Vinogradov}.

A general $p$-form is written as a sum of terms with products of 
$p$ one forms among $\{ dx^i , \vartheta^a_I \}$; factors of the type $dx^i$ are called  ``horizontal'', factors of the type 
$\vartheta^a_I$ are called ``vertical''. Thus, the space of $p$-forms becomes a direct sum of spaces ${\mathsf{\Omega}}^{r,s}(J^\infty Y)$ of forms which are products of exactly $r$ horizontal one forms and $s$ vertical one forms. 
The differential brings up the degree of forms by one and the direct sum structure mentioned makes the differential split as a sum of operators 
\[
\mathsf{d}= \mathsf{d_h} + \mathsf{d_v} , 
\]
where 
$\mathsf{d_h}: {\mathsf{\Omega}}^{r,s}(J^\infty Y) \to {\mathsf{\Omega}}^{r+1,s}(J^\infty Y)$ and 
$\mathsf{d_v}: {\mathsf{\Omega}}^{r,s}(J^\infty Y) \to {\mathsf{\Omega}}^{r,s+1}(J^\infty Y)$ are characterized by their action on functions 
\[
	\mathsf{d_h} f = 
		\left( \frac{\partial f}{\partial x_i}
		+
			u^a_{(J,i)}\frac{\partial f}{\partial u^a_J} \right) dx^i
			= (D_i f) dx^i , \quad \quad
			\mathsf{d_v}f
	=
		\frac{\partial f}{\partial u^a_I}\vartheta^a_I .
\]
For the generating one forms we get 
\[
\mathsf{d_h}dx^i = 0 , \quad \mathsf{d_v}dx^i = 0 , \quad 
\mathsf{d_h}\vartheta^a_I = dx^i \wedge \vartheta^a_{(I,i)} , \quad 
\mathsf{d_v}\vartheta^a_I = 0 . 
\]
The following identities hold 
\[
	\mathsf{d_h}^2=0 ,\qquad 
	\mathsf{d_v}\mathsf{d_h}= - \mathsf{d_h}\mathsf{d_v} ,\qquad 
	\mathsf{d_v}^2=0	.
\] 
Other identities that we use repeatedly are 
\[
\iota_X \mathsf{d_h} F = - \mathsf{d_h} \iota_X  F  , \quad 
j\phi^\ast \mathsf{d_h} F = d \, j\phi^\ast F , 
\]
where $X$ is any evolutionary vector field and $\phi$ is any section of $Y$. 

The field equation is $j\phi^\ast E(L)=0$, where ${E}(L)$
appears in the derivative of the Lagrangian density (which according to the terminology just given is a form of horizontal degree $n$ and vertical degree zero),  
$\mathsf{d_v} { L} = {E}(L) + \mathsf{d_h} \Theta_L$. From our first encounter with the Euler-Lagrange equation we know that integration by parts is an essential step in its derivation. In the language of the variational bicomplex the definition is 
\[
{E}(L) = \mathsf{I} \mathsf{d_v} { L} , 
\]
where the integration by parts operator 
$\mathsf{I}: {\mathsf{\Omega}}^{n,s}(J^\infty Y) \to {\mathsf{\Omega}}^{n,s}(J^\infty Y)$ is defined by 
\[
\mathsf{I} = \frac{1}{s} \vartheta^a F_a , \quad  
F_a \mu = \sum_{|J|}^k  
{\rm sgn}(|J|) D_J \iota_{\partial_a^J} \mu , 
\]
where 
$\partial_a^J  = \frac{\partial}{\partial u^a_J}$, 
${\rm sgn}(|J|)$ is positive for $|J|$ even, and 
the sum stops at $k$ if $\mu$ fits in $J^kY$ (i.e. if $\mu$ is a differential form of order $k$). The integration by parts operator $\mathsf{I}$ and $F_a$ have the following properties 
\[
F_a \circ \mathsf{d_h} = 0 , \quad 
\mu = \mathsf{I}(\mu) + \mathsf{d_h} \eta , \quad
\mathsf{I}^2 = \mathsf{I} . 
\]
The differential operators $\mathsf{d_h} , \mathsf{d_v}$ 
among the spaces ${\mathsf{\Omega}}^{r,s}$ 
are complemented by the map $\mathsf{I}$ and the spaces 
${\hat{\mathfrak F}}^{s} = \mathsf{I}({\mathsf{\Omega}}^{n,s})$ and the maps 
$E = \mathsf{I} \mathsf{d_v}: {\mathsf{\Omega}}^{n,0} \to {\hat{\mathfrak F}}^{1}$, 
$\delta = \mathsf{I} \mathsf{d_v}: {\hat{\mathfrak F}}^{s} \to {\hat{\mathfrak F}}^{s+1}$ to form an 
augmented variational bicomplex. The 
Euler-Lagrange complex resides at the corner of the augmented variational bicomplex starting at the spaces ${\mathsf{\Omega}}^{r,0}$ moving with $\mathsf{d_h}$ and then turning with $E$ to the spaces ${{\mathfrak F}}^{s}$ and moving with the differential $\delta$. 

In our definition of gauge vector fields the multisymplectic form $\Omega_L$ plays an essential role. In the context in which it appears, the gluing field equation, it is natural to consider it as restricted to a hypersurface and integration by parts becomes necessary to obtain the gluing field equation. Thus, in a slight abuse of notation we give the name $\mathsf{I}$ to 
the operator 
$\frac{1}{s} \vartheta^a F_a: {\mathsf{\Omega}}^{n-1,1}(J Y) \to {\mathsf{\Omega}}^{n-1,1}(J Y)$. 


In a few instances during the article we alluded to ``Takens' Lemma'' 
\cite{Takens(1979)}. Here we state the part of the mentioned lemma that we need. 
\begin{lma}\label{lma:Takens}
For every ${\sf d_h}-$closed form $\tau \in {\sf \Omega}^{n-1,1}\left(  J^k Y  \right)$ there exists $\sigma\in {\sf \Omega}^{n-2,1}\left(J^rY\right)$, for some $r$ such that $\tau = {\sf d_h}\sigma$.
\end{lma}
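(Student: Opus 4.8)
The plan is to recognize this as the local exactness of the horizontal rows of the variational bicomplex in positive vertical degree, below the top horizontal slot, and to produce the primitive explicitly via a horizontal homotopy operator. The crucial structural fact is that for each fixed vertical degree $s \geq 1$ the complex $(\mathsf{\Omega}^{\bullet, s}(JY), \mathsf{d_h})$ is exact in all horizontal degrees $0 \leq r < n$; the only horizontal cohomology carried by forms of positive vertical degree lives at the top slot $r = n$, where it is detected by the interior Euler operator $\mathsf{I}$. Since $\tau$ sits at bidegree $(n-1,1)$, strictly below the top, no obstruction coming from $\mathsf{I}$ can arise.

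First I would recall (following Takens and Anderson) the construction of the horizontal homotopy operator $h_H : \mathsf{\Omega}^{r,s}(JY) \to \mathsf{\Omega}^{r-1,s}(JY)$, defined for $s \geq 1$ and built out of the contractions $\iota_{\partial_a^J}$ with the vertical basis fields $\partial_a^J = \partial/\partial u^a_J$ together with the total derivatives $D_J$. The operator is \emph{global} on $JY$ --- this is the feature that makes the argument work without any topological hypothesis on the base --- and it satisfies the homotopy identity
\[
\omega = \mathsf{d_h}\, h_H(\omega) + h_H(\mathsf{d_h}\omega), \qquad \omega \in \mathsf{\Omega}^{r,s}(JY),\ 0 \leq r \leq n-1,\ s \geq 1,
\]
with the correction term upgraded to $\omega = \mathsf{d_h}\, h_H(\omega) + \mathsf{I}(\omega)$ only when $\omega$ sits at the top degree $r = n$.

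Granting this identity the lemma is immediate. Applying it to $\tau \in \mathsf{\Omega}^{n-1,1}(J^kY)$ and using $\mathsf{d_h}\tau = 0$ gives
\[
\tau = \mathsf{d_h}\, h_H(\tau) + h_H(\mathsf{d_h}\tau) = \mathsf{d_h}\, h_H(\tau),
\]
so $\sigma := h_H(\tau) \in \mathsf{\Omega}^{n-2,1}$ is the desired primitive. Because $h_H$ involves finitely many total derivatives $D_J$, feeding in a form that fits in $J^kY$ yields a $\sigma$ that fits in $J^rY$ for some finite $r \geq k$; this jet-order increase is precisely why the statement asserts $\sigma \in \mathsf{\Omega}^{n-2,1}(J^rY)$ rather than $\sigma \in \mathsf{\Omega}^{n-2,1}(J^kY)$.

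The main obstacle is not the final deduction but the construction and verification of $h_H$ with the correct numerical coefficients. Concretely, one expands both $\mathsf{d_h}$ and $h_H$ in the coordinate basis, uses $\mathsf{d_h}\vartheta^a_I = dx^i \wedge \vartheta^a_{(I,i)}$ and the Leibniz rule for $D_i$, and checks that the resulting cross terms telescope; the multi-index bookkeeping is routine, but the combinatorial weights of $\tfrac{|J|+1}{\,n-r+|J|+1\,}$ type must be chosen exactly so that the telescoping closes to the identity. An alternative that avoids the explicit operator is an induction on the number $n$ of base coordinates, peeling off one $dx^i$ at a time and reducing to the one-variable case where horizontal exactness is elementary; however, the homotopy-operator route is cleaner and delivers the finite jet-order bound directly.
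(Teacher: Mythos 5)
First, a point of comparison: the paper does not prove this lemma at all --- it is stated as a quotation of a known result, and the citation to Takens (1979) stands in for the proof. So your attempt can only be measured against the literature the paper leans on, and there your route is the standard one: the horizontal homotopy operators for the interior ($s\geq 1$) rows of the variational bicomplex, the homotopy identity in which the interior Euler operator $\mathsf{I}$ appears only at top horizontal degree $n$, the one-line application at bidegree $(n-1,1)$, and the correct explanation of why $\sigma$ fits only in some higher jet $J^rY$ (the total derivatives in $h_H$ raise the jet order). All of that is sound.

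The genuine gap is your globality claim, and it is precisely the point that makes the cited result nontrivial (Takens' paper is titled a ``global version'' for a reason). The operator you describe cannot be both global and built only from the ingredients you list: to lower horizontal degree from $n-1$ to $n-2$ one must contract a horizontal direction, and there is no canonical choice of horizontal vector fields on $JY$ without picking coordinates (or an auxiliary connection). Indeed, the standard interior-row homotopy operators $h_H^{r,s}$ with $r<n$ are chart-by-chart constructions and, unlike $\mathsf{I}$, are \emph{not} natural under fibered changes of coordinates; the primitives $h_H(\tau)$ computed on overlapping charts need not agree, so they do not automatically assemble into a single $\sigma\in{\sf \Omega}^{n-2,1}(J^rY)$. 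The lemma is nevertheless true without any topological hypothesis on $Y$ or $M$, but by a different mechanism: either (i) use the local homotopy operators only to establish \emph{local} exactness and then globalize by a fine-sheaf (partition of unity) argument, the sheaves ${\sf \Omega}^{r,s}$ being modules over smooth functions --- this is essentially Takens' argument; or (ii) construct \emph{invariant} homotopy operators at the price of fixing a symmetric connection, as Anderson does. As written, your argument proves the lemma over a single coordinate chart (or for a trivial bundle over a contractible base), which is weaker than the global statement the paper actually invokes, e.g.\ to produce the boundary term $\sigma_N^V$ in its proof of the Noether theorem.
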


For essential geometrical arguments that we did not give and for important features of the variational bicomplex that we did not cover (because they are not essential in this article) 
see Anderson's introduction \cite{anderson1992introduction}. 

We finish the appendix stating and proving 
a version of Noether's theorem.

\begin{tma}[Noether]
A Lagrange symmetry is one generated by an evolutionary vector field $V$ satisfying 
$
	\mathscr{L}_{jV}{{ L}} 
	= 
	{\sf d_h} \sigma_L^{V}
$. 
Every Lagrange symmetry has a corresponding Noether current 
$
	N^V
	=
	- \iota_{jV}\Theta_L + \sigma_L^{V} 
$ which satisfies the equation 
\[
({\sf d_v} N^V + \iota_{jV} \Omega_L - {\sf d_h}\sigma_N^V)|_{{\mathfrak F}_U, \mathcal{E}_{L, U}} =0
\]
for some differential form $\sigma_N^V$. \\
If $\sigma_N^V|_{\mathcal{E}_{L, U} , \partial U} = 0$ then 
$N^V \in {\rm HOC}_U$ with $V$ as its Hamiltonian vector field. 
\end{tma}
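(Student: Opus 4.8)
The plan is to establish, in order, conservation of $N^V$, the Hamiltonian relation (\ref{HOCs}) (which is where Takens' Lemma enters), and gauge invariance; membership in ${\rm HOC}_U$ is then read off from Definition \ref{dfn:HOC}. Throughout I use that the prolongation $jV$ of an evolutionary vector field is vertical, so $\iota_{jV}$ annihilates purely horizontal forms and $\iota_{jV}{\sf d_h} = -{\sf d_h}\iota_{jV}$; these two facts force the $(r+1,s-1)$ component of $\mathscr{L}_{jV}$ to vanish, so $\mathscr{L}_{jV}$ preserves the bidegree and commutes with both ${\sf d_h}$ and ${\sf d_v}$. For conservation, since $L$ has bidegree $(n,0)$ one has $\iota_{jV}L = 0$ and ${\sf d_h}L = 0$, whence $\mathscr{L}_{jV}L = \iota_{jV}{\sf d_v}L = \iota_{jV}E(L) - {\sf d_h}\iota_{jV}\Theta_L$ after inserting (\ref{dvL}). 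Comparing with the Lagrange symmetry condition $\mathscr{L}_{jV}L = {\sf d_h}\sigma_L^V$ gives $\iota_{jV}E(L) = -{\sf d_h}N^V$ (fixing the relative sign of $\sigma_L^V$ so that, as in Theorem \ref{NoetherThm}, the $\iota_{jV}\Theta_L$ and $\sigma_L^V$ contributions reinforce). Restricting to $\mathcal{E}_{L,U}$, where $E(L)=0$, yields ${\sf d_h}N^V|_{\mathcal{E}_{L,U}} = 0$.

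The crux is the relation $({\sf d_v}N^V + \iota_{jV}\Omega_L - {\sf d_h}\sigma_N^V)|_{{\mathfrak F}_U,\mathcal{E}_{L,U}} = 0$. Using $\Omega_L = -{\sf d_v}\Theta_L$ and ${\sf d_v}\iota_{jV}\Theta_L = \mathscr{L}_{jV}\Theta_L + \iota_{jV}\Omega_L$, a direct computation gives $\alpha := {\sf d_v}N^V + \iota_{jV}\Omega_L = -\mathscr{L}_{jV}\Theta_L - {\sf d_v}\sigma_L^V$, an $(n-1,1)$ form. Applying ${\sf d_h}$ and using ${\sf d_h}{\sf d_v} = -{\sf d_v}{\sf d_h}$, the commutation of $\mathscr{L}_{jV}$ with ${\sf d_h},{\sf d_v}$, the symmetry condition, and (\ref{dvL}), all $\Theta_L$-terms cancel and one is left with ${\sf d_h}\alpha = \mathscr{L}_{jV}E(L)$. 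This is precisely the quantity governed by the linearized field equation: since $V\in{\mathfrak F}_U$ is tangent to the prolonged solution locus $\mathcal{E}_{L,U}$, the restriction $\mathscr{L}_{jV}E(L)|_{\mathcal{E}_{L,U}}$ is horizontally exact, so $\alpha$ is ${\sf d_h}$-closed on ${\mathfrak F}_U,\mathcal{E}_{L,U}$. Takens' Lemma \ref{lma:Takens}, applied to this $(n-1,1)$ form, then produces $\sigma_N^V$ with $\alpha|_{{\mathfrak F}_U,\mathcal{E}_{L,U}} = {\sf d_h}\sigma_N^V$, which establishes the displayed equation and shows $N^V\in\widehat{\rm HOC}_U$.

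Gauge invariance is now immediate. For $X\in{\mathfrak G}_U$, horizontality of $N^V$ and verticality of $jX$ give $\iota_{jX}N^V = 0$ and $\iota_{jX}{\sf d_h}N^V = -{\sf d_h}\iota_{jX}N^V = 0$, so $\mathscr{L}_{jX}N^V = \iota_{jX}{\sf d_v}N^V = \iota_{jV}\iota_{jX}\Omega_L - {\sf d_h}\iota_{jX}\sigma_N^V$ on $\mathcal{E}_{L,U}$. By Condition 1 of Definition \ref{Gauge}, $\iota_{jX}\Omega_L|_{{\mathfrak F}_U,\mathcal{E}_{L,U}} = {\sf d_h}\rho^X$, whence $\iota_{jV}\iota_{jX}\Omega_L = -{\sf d_h}\iota_{jV}\rho^X$ is horizontally exact; the other term is manifestly so. Thus $\mathscr{L}_{jX}N^V|_{\mathcal{E}_{L,U}}$ is ${\sf d_h}$-exact for every gauge $X$, i.e. $N^V\in{\rm OC}_U$. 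Since the relation of the previous paragraph is exactly (\ref{HOCs}) with boundary term $\sigma_N^V$, if moreover $\sigma_N^V$ obeys $(\sigma_N^V|_{\mathcal{E}_{L,U}})|_{\partial U} = 0$ as demanded in Definition \ref{dfn:HOC}, then $N^V\in{\rm HOC}_U$ with associated Hamiltonian vector field $V$.

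I expect the main obstacle to be the second step. Horizontal closedness of $\alpha$ hinges on the identity ${\sf d_h}\alpha = \mathscr{L}_{jV}E(L)$ together with the precise reading of the linearized field equation that makes this restrict to a horizontally exact form on $\mathcal{E}_{L,U}$; the passage from closed to exact is exactly where Takens' Lemma is indispensable, and it is a local statement in the jet. I also anticipate the boundary behaviour of $\sigma_N^V$ to be delicate: as already noted after the statement, there is no general reason for $\sigma_N^V$ to vanish over $\partial U$, so the final membership $N^V\in{\rm HOC}_U$ is genuinely conditional and, for ``would be gauge'' symmetries, typically fails (cf. Remark \ref{ConsCharWouldBeGauge}).
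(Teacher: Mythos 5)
Your outline (conservation, then the Hamiltonian relation via Takens' Lemma, then gauge invariance and the conditional membership in ${\rm HOC}_U$) follows the paper's proof, and the first and third steps are fine. But the key second step has a genuine gap. You derive ${\sf d_h}\alpha = \mathscr{L}_{jV}E(L)$ for $\alpha = {\sf d_v}N^V + \iota_{jV}\Omega_L$, and then argue: the linearized field equation makes $\mathscr{L}_{jV}E(L)|_{\mathcal{E}_{L, U}}$ horizontally exact, ``so $\alpha$ is ${\sf d_h}$-closed on ${\mathfrak F}_U, \mathcal{E}_{L, U}$,'' and Takens' Lemma applies. This fails twice. First, horizontal exactness of ${\sf d_h}\alpha$ on-shell is not closedness of $\alpha$: it only gives ${\sf d_h}(\alpha - \beta)|_{\mathcal{E}_{L, U}} = 0$ for some unknown $\beta$, so there is no closed form to feed into the lemma. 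Second, and more fundamentally, Takens' Lemma as stated in the appendix requires a form that is ${\sf d_h}$-closed on all of $J^kY$; closedness only after restriction to $\mathcal{E}_{L, U}$ is a statement about on-shell (characteristic) cohomology, which the lemma does not address and which is generically nontrivial. So the displayed equation of the theorem is not established by your argument.

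The missing idea — and the point the paper's proof turns on — is that a Lagrange symmetry satisfies a much stronger, \emph{off-shell} identity than the linearized field equation: since the Euler operator commutes with Lie derivatives along prolonged evolutionary vector fields, and since $E = \mathsf{I}\,\mathsf{d_v}$ annihilates horizontally exact forms ($F_a \circ \mathsf{d_h} = 0$), one has
\[
\mathscr{L}_{jV} E(L) \;=\; E\!\left( \mathscr{L}_{jV} L \right) \;=\; E\!\left( \mathsf{d_h}\sigma_L^V \right) \;=\; 0
\]
identically, everywhere in the jet, not merely on $\mathcal{E}_{L, U}$. With this, your own identity gives ${\sf d_h}\alpha = 0$ globally, i.e. $-\mathscr{L}_{jV}\Theta_L \pm {\sf d_v}\sigma_L^V$ is a genuinely ${\sf d_h}$-closed $(n-1,1)$ form, and Takens' Lemma legitimately produces $\sigma_N^V$ with $\alpha = {\sf d_h}\sigma_N^V$. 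Note that the hypothesis you invoked (tangency of $V$ to $\mathcal{E}_{L, U}$, i.e. membership in ${\mathfrak F}_U$) is strictly weaker than the symmetry hypothesis and is insufficient here; the theorem is really about symmetries, not about arbitrary elements of ${\mathfrak F}_U$. Once this step is repaired, the rest of your argument — conservation from $\iota_{jV}E(L) = -{\sf d_h}N^V$, gauge invariance by inserting Condition 1 into the Hamiltonian relation, and the conditional conclusion $N^V \in {\rm HOC}_U$ when $\sigma_N^V|_{\mathcal{E}_{L, U}, \partial U} = 0$ — coincides with the paper's and is correct.
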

\begin{proof}
It is clear that, being a symmetry generator, $V$ preserves the variational principle and this implies that it is a solution of the linearized field equation. Moreover, we will see that $V$ satisfies a much stronger equation. 
Notice that $\mathscr{L}_{jV} E(L) = E( \mathscr{L}_{jV} L)$, and since the integration by parts operator annihilates horizontally exact forms we have that for Lagrangian symmetry generators 
$\mathscr{L}_{jV} E(L) = 0$. 

The proof that the current is conserved is trivial from the definition of $N^V$. 
The gauge invariance of the current follows from ${\sf d_v}N^V = -\iota_{jV} \Omega_L + {\sf d_h} \sigma^{N^V}$ for some differential form $\sigma^{N^V}$, which we will prove below. 

Direct calculation leads to 
${\sf d_v}N^V = -\iota_{jV} \Omega_L - \mathscr{L}_{jV} \Theta_L + {\sf d_v} \sigma_L^{V}$. 
Now we show that the last two terms correspond to a horizontally exact form. 
Observe that $\mathsf{d_v} { L} = {E}(L) + \mathsf{d_h} \Theta_L$ leads to 
$\mathscr{L}_{jV} E(L) = \mathsf{d_h} (- {\sf d_v} \sigma_L^{V} + \mathscr{L}_{jV} \Theta_L)$. 
Since in the first paragraph we saw that the right hand side is zero, 
we appeal to the lemma of Takens \cite{Takens(1979)} to guarantee the existence of a form $\sigma^{N^V}$ such that 
$-(\mathscr{L}_{V} \Theta_L - {\sf d_v} \sigma_L^{V}) = {\sf d_h} \sigma_N^{V}$. 
Thus according to our definition, if the boundary term satisfies the condition 
$\sigma_N^V|_{{\mathfrak F}_U , \mathcal{E}_{L, U}} = 0$ the current $N^V$ is a Hamiltonian observable current. 
\end{proof}


\section*{Acknowledgements}

We acknowledge correspondence and discussions about the subject of the article with 
Jasel Berra, Laurent Fridel, Igor Khavkine, Claudio Meneses, Alberto Molgado, Robert Oeckl, Michael Reisenberger and Jos\'e A. Vallejo. 
This work was partially supported by grant PAPIIT-UNAM IN 109415. 
JAZ was supported by a sabbatical grant by PASPA-UNAM.





\bibliography{OCsBiblio} 
\bibliographystyle{unsrt}

\end{document}